\newtheorem{definition}{Definition}
\newtheorem{theorem}{Theorem}
\newtheorem{lemma}{Lemma}
\newtheorem{corollary}{Corollary}
\newcommand{\argmin}{\mathop{\mathrm{arg\,min}}}
\newcommand{\ignore}[1]{}
\def\Beta{{\frak B}}
\renewcommand{\hat}{\widehat}
\newcommand{\ie}{{i.e.},\xspace}
\newcommand{\E}{\mathbb{E}}
\renewcommand{\S}{\mathbb{S}}
\newcommand{\proj}{\mathsf{proj}_K}
\title{Learning Neural Networks with Two Nonlinear Layers in Polynomial Time}
\author[]{Surbhi Goel\footnote{Supported by University of Texas at Austin Graduate School Summer 2017 Fellowship.} }
\author[]{Adam Klivans\footnote{Supported by NSF Algorithmic Foundations Award AF-1717896.}}
\affil[]{Department of Computer Science, University of Texas at Austin}
\affil[]{\texttt{\{surbhi,klivans\}@cs.utexas.edu}}
\date{}
\begin{document}
\maketitle
\begin{abstract}
  We give a polynomial-time algorithm for learning neural networks
  with one layer of sigmoids feeding into any Lipschitz, monotone
  activation function (e.g., sigmoid or ReLU).  We make no assumptions
  on the structure of the network, and the algorithm succeeds with
  respect to {\em any} distribution on the unit ball in $n$ dimensions
  (hidden weight vectors also have unit norm).  This is the first
  assumption-free, provably efficient algorithm for learning neural
  networks with two nonlinear layers.

  Our algorithm-- {\em Alphatron}-- is a simple, iterative update rule
  that combines isotonic regression with kernel methods.  It outputs a
  hypothesis that yields efficient oracle access to interpretable
  features. It also suggests a new approach to Boolean learning
  problems via real-valued conditional-mean functions, sidestepping
  traditional hardness results from computational learning theory.

  Along these lines, we subsume and improve many longstanding results
  for PAC learning Boolean functions to the more general, real-valued setting of {\em
    probabilistic concepts}, a model that (unlike PAC learning)
  requires non-i.i.d. noise-tolerance.

\end{abstract}

\newpage

\section{Introduction}
Giving provably efficient algorithms for learning neural networks is a fundamental challenge in the theory of machine learning.  Most work in computational learning theory has led to negative results showing that-- from a worst-case perspective-- even learning the simplest architectures seems computationally intractable \cite{LivniSS14,sxvw}. For example, there are known hardness results for agnostically learning a single ReLU (learning a ReLU in the non-realizable setting) \cite{goel2016reliably}.

As such, much work has focused on finding algorithms that succeed after making various restrictive assumptions on both the network's architecture and the underlying marginal distribution.  Recent work gives evidence that for gradient-based algorithms these types of assumptions are actually necessary \cite{shamir2016distribution}.  In this paper, we focus on understanding the frontier of efficient neural network learning: what is the most expressive class of neural networks that can be learned, provably, in polynomial-time without taking any additional assumptions?


 \subsection{Our Results}

We give a simple, iterative algorithm that efficiently learns neural networks with one layer of sigmoids feeding into any smooth, monotone activation function (for example, Sigmoid or ReLU).  Both the first hidden layer of sigmoids and the output activation function have corresponding hidden weight vectors. The algorithm succeeds with respect to any distribution on the unit ball in $n$ dimensions.   The network can have an arbitrary feedforward structure, and we assume nothing about these weight vectors other than that they each have $2$-norm at most one in the first layer (the weight vector in the second layer may have polynomially large norm).  These networks, even over the unit ball, have polynomially large VC dimension (if the first layer has $m$ hidden units, the VC dimension will be $\Omega(m)$ \cite{LBW94}).  

This is the first provably efficient, assumption-free result for learning neural networks with more than one nonlinear layer; prior work due to Goel et al. \cite{goel2016reliably} can learn a sum of one hidden layer of sigmoids.  While our result ``only'' handles one additional nonlinear output layer, we stress that 1) the recent (large) literature for learning even one nonlinear layer often requires many assumptions (e.g., Gaussian marginals) and 2) this additional layer allows us to give broad generalizations of many well-known results in computational learning theory.


Our algorithm, which we call {\em Alphatron}, combines the expressive power of kernel methods with an additive update rule inspired by work from isotonic regression.  Alphatron also outputs a hypothesis that gives efficient oracle access to interpretable features.  That is, if the output activation function is $u$, Alphatron constructs a hypothesis of the form $u(f(\textbf{x}))$ where $f$ is an implicit encoding of products of features from the instance space, and $f$ yields an efficient algorithm for random access to the coefficients of these products.  

More specifically, we obtain the following new supervised learning results:

\begin{itemize}

\item Let $c(\textbf{x}_{1},\ldots,\textbf{x}_{n})$ be any feedforward neural network with one hidden layer of sigmoids of size $k$ feeding into any activation function $u$ that is monotone and $L$-Lipschitz. Given independent draws $(\textbf{x},y)$ from $\S^{n-1} \times [0,1]$ with $\E[y|\textbf{x}] = c(\textbf{x})$, we obtain an efficiently computable hypothesis $u(f(\textbf{x}))$ such that $\E[(c(\textbf{x}) - u(f(\textbf{x})))^2] \leq \epsilon$ with running time and sample complexity $\mathsf{poly}(n,k,1/\epsilon,L)$ (the algorithm succeeds with high probability).  Note that the related (but incomparable) problem of distribution-free PAC learning intersections of halfspaces is cryptographically hard \cite{KS09}. 


\item With an appropriate choice of kernel function, we show that Alphatron can learn more general, real-valued versions of well-studied Boolean concept classes in the {\em probabilistic concept model} due to Kearns and Schapire.  We subsume and improve known algorithms for uniform distribution learning of DNF formulas (queries), majorities of halfspaces, majorities of $\mathsf{AC}^0$ circuits, and submodular functions, among others.  We achieve the first non-i.i.d. noise-tolerant algorithms\footnote{Previously these classes were known to be learnable in the presence of classification noise, where each is label is flipped independently with some fixed probability.} for learning these classes\footnote{Non-iid/agnostic noise tolerance was known for majorities of halfspaces only for $\epsilon < 1/k^2$, where $k$ is the number of halfspace \cite{KKMS}.}. Our technical contributions include

\begin{itemize}

\item Extending the KM algorithm for finding large Fourier coefficients \cite{kushilevitz1993learning} to the setting of probabilistic concepts.  For the uniform distribution on the hypercube, we can combine the KM algorithm's sparse approximations with a projection operator to learn smooth, monotone combinations of $L_1$-bounded functions (it is easy to see that DNF formulas fall into this class).  This improves the approach of Gopalan, Kalai, and Klivans \cite{gopalan2008agnostically} for agnostically learning decision trees.

\item Generalizing the ``low-degree'' algorithm due to Linial, Mansour, and Nisan \cite{LMN93} to show that for any circuit class that can be approximated by low-degree Fourier polynomials, we can learn monotone combinations of these circuits ``for free'' in the probabilistic concept model. 

\item Using low-weight (as opposed to just low-degree) polynomial approximators for intersections of halfspaces with a (constant) margin to obtain the first {\em polynomial-time} algorithms for learning smooth, monotone combinations (intersection is a special case). The previous best result was a quasipolynomial-time algorithm for PAC learning the special case of ANDs of halfspaces with a (constant) margin \cite{klivans2008learning}.


\end{itemize}




\end{itemize}

We also give the first provably efficient algorithms for nontrivial schemes in multiple instance learning (MIL).  Fix an MIL scheme where a learner is given a set of instances $\textbf{x}_{1},\ldots,\textbf{x}_{t}$, and the learner is told only some function of their labels, namely $u(c(\textbf{x}_{1}),\ldots,c(\textbf{x}_{t}))$ for some unknown concept $c$ and monotone combining function $u$.  We give the first provably efficient algorithms for correctly labeling future bags even if the instances within each bag are not identically distributed.  Our algorithms hold if the underlying concept $c$ is sigmoidal or a halfspace with a margin.  If the combining function averages label values (a common case), we obtain bounds that are {\em independent} of the bag size.



We learn specifically with respect to square loss, though this will imply polynomial-time learnability for most commonly studied loss functions.  When the label $Y$ is a deterministic Boolean function of $X$, it is easy to see that small square loss will imply small $0/1$ loss.

\subsection{Our Approach}

The high-level approach is to use algorithms for isotonic regression to learn monotone combinations of functions approximated by elements of a suitable RKHS.  Our starting point is the Isotron algorithm, due to Kalai and Sastry \cite{KalaiS09}, and a refinement due to Kakade, Kalai, Kanade and Shamir \cite{KKKS11} called the GLMtron.  These algorithms efficiently learn any generalized linear model (GLM): distributions on instance-label pairs $(\textbf{x},y)$ where the conditional mean of $y$ given $\textbf{x}$ is equal to $u(\textbf{w} \cdot \textbf{x})$ for some (known) smooth, non-decreasing function $u$ and unknown weight vector $\textbf{w}$.  Their algorithms are simple and use an iterative update rule to minimize square-loss, a non-convex optimization problem in this setting.  Both of their papers remark that their algorithms can be kernelized, but no concrete applications are given.

Around the same time, Shalev-Shwartz, Shamir, and Sridharan \cite{SSSS} used kernel methods and general solvers for convex programs to give algorithms for learning a halfspace under a distributional assumption corresponding to a margin in the non-realizable setting (agnostic learning).  Their kernel was composed by Zhang et al.~\cite{Zhang} to obtain results for learning sparse neural networks with certain smooth activations, and Goel et al.~\cite{goel2016reliably} used a similar approach in conjunction with general tools from approximation theory to obtain learning results for a large class of nonlinear activations including ReLU and Sigmoid.

Combining the above approaches, though not technically deep, is subtle and depends heavily on the choice of model.  For example, prior work on kernel methods for learning neural networks has focused almost exclusively on learning in the agnostic model.  This model is {\em too} challenging, in the sense that the associated optimization problems to be solved seem computationally intractable (even for a single ReLU).  The probabilistic concept model, on the other hand, is a more structured noise model and allows for an iterative approach to minimize the empirical loss.

Our algorithm-- {\em Alphatron}-- inherits the best properties of both kernel methods and gradient-based methods: it is a simple, iterative update rule that does not require regularization\footnote{We emphasize this to distinguish our algorithm from the usual kernel methods (e.g., kernel ridge regression and SVMs) where regularization and the representer theorem are key steps.}, and it learns broad classes of networks whose first layer can be approximated via an appropriate feature expansion into an RKHS.  


One technical challenge is handling the approximation error induced from embedding into an RKHS.  In some sense, we must learn a {\em noisy} GLM.  For this, we use a learning rate and a slack variable to account for noise and follow the outline of the analysis of GLMtron (or Isotron).  The resulting algorithm is similar to performing gradient descent on the support vectors of a target element in an RKHS.  Our convergence bounds depend on the resulting choice of kernel, learning rate, and quality of RKHS embedding.  We can then leverage several results from approximation theory and obtain general theorems for various notions of RKHS approximation.

\subsection{Related Work}

The literature on provably efficient algorithms for learning neural networks is extensive.  In this work we focus on common nonlinear activation functions: sigmoid, ReLU, or threshold.  For linear activations, neural networks compute an overall function that is linear and can be learned efficiently using any polynomial-time algorithm for solving linear regression.  Livni et al. \cite{LivniSS14} observed that neural networks of constant depth with constant degree polynomial activations are equivalent to linear functions in a higher dimensional space (polynomials of degree $d$ are equivalent to linear functions over $n^{d}$ monomials).  It is known, however, that any polynomial that computes or even $\epsilon$-approximates a single ReLU requires degree $\Omega(1/\epsilon)$ \cite{goel2016reliably}.  Thus, linear methods alone do not suffice for obtaining our results.

The vast majority of work on learning neural networks takes strong assumptions on either the underlying marginal distribution (e.g., Gaussian), the structure of the network, or both.  Works that fall into these categories include \cite{KliODoSer04,KlivansM13,janzamin2015beating,sedghi2014provable,ZhangPS17,Zhang,Zhao,GoelKlivans}.  In terms of assumption-free learning results, Goel et al. \cite{goel2016reliably} used kernel methods to give an efficient, agnostic learning algorithm for sums of sigmoids (i.e., one hidden layer of sigmoids) with respect to any distribution on the unit ball. Daniely \cite{Daniely17} used kernel methods in combination with gradient descent to learn neural networks, but the networks he considers have restricted VC dimension.  All of the problems we consider in this paper are non-convex optimization problems, as it is known that a single sigmoid with respect to square-loss has exponentially many bad local minima \cite{AHW}. \\

{\noindent}{\bf A Remark on Bounding the 2-Norm}. 
As mentioned earlier, the networks we learn, even over the unit ball, have polynomially large VC dimension (if the first layer has $m$ hidden units, the VC dimension will be $\Omega(m) \cite{LBW94}$).  It is easy to see that if we allow the $2$-norm of weight vectors in the first layer to be polynomially large (in the dimension), we arrive at a learning problem statistically close to PAC learning intersections of halfspaces, for which there are known cryptographic hardness results \cite{KS09}.  Further, in the agnostic model, learning even a single ReLU with a bounded norm weight vector (and any distribution on the unit sphere) is as hard as learning sparse parity with noise \cite{goel2016reliably}.  As such, for distribution-free learnability, it seems necessary to have some bound on the norm {\em and} some structure in the noise model.  Bounding the norm of the weight vectors also aligns nicely with practical tools for learning neural networks.  Most gradient-based training algorithms for learning deep nets initialize hidden weight vectors to have unit norm and use techniques such as batch normalization or regularization to prevent the norm of the weight vectors from becoming large.





\subsection{Notation}
Vectors are denoted by bold-face and $|| \cdot ||$ denotes the standard 2-norm of the vector. We denote the space of inputs by $\mathcal{X}$ and the space of outputs by $\mathcal{Y}$. In our paper, $\mathcal{X}$ is usually the unit sphere/ball and $\mathcal{Y}$ is $[0,1]$ or $\{0,1\}$. Standard scalar (dot) products are denoted by $\textbf{a} \cdot \textbf{b}$ for vectors $\textbf{a}, \textbf{b} \in\mathbb{R}^n$, while inner products in a Reproducing Kernel Hilbert Space (RKHS) are denoted by $\langle \textbf{a}, \textbf{b} \rangle$ for elements $\textbf{a},\textbf{b}$ in the RKHS. We denote the standard composition of functions $f_1$ and $f_2$ by $f_1 \circ f_2$.\\

\noindent\textbf{Note.} Due to space limitations, we defer most proofs to the appendix.


\section{The Alphatron Algorithm}
Here we present our main algorithm Alphatron (Algorithm \ref{alphatron}) and a proof of its correctness.  In the next section we will use this algorithm to obtain our most general learning results.

\begin{algorithm}
\caption{Alphatron}\label{alphatron}
    \SetKwInOut{Input}{Input}
    \SetKwInOut{Output}{Output}
    \Input{data $\langle (\textbf{x}_\textbf{i}, y_i \rangle_{i=1}^m \in \mathbb{R}^n \times [0,1]$, non-decreasing\footnotemark $L$-Lipschitz function $u: \mathbb{R} \rightarrow [0,1]$, kernel function $\mathcal{K}$ corresponding to feature map $\psi$, learning rate $\lambda > 0$, number of iterations $T$, held-out data of size $N$ $\langle \textbf{a}_\textbf{j}, b_j \rangle_{j=1}^N \in \mathbb{R}^n \times [0,1]$}
    $\alpha^1 := 0 \in \mathbb{R}^m$\\
    \For{$t = 1, \ldots, T$}{
    $h^t(\textbf{x}) := u(\sum_{i=1}^m \alpha_i^t \mathcal{K}(\textbf{x}, \textbf{x}_\textbf{i}))$
    \For{$i = 1,2, \ldots, m$}{
    $\alpha^{t+1}_i := \alpha^t_i + \frac{\lambda}{m}(y_i - h^t(\textbf{x}_\textbf{i}))$
    }
    }
    \Output{$h^r$ where $r = \argmin_{t \in \{1, \ldots, T\}} \sum_{j=1}^N(h^t(\textbf{a}_\textbf{j}) - b_j)^2$}
\end{algorithm}

Define $\textbf{v}^\textbf{t} = \sum_{i=1}^m \alpha_i^t \psi(\textbf{x}_\textbf{i})$ implying $h^t(\textbf{x}) = u(\langle \textbf{v}^\textbf{t}, \psi(\textbf{x})\rangle)$. Let $\varepsilon(h) = \mathbb{E}_{\textbf{x},y}[(h(\textbf{x}) - \mathbb{E}[y|\textbf{x}])^2]$ and $err(h) = \mathbb{E}_{\textbf{x},y}[(h(\textbf{x}) - y)^2]$. It is easy to see that $\varepsilon(h) = err(h) - err(\mathbb{E}[y|\textbf{x}])$. Let $\hat{\varepsilon}, \hat{err}$ be the empirical versions of the same.

The following theorem generalizes Theorem 1 of \cite{KKKS11} to the bounded noise setting in a high dimensional feature space.  We follow the same outline, and their theorem can be recovered by setting $\psi(\textbf{x}) = \textbf{x}$ and $\xi$ as the zero function.
\begin{theorem} \label{thm:alpha}
Let $\mathcal{K}$ be a kernel function corresponding to feature map $\psi$ such that $\forall \textbf{x} \in \mathcal{X}, ||\psi(\textbf{x})|| \leq 1$. Consider samples $(\textbf{x}_\textbf{i},y_i)_{i=1}^m$ drawn iid from distribution $\mathcal{D}$ on $\mathcal{X} \times [0,1]$ such that $E[y|\textbf{x}] = u(\langle \textbf{v}, \psi(\textbf{x}) \rangle + \xi(\textbf{x}))$ where $u: \mathbb{R} \rightarrow [0,1]$ is a known $L$-Lipschitz non-decreasing function, $\xi: \mathbb{R}^n \rightarrow [-M, M]$ for $M > 0$ such that $\E[\xi(\textbf{x})^2] \leq \epsilon$ and $||\textbf{v}|| \leq B$. Then for $\delta \in (0,1)$, with probability $1 - \delta$, Alphatron with $\lambda = 1/L, T = CBL\sqrt{m/\log(1/\delta)}$ and $N = C^\prime m\log(T/\delta))$ for large enough constants $C, C^\prime>0$ outputs a hypothesis $h$ such that,
\[
\varepsilon(h) \leq O\left(L\sqrt{\epsilon} + L M \sqrt[4]{\frac{\log(1/\delta)}{m}} + BL\sqrt{\frac{\log(1/\delta)}{m}}\right).
\]
\end{theorem}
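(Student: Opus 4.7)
The plan is to run a potential-function argument on $\Phi^t := \|\mathbf{v} - \mathbf{v}^t\|^2$ in the RKHS, in the spirit of the GLMtron analysis, adapted to accommodate (i) the RKHS feature map $\psi$, (ii) the approximation/noise term $\xi(\mathbf{x})$, and (iii) the bounded label noise $y_i - \mathbb{E}[y_i\mid \mathbf{x}_i]$. From the update rule one computes
\[
\Phi^t - \Phi^{t+1} \;=\; \tfrac{2\lambda}{m}\sum_i (y_i - h^t(\mathbf{x}_i))\langle \mathbf{v}-\mathbf{v}^t,\psi(\mathbf{x}_i)\rangle \;-\; \|\mathbf{v}^{t+1}-\mathbf{v}^t\|^2,
\]
and since $\|\psi(\mathbf{x}_i)\|\le 1$ and each $|y_i - h^t(\mathbf{x}_i)|\le 1$, the quadratic remainder is at most $\lambda^2/m$ times a sum of squared residuals. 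With the choice $\lambda = 1/L$ this term will be dominated later.

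Next I would split $y_i - h^t(\mathbf{x}_i) = (y_i - \tilde u_i) + (\tilde u_i - h^t(\mathbf{x}_i))$, where $\tilde u_i := u(\langle\mathbf{v},\psi(\mathbf{x}_i)\rangle + \xi(\mathbf{x}_i))$. The key monotone/$L$-Lipschitz inequality $(u(a) - u(b))(a-b) \ge \tfrac{1}{L}(u(a)-u(b))^2$ applied with $a = \langle\mathbf{v},\psi(\mathbf{x}_i)\rangle + \xi(\mathbf{x}_i)$ and $b = \langle\mathbf{v}^t,\psi(\mathbf{x}_i)\rangle$ yields
\[
(\tilde u_i - h^t(\mathbf{x}_i))\langle \mathbf{v}-\mathbf{v}^t,\psi(\mathbf{x}_i)\rangle \;\ge\; \tfrac{1}{L}(\tilde u_i - h^t(\mathbf{x}_i))^2 \;-\; |\tilde u_i - h^t(\mathbf{x}_i)|\,|\xi(\mathbf{x}_i)|.
\]
Averaging over $i$, Cauchy--Schwarz on the cross term, and the hypothesis $\frac{1}{m}\sum_i \xi(\mathbf{x}_i)^2 \le \epsilon + O(M^2\sqrt{\log(1/\delta)/m})$ (Hoeffding on the bounded variable $\xi^2$) gives a deterministic lower bound on the drop in $\Phi^t$ in terms of the empirical square loss $\hat\varepsilon(h^t) := \frac{1}{m}\sum_i(\tilde u_i - h^t(\mathbf{x}_i))^2$, modulo an additive $O(L\sqrt{\epsilon} + LM(\log(1/\delta)/m)^{1/4})$-type slack.

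The $(y_i - \tilde u_i)$ contribution is a zero-mean noise term that must be controlled \emph{uniformly over $t\le T$}. Since $\|\mathbf{v}^t\|$ grows at rate at most $\lambda t/m$ in feature-space norm (each update adds a vector of norm at most $\lambda/m$), one can show $\|\mathbf{v} - \mathbf{v}^t\| \le B + \lambda T/m$, and then standard Rademacher/union-bound machinery for linear functions in a bounded RKHS ball gives
\[
\Bigl|\tfrac{1}{m}\sum_i (y_i-\tilde u_i)\langle\mathbf{v}-\mathbf{v}^t,\psi(\mathbf{x}_i)\rangle\Bigr| \;\le\; (B + \lambda T/m)\cdot O\!\left(\sqrt{\log(T/\delta)/m}\right)
\]
simultaneously for all $t$. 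Combining with the potential argument: either $\hat\varepsilon(h^t)$ is $O(L\sqrt{\epsilon} + LM(\log(1/\delta)/m)^{1/4} + BL\sqrt{\log(1/\delta)/m})$, or $\Phi^t$ decreases by a definite amount; since $\Phi^0 \le B^2$, some iterate $t^\star \le T = CBL\sqrt{m/\log(1/\delta)}$ must satisfy the empirical bound.

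Finally I would bridge empirical to population error. A uniform Rademacher-complexity bound on the $h^t$-class (hypotheses of the form $u(\langle \mathbf{w},\psi(\cdot)\rangle)$ with $\|\mathbf{w}\|$ controlled and $u$ fixed $L$-Lipschitz) converts $\hat\varepsilon(h^{t^\star})$ into $\varepsilon(h^{t^\star})$ up to the same order of additive error; and since we do not know $t^\star$, the held-out set of size $N = \Theta(m\log(T/\delta))$ used to select $r = \argmin_t \sum_j(h^t(\mathbf{a}_j)-b_j)^2$ ensures, via a Hoeffding-plus-union-bound argument on $T$ fixed hypotheses evaluated on fresh data, that the selected $h^r$ is within $O(\sqrt{\log(T/\delta)/N})$ of the best iterate. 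The dominant obstacle is the uniform concentration of the noise inner product over iterates $t$, because $\mathbf{v}^t$ is data-dependent; once a crude $a$ priori norm bound on $\mathbf{v}^t$ is in place, the rest is bookkeeping to verify that the three error contributions collapse to the stated $O(L\sqrt{\epsilon}+LM(\log(1/\delta)/m)^{1/4}+BL\sqrt{\log(1/\delta)/m})$ rate.
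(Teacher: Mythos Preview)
Your potential-function skeleton, the monotone/Lipschitz inequality, the Cauchy--Schwarz treatment of the $\xi$ cross term, and the held-out selection step all match the paper's argument. The gap is in how you control the label-noise inner product
\[
\frac{1}{m}\sum_{i}(y_i-\tilde u_i)\,\langle \mathbf{v}-\mathbf{v}^t,\psi(\mathbf{x}_i)\rangle .
\]
First, your a~priori bound on $\|\mathbf{v}^t\|$ is miscomputed: each full iteration changes $\mathbf{v}^t$ by $\frac{\lambda}{m}\sum_i(y_i-h^t(\mathbf{x}_i))\psi(\mathbf{x}_i)$, which has norm at most $\lambda$, not $\lambda/m$; hence $\|\mathbf{v}^t\|\le \lambda t$ and $\|\mathbf{v}-\mathbf{v}^t\|\le B+\lambda T$, not $B+\lambda T/m$. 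Second, even with the correct crude bound, plugging $R=B+\lambda T$ and $T=\Theta(BL\sqrt{m/\log(1/\delta)})$ into an $R\cdot O(\sqrt{\log(1/\delta)/m})$ concentration estimate yields a term of order $B$ that does \emph{not} vanish, so the stated rate would not follow. The ``uniform-over-$t$'' route therefore does not close as written.

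The paper sidesteps this entirely by observing that the noise term factorizes as $\langle \mathbf{v}-\mathbf{v}^t,\Delta\rangle$ with the \emph{fixed} (not $t$-dependent) vector
\[
\Delta \;=\; \frac{1}{m}\sum_i (y_i-\tilde u_i)\,\psi(\mathbf{x}_i),
\]
so a single vector-Hoeffding bound gives $\|\Delta\|\le \eta=O(\sqrt{\log(1/\delta)/m})$ with no union over $t$. The norm $\|\mathbf{v}-\mathbf{v}^t\|$ is then controlled \emph{by the potential argument itself}: since $\Phi^0\le B^2$ and Case~1 (potential drop $\ge B\eta/L$) holds up to the first $t^\star$ where the empirical error is small, one has $\|\mathbf{v}-\mathbf{v}^t\|\le B$ inductively for all $t\le t^\star$, yielding $|\langle \mathbf{v}-\mathbf{v}^t,\Delta\rangle|\le B\eta$. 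The same inductive bound gives $\|\mathbf{v}^{t^\star}\|\le 2B$, which is exactly the radius needed for the Rademacher generalization step. Replacing your crude a~priori bound with this self-referential use of the potential is the missing ingredient; once you do that, the rest of your outline goes through and matches the paper.
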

\footnotetext{We present the algorithm and subsequent results for non-decreasing function $u$. Non-increasing functions can also be handled by negating the update term, \ie $\alpha^{t+1}_i := \alpha^t_i - \frac{\lambda}{m}(y_i - h^t(\textbf{x}_\textbf{i}))$.}

Alphatron runs in time $\mathsf{poly}(n, m, \log(1/\delta), t_{\mathcal{K}})$ where $t_{\mathcal{K}}$ is the time required to compute the kernel function $\mathcal{K}$.

\subsection{General Theorems Involving Alphatron}
In this section we use Alphatron to give our most general learnability results for the probablistic concept (p-concept) model.  We then state several applications in the next section.  Here we show that if a function can be approximated by an element of an appropriate RKHS, then it is p-concept learnable. We assume that the kernel function is efficiently computable, that is, computable in polynomial time in the input dimension. Formally, we define approximation as follows:
\begin{definition}[$(\epsilon, B, M)$-approximation]
\label{def_epsM}
Let $f$ be a function mapping domain $\mathcal{X}$ to $\mathbb{R}$ and $\mathcal{D}$ be a distribution over $\mathcal{X}$. Let $\mathcal{K}$ be a kernel function with corresponding RKHS $\mathcal{H}$ and feature vector $\psi$. We say $f$ is $(\epsilon, B, M)$-approximated by $\mathcal{K}$ over $\mathcal{D}$ if there exists some $\textbf{v} \in \mathcal{H}$ with $||\textbf{v}|| \leq B$ such that for all $\textbf{x} \in \mathcal{X}, |f(\textbf{x}) - \langle \textbf{v}, \psi(\textbf{x})\rangle| \leq M$ and $\E[(f(\textbf{x}) - \langle \textbf{v}, \psi(\textbf{x})\rangle)^2] \leq \epsilon^2$.
\end{definition}

Combining Alphatron and the above approximation guarantees, we have the following general learning results:
\begin{theorem}\label{thm:main1}
Consider distribution $\mathcal{D}$ on $\mathcal{X} \times [0,1]$ such that $\E[y|\textbf{x}] = u(f(\textbf{x}))$ where $u$ is a known $L$-Lipschitz non-decreasing function and $f$ is $(\epsilon, B, M)$-approximated over $\mathcal{D}_{\mathcal{X}}$ by some kernel function $K$ such that $\mathcal{K}(\textbf{x}, \textbf{x}^\prime) \leq 1$. Then for $\delta \in (0,1)$, there exists an algorithm that draws $m$ iid samples from $\mathcal{D}$ and outputs a hypothesis $h$ such that with probability $1 - \delta$, $\varepsilon(h) \leq O(L\epsilon)$ for $m = O\left(\left(\frac{LM}{\epsilon}\right)^4 + \left(\frac{BL}{\epsilon}\right)^2\right)\cdot\log(1/\delta)$ in time $\mathsf{poly}(n,B,M,L,1/\epsilon,\log(1/\delta))$ where $n$ is the dimension of $\mathcal{X}$.
\end{theorem}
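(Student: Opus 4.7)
The plan is to view Theorem~\ref{thm:main1} as a direct corollary of Theorem~\ref{thm:alpha}, where the approximation error guaranteed by Definition~\ref{def_epsM} plays the role of the slack function $\xi$ in Theorem~\ref{thm:alpha}. The work is almost entirely a matter of identifying the right substitution and then choosing $m$ to balance the three error terms.

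First I would unpack the hypothesis: since $f$ is $(\epsilon, B, M)$-approximated by $\mathcal{K}$ over $\mathcal{D}_\mathcal{X}$, there exists $\textbf{v} \in \mathcal{H}$ with $\|\textbf{v}\| \leq B$ such that $\xi(\textbf{x}) := f(\textbf{x}) - \langle \textbf{v}, \psi(\textbf{x})\rangle$ satisfies $|\xi(\textbf{x})| \leq M$ pointwise and $\E[\xi(\textbf{x})^2] \leq \epsilon^2$. Consequently
\[
\E[y \mid \textbf{x}] \;=\; u(f(\textbf{x})) \;=\; u\bigl(\langle \textbf{v}, \psi(\textbf{x})\rangle + \xi(\textbf{x})\bigr),
\]
which is exactly the noisy generalized linear model form assumed in Theorem~\ref{thm:alpha}. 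The boundedness condition $\|\psi(\textbf{x})\| \leq 1$ follows from $\mathcal{K}(\textbf{x}, \textbf{x}) = \|\psi(\textbf{x})\|^2 \leq 1$, so every hypothesis of Theorem~\ref{thm:alpha} is satisfied with parameters $B$, $M$, and noise variance bound $\epsilon^2$ (replacing the symbol $\epsilon$ of that theorem).

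Next I would invoke Theorem~\ref{thm:alpha} directly. Running Alphatron with the prescribed settings of $\lambda$, $T$, and held-out set size $N$ yields, with probability $1-\delta$, a hypothesis $h$ satisfying
\[
\varepsilon(h) \;\leq\; O\!\left( L\sqrt{\epsilon^{2}} \;+\; L M \sqrt[4]{\tfrac{\log(1/\delta)}{m}} \;+\; B L \sqrt{\tfrac{\log(1/\delta)}{m}} \right).
\]
The first term is already $O(L\epsilon)$. It remains to choose $m$ so that each of the remaining two terms is also at most $O(L\epsilon)$: the $LM$ term demands $m = \Omega((M/\epsilon)^4 \log(1/\delta))$ and the $BL$ term demands $m = \Omega((B/\epsilon)^2 \log(1/\delta))$. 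Taking $m = O\!\bigl(((LM/\epsilon)^4 + (BL/\epsilon)^2)\log(1/\delta)\bigr)$ suffices (with room to spare). The running-time claim then follows from the fact that Alphatron itself runs in time $\mathsf{poly}(n, m, T, t_\mathcal{K})$, and each of $T$, $t_\mathcal{K}$, and $m$ is polynomial in $n, B, M, L, 1/\epsilon, \log(1/\delta)$ under the assumption that $\mathcal{K}$ is efficiently computable.

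There is essentially no deep obstacle in this argument, since Theorem~\ref{thm:alpha} was already phrased so as to accommodate an arbitrary approximation residual. The only mild subtlety is the bookkeeping: the $\epsilon$ appearing in Theorem~\ref{thm:alpha} refers to the variance bound on $\xi$, whereas the $\epsilon$ in Definition~\ref{def_epsM} bounds the root-mean-square approximation error, so one must substitute $\epsilon^2$ for the former to recover the clean $O(L\epsilon)$ excess-risk guarantee stated in Theorem~\ref{thm:main1}.
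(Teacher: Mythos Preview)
Your proposal is correct and follows essentially the same route as the paper: define $\xi(\textbf{x}) = f(\textbf{x}) - \langle \textbf{v}, \psi(\textbf{x})\rangle$, observe this puts $\E[y\mid\textbf{x}]$ into the form required by Theorem~\ref{thm:alpha} with variance parameter $\epsilon^2$, invoke that theorem, and then choose $m$ to make the two sampling-error terms $O(L\epsilon)$. Your explicit verification that $\|\psi(\textbf{x})\|\leq 1$ and your remark about substituting $\epsilon^2$ for the variance bound are both on point.
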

\begin{proof} 
Let $\mathcal{H}$ be the RKHS corresponding to $\mathcal{K}$ and $\psi$ be the feature vector. Since $f$ is $(\epsilon, B, M)$-approximated by kernel function $\mathcal{K}$ over $\mathcal{D}_{\mathcal{X}}$, we have $\forall\ \textbf{x}, f(\textbf{x}) = \langle \textbf{v}, \psi(\textbf{x}) \rangle  + \xi(\textbf{x})$ for some function $\xi: \mathcal{X} \rightarrow [-M, M]$ with $\E[\xi(\textbf{x})^2] \leq \epsilon^2$. Thus $E[y|\textbf{x}] = u(f(\textbf{x})) =  u\left(\langle \textbf{v}, \psi(\textbf{x}) \rangle + \xi(\textbf{x}) \right)$. Applying Theorem \ref{thm:alpha}, we have that Alphatron outputs a hypothesis $h$ such that
\[
\varepsilon(h) \leq CL \left(\epsilon + M \sqrt[4]{\frac{\log(1/\delta)}{m}} + B \sqrt{\frac{\log(1/\delta)}{m}}\right)
\]
for some constants $C > 0$. Also Alphatron requires at most $O(BL\sqrt{m/\log(1/\delta)})$ iterations. Setting $m$ as in theorem statement gives us the required result.
\end{proof}

For the simpler case when $f$ is {\em uniformly} approximated by elements in the RKHS we have,

\begin{definition}[$(\epsilon, B)$-uniform approximation]
\label{def_epsM}
Let $f$ be a function mapping domain $\mathcal{X}$ to $\mathbb{R}$ and $\mathcal{D}$ be a distribution over $\mathcal{X}$. Let $\mathcal{K}$ be a kernel function with corresponding RKHS $\mathcal{H}$ and feature vector $\psi$. We say $f$ is $(\epsilon, B)$-uniformly approximated by $\mathcal{K}$ over $\mathcal{D}$ if there exists some $\textbf{v} \in \mathcal{H}$ with $||\textbf{v}|| \leq B$ such that for all $\textbf{x} \in \mathcal{X}, |f(\textbf{x}) - \langle \textbf{v}, \psi(\textbf{x})\rangle| \leq \epsilon$.
\end{definition}

\begin{theorem}\label{thm:main}
 Consider distribution $\mathcal{D}$ on $\mathcal{X} \times [0,1]$ such that $\E[y|\textbf{x}] = u(f(\textbf{x}))$ where $u$ is a known $L$-Lipschitz non-decreasing function and $f$ is $(\epsilon, B)$-approximated by some kernel function $\mathcal{K}$ such that $\mathcal{K}(\textbf{x}, \textbf{x}^\prime) \leq 1$. Then for $\delta \in (0,1)$, there exists an algorithm that draws $m$ iid samples from $\mathcal{D}$ and outputs a hypothesis $h$ such that with probability $1 - \delta$, $\varepsilon(h) \leq O(L\epsilon)$ for $m \geq \left(\frac{BL}{\epsilon}\right)^2\cdot\log(1/\delta)$ in time $\mathsf{poly}(n,B,L,1/\epsilon,\log(1/\delta))$ where $n$ is the dimension of $\mathcal{X}$.
\end{theorem}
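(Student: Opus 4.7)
The plan is to reduce Theorem~\ref{thm:main} directly to Theorem~\ref{thm:main1} (equivalently, to Theorem~\ref{thm:alpha}) by exploiting the stronger uniform approximation hypothesis to pin down a small value of the slack parameter $M$. Concretely, let $\mathbf{v} \in \mathcal{H}$ with $\|\mathbf{v}\| \leq B$ be the approximator guaranteed by Definition~2, and define $\xi(\mathbf{x}) := f(\mathbf{x}) - \langle \mathbf{v}, \psi(\mathbf{x}) \rangle$. By uniform approximation, $|\xi(\mathbf{x})| \leq \epsilon$ for all $\mathbf{x} \in \mathcal{X}$, so in the language of Theorem~\ref{thm:alpha} we may take $M = \epsilon$, and automatically $\E[\xi(\mathbf{x})^2] \leq \epsilon^2$. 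Since $\E[y \mid \mathbf{x}] = u(f(\mathbf{x})) = u(\langle \mathbf{v}, \psi(\mathbf{x})\rangle + \xi(\mathbf{x}))$, the hypotheses of Theorem~\ref{thm:alpha} are satisfied.

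The second step is to plug these parameters into the error bound from Theorem~\ref{thm:alpha}. This yields, with probability $1 - \delta$,
\[
\varepsilon(h) \;\leq\; C\left(L\epsilon + L\epsilon \sqrt[4]{\tfrac{\log(1/\delta)}{m}} + BL\sqrt{\tfrac{\log(1/\delta)}{m}}\right)
\]
for an absolute constant $C > 0$. The first term is already $O(L\epsilon)$, and since the fourth root is bounded by $1$ for $m \geq \log(1/\delta)$, the second term is likewise $O(L\epsilon)$. It remains only to choose $m$ large enough that the third term is also $O(L\epsilon)$, which holds precisely when $m \geq (BL/\epsilon)^2 \log(1/\delta)$; this matches the sample complexity stated in the theorem.

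Finally, for the running time one invokes the same accounting as in Theorem~\ref{thm:main1}: Alphatron performs at most $T = O(BL\sqrt{m/\log(1/\delta)})$ outer iterations, each of which costs $\mathrm{poly}(n,m,t_{\mathcal{K}})$ time, where $t_{\mathcal{K}}$ is polynomial in $n$ by assumption. Substituting the chosen value of $m$ gives total running time $\mathrm{poly}(n, B, L, 1/\epsilon, \log(1/\delta))$.

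There is essentially no conceptual obstacle here; the entire point of the statement is to record the strict improvement in sample complexity that uniform approximation buys. The only thing worth flagging is that the $M$-dependent term in Theorem~\ref{thm:alpha} scales as $M \cdot (\log(1/\delta)/m)^{1/4}$ rather than $(\log(1/\delta)/m)^{1/2}$, so without the observation $M = \epsilon$ one would pay an extra $(LM/\epsilon)^4$ factor in samples (as in Theorem~\ref{thm:main1}); the uniform-approximation hypothesis is exactly what eliminates this term.
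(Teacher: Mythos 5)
Your proof is correct, but it takes a mildly different route from the paper's. The paper's (very terse) proof says to \emph{re-examine the internals} of the Alphatron analysis (Theorem~\ref{thm:alpha}) and observe that, with $\xi$ uniformly bounded by $\epsilon$, the empirical average $\rho = \frac{1}{m}\sum_i \xi(\textbf{x}_i)^2 \leq \epsilon^2$ holds \emph{deterministically}, so the Hoeffding step that produces the $M\sqrt[4]{\log(1/\delta)/m}$ term is simply unnecessary; the bound becomes $\varepsilon(h) \leq O\bigl(L\epsilon + BL\sqrt{\log(1/\delta)/m}\bigr)$ with no fourth-root term at all. You instead apply Theorem~\ref{thm:alpha} as a black box with $M=\epsilon$ and $\E[\xi^2] \le \epsilon^2$, retain the $L\epsilon\sqrt[4]{\log(1/\delta)/m}$ term, and then argue separately that it is also $O(L\epsilon)$ once $m \ge (BL/\epsilon)^2\log(1/\delta)$ (using that this forces $m \ge \log(1/\delta)$ in the relevant regime $\epsilon \le BL$). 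Both arguments reach the same conclusion; yours is more portable since it never opens up the Alphatron proof, at the cost of one extra (easy) bounding step and a slightly looser intermediate inequality. Your closing remark correctly identifies exactly what the uniform-approximation hypothesis buys relative to Theorem~\ref{thm:main1}.
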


\begin{proof}
The proof is the same as the proof of Theorem \ref{thm:main1} by re-examining the proof of Theorem 1 and noticing that $\xi(\textbf{x})$ is uniformly bounded by $\epsilon$ in each inequality. 
\end{proof}
\section{Learning Neural Networks}
In this section we give polynomial time learnability results for neural networks with two nonlinear layers in the p-concept model. Following Safran and Shamir \cite{SafranS16}, we define a neural network with one (nonlinear) layer with $k$ units as follows:
\[
\mathcal{N}_1: \textbf{x} \rightarrow \sum_{i=1}^k \textbf{b}_i \sigma(\textbf{a}_\textbf{i} \cdot \textbf{x})
\]
for $\textbf{x} \in \mathbb{R}^n$, $\textbf{a}_\textbf{i} \in \S^{n-1}$ for $i \in \{1, \cdots, k\}$, $\textbf{b} \in \S^{k-1}$. We subsequently define a neural network with two (nonlinear) layers with one unit in layer 2 and $k$ units in hidden layer 1 as follows:
\[
\mathcal{N}_2: \textbf{x} \rightarrow \sigma'\left(\mathcal{N}_1(x)\right) = \sigma'\left(\sum_{i=1}^k \textbf{b}_i \sigma(\textbf{a}_\textbf{i} \cdot \textbf{x})\right)
\]
for $\textbf{x} \in \mathbb{R}^n$, $\textbf{a}_\textbf{i} \in \S^{n-1}$ for $i \in \{1, \cdots, k\}$, $\textbf{b} \in \S^{k-1}$ and $\sigma, \sigma': \mathbb{R} \rightarrow \mathbb{R}$.


The following theorem is our main result for learning classes of neural networks with two nonlinear layers in polynomial time: 
\begin{theorem}\label{thm:sss}
Consider samples $(\textbf{x}_\textbf{i},y_i)_{i=1}^m$ drawn iid from distribution $\mathcal{D}$ on $\S^{n-1} \times [0,1]$ such that $E[y|\textbf{x}] = \mathcal{N}_2(\textbf{x})$ with $\sigma': \mathbb{R} \rightarrow [0,1]$ is a known $L$-Lipschitz non-decreasing function and $\sigma =\sigma_{sig}$ is the sigmoid function. There exists an algorithm that outputs a hypothesis $h$ such that, with probability $1-\delta$,
\[
\E_{\textbf{x}, y \sim \mathcal{D}}\left[ (h(\textbf{x}) - \mathcal{N}_2(\textbf{x}))^2\right] \leq \epsilon
\]
for $m = \left(\frac{kL}{\epsilon}\right)^{O(1)}\cdot\log(1/\delta)$.  The algorithm runs in time polynomial in $m$ and $n$.
\end{theorem}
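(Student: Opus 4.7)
The plan is to reduce Theorem~\ref{thm:sss} to the uniform-approximation result Theorem~\ref{thm:main}, by exhibiting a polynomial-time-evaluable kernel $\mathcal{K}$ whose RKHS $(\epsilon/L, B)$-uniformly approximates the inner function $f(\textbf{x}) := \sum_{i=1}^{k} \textbf{b}_i\,\sigma_{sig}(\textbf{a}_\textbf{i}\cdot\textbf{x})$, with $B = \mathsf{poly}(k,L,1/\epsilon)$. Since $\sigma'$ is exactly the known $L$-Lipschitz non-decreasing link function required by Theorem~\ref{thm:main} and $\E[y|\textbf{x}] = \sigma'(f(\textbf{x}))$, plugging this approximation in gives $\varepsilon(h) \le O(L\cdot(\epsilon/L)) = O(\epsilon)$ with sample complexity $(BL/(\epsilon/L))^2\log(1/\delta) = \mathsf{poly}(k,L,1/\epsilon)\log(1/\delta)$, matching the statement.

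The first step is to approximate a single sigmoid unit. Because $\sigma_{sig}$ is analytic, its Chebyshev expansion converges exponentially on $[-1,1]$, so there is a univariate polynomial $p$ of degree $d = O(\log(1/\epsilon'))$ whose coefficient $\ell_1$-norm is $\mathsf{poly}(1/\epsilon')$ satisfying $|\sigma_{sig}(z) - p(z)| \le \epsilon'$ for all $z\in[-1,1]$. Substituting $z = \textbf{a}_\textbf{i}\cdot\textbf{x}$ (which lies in $[-1,1]$ since $||\textbf{a}_\textbf{i}||=||\textbf{x}||=1$), and using the multinomial kernel construction of Shalev-Shwartz--Shamir--Sridharan and Goel et al.~\cite{SSSS,goel2016reliably} --- whose degree-$d$ feature map lets one realise any polynomial in $\textbf{a}\cdot\textbf{x}$ as an RKHS element whose norm is controlled by the coefficient $\ell_1$-norm --- produces an RKHS element $\textbf{v}_\textbf{i}$ of norm $B' = \mathsf{poly}(1/\epsilon')$ with $|\sigma_{sig}(\textbf{a}_\textbf{i}\cdot\textbf{x}) - \langle \textbf{v}_\textbf{i},\psi(\textbf{x})\rangle| \le \epsilon'$ pointwise on $\S^{n-1}$, and additionally $\mathcal{K}(\textbf{x},\textbf{x}')\le 1$.

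The second step combines the $k$ units by linearity: set $\textbf{v} := \sum_{i=1}^{k} \textbf{b}_i\,\textbf{v}_\textbf{i}$. Since $\textbf{b}\in\S^{k-1}$ forces $||\textbf{b}||_1 \le \sqrt{k}$ by Cauchy--Schwarz, the triangle inequality gives
\[
|f(\textbf{x}) - \langle \textbf{v},\psi(\textbf{x})\rangle| \;\le\; \sqrt{k}\,\epsilon' \qquad \text{and} \qquad ||\textbf{v}|| \;\le\; \sqrt{k}\,B'.
\]
Choosing $\epsilon' := \epsilon/(\sqrt{k}\,L)$ delivers $(\epsilon/L,B)$-uniform approximation of $f$ with $B = \sqrt{k}\cdot\mathsf{poly}(\sqrt{k}L/\epsilon) = \mathsf{poly}(k,L,1/\epsilon)$. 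Invoking Theorem~\ref{thm:main} with this $\mathcal{K}$, $B$, approximation error $\epsilon/L$, and Lipschitz constant $L$ then yields the stated sample complexity and runtime, with the kernel evaluable in $\mathsf{poly}(n)$ time per call.

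The main obstacle is the single-sigmoid step: one needs a kernel whose degree-$d$ expansion represents polynomials in $\textbf{a}\cdot\textbf{x}$ with RKHS norm controlled by the coefficient $\ell_1$-norm, so that the exponential decay of the Chebyshev approximation of $\sigma_{sig}$ translates into a $\mathsf{poly}(1/\epsilon')$ norm bound rather than $\exp(d)$. This is exactly what the multinomial-kernel construction from prior kernel-method work on neural-net learning is built to provide; the remaining ingredients --- the Cauchy--Schwarz combining argument and the black-box invocation of Theorem~\ref{thm:main} --- are routine.
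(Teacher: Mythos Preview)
Your proposal is correct and follows essentially the same approach as the paper: approximate each sigmoid unit in the RKHS of the multinomial kernel (the paper packages this as Lemma~\ref{lem:approxone}), combine the $k$ units by linearity using $\|\textbf{b}\|_1\le\sqrt{k}$ (the paper's Lemma~\ref{lem:sum}), and then invoke Theorem~\ref{thm:main}. The only cosmetic difference is that the paper's norm bound (Lemma~\ref{lem:Bbound}) is stated in terms of the $\ell_2$-norm of the polynomial coefficients rather than the $\ell_1$-norm you cite, but since $\ell_2\le\ell_1$ your version is simply a slightly weaker sufficient condition and the argument goes through unchanged.
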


We also obtain results for networks of ReLUs, but the dependence on the number of hidden units, $\epsilon$, and $L$ are exponential (the algorithm still runs in polynomial-time in the dimension):

\begin{theorem}
Consider samples $(\textbf{x}_\textbf{i},y_i)_{i=1}^m$ drawn iid from distribution $\mathcal{D}$ on $\S^{n-1} \times [0,1]$ such that $E[y|\textbf{x}] = \mathcal{N}_2(\textbf{x})$ with $\sigma': \mathbb{R} \rightarrow [0,1]$ is a known $L$-Lipschitz non-decreasing function and $\sigma = \sigma_{relu}$ is the ReLU function. There exists an algorithm that outputs a hypothesis $h$ such that with probability $1 - \delta$, 
\[
\E_{\textbf{x}, y \sim \mathcal{D}}\left[ (h(\textbf{x}) - \mathcal{N}_2(\textbf{x}))^2\right] \leq \epsilon
\]
for $m = 2^{O(k L/\epsilon)} \cdot \log(1/\delta)$. The algorithm runs in time polynomial in $m$ and $n$.
\end{theorem}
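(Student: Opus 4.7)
The plan is to invoke Theorem \ref{thm:main1} (or its uniform-approximation specialization Theorem \ref{thm:main}) exactly as in the proof of Theorem \ref{thm:sss}, with the only change being the polynomial approximator used for the first-layer activation. Write $\mathcal{N}_2(\mathbf{x}) = \sigma'(f(\mathbf{x}))$ with $f(\mathbf{x}) = \sum_{i=1}^k \mathbf{b}_i \sigma_{\mathrm{relu}}(\mathbf{a}_\mathbf{i} \cdot \mathbf{x})$. It suffices to exhibit a kernel $\mathcal{K}$ with $\mathcal{K}(\mathbf{x},\mathbf{x}') \leq 1$ on $\mathbb{S}^{n-1}$ for which $f$ is $(\epsilon_0, B)$-uniformly approximated, with $\epsilon_0 = \Theta(\epsilon/L)$; Theorem \ref{thm:main} then returns $h$ with $\varepsilon(h) \leq \epsilon$ using $m = (BL/\epsilon)^2 \log(1/\delta)$ samples.

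For $\mathcal{K}$ I would take the (normalized) multinomial kernel of degree $d$ used in \cite{goel2016reliably}, whose RKHS contains every degree-$d$ polynomial of $\mathbf{a}\cdot\mathbf{x}$ and whose RKHS norm equals, up to the kernel's diagonal normalization, a weighted $\ell_2$ norm of the polynomial's coefficients. Using the univariate Chebyshev/Jackson-style approximation of $\sigma_{\mathrm{relu}}$ on $[-1,1]$ in that paper, for any $\epsilon_1 > 0$ there is a polynomial $p_d$ of degree $d = O(1/\epsilon_1)$ that $\epsilon_1$-uniformly approximates $\sigma_{\mathrm{relu}}$ and whose lift $\mathbf{v}_i$ into the RKHS (applied to the direction $\mathbf{a}_\mathbf{i}$) satisfies $\|\mathbf{v}_i\| \leq 2^{O(d)}$. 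Setting $\mathbf{v} = \sum_i \mathbf{b}_i \mathbf{v}_i$ and using $\|\mathbf{b}\|_1 \leq \sqrt{k}\|\mathbf{b}\|_2 \leq \sqrt{k}$, the triangle inequality yields a uniform approximation error of at most $\sqrt{k}\,\epsilon_1$ and RKHS norm at most $\sqrt{k}\cdot 2^{O(1/\epsilon_1)}$.

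Choosing $\epsilon_1 = \Theta(\epsilon/(L\sqrt{k}))$ makes the uniform error $O(\epsilon/L) = \epsilon_0$ and forces $B \leq \sqrt{k}\cdot 2^{O(L\sqrt{k}/\epsilon)} = 2^{O(kL/\epsilon)}$, so $m = (BL/\epsilon)^2 \log(1/\delta) = 2^{O(kL/\epsilon)}\log(1/\delta)$. Alphatron's per-iteration cost is polynomial in $m$, $n$, and the evaluation time of $\mathcal{K}$, and the multinomial kernel admits a closed-form evaluation polynomial in $n$, so the overall running time is $\mathsf{poly}(m,n)$, as claimed.

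The main obstacle is the polynomial-approximation step: because ReLU is non-analytic at the origin, Taylor-series methods cannot yield a sub-exponential-degree approximator, and one must invoke Chebyshev/Jackson approximation to achieve the $O(1/\epsilon_1)$ degree versus $2^{O(1/\epsilon_1)}$ RKHS-norm tradeoff exploited above. The $\Omega(1/\epsilon)$ degree lower bound for any polynomial $\epsilon$-approximating a single ReLU (noted in the introduction) shows that the exponential dependence on $k$, $L$, and $1/\epsilon$ in the final sample complexity is essentially unavoidable for any argument that routes through a polynomial-kernel RKHS; obtaining better bounds would likely require a structurally different algorithm or an additional assumption on the network.
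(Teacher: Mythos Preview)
Your proposal is correct and follows essentially the same approach the paper intends: it is the ReLU analogue of the proof of Theorem~\ref{thm:sss}, invoking the ReLU case of Lemma~\ref{lem:approxone} (degree $O(1/\epsilon_1)$, RKHS norm $2^{O(1/\epsilon_1)}$), combining via Lemma~\ref{lem:sum} with $\|\mathbf{b}\|_1 \le \sqrt{k}$, and then applying Theorem~\ref{thm:main}. Your parameter choice $\epsilon_1 = \Theta(\epsilon/(L\sqrt{k}))$ in fact yields $m = 2^{O(L\sqrt{k}/\epsilon)}\log(1/\delta)$, which is within the stated $2^{O(kL/\epsilon)}\log(1/\delta)$ bound.
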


Although our algorithm does not recover the parameters of the network, it still outputs a hypothesis with interpretable features. More specifically, our learning algorithm outputs the hidden layer as a multivariate polynomial. Given inputs $\textbf{x}_\textbf{1}, \cdots, \textbf{x}_\textbf{m}$, the hypothesis output by our algorithm Alphatron is of the form $h(\textbf{x}) = u(\sum_{i=1}^m \alpha_i^* \mathcal{MK}_d(\textbf{x}, \textbf{x}_\textbf{i})) = u(\langle \textbf{v}, \psi_d(\textbf{x})\rangle)$ where $\textbf{v} = \sum_{i=1}^m \alpha_i^*\psi_d(\textbf{x}_\textbf{i})$ and $d$ is dependent on required approximation. As seen in the preliminaries, $\langle \textbf{v}, \psi_d(\textbf{x})\rangle$ can be expressed as a polynomial and the coefficients can be computed as follows,
\[ \beta(i_1, \ldots, i_n) = \sum_{i=1}^m \alpha_i^* \left( \sum_{\substack{k_1, \ldots, k_j \in [n]^j
\\ j \in \{0, \ldots, d\} \\ M(k_1, \ldots, k_j ) = (i_1, \ldots, i_n)}}
(x_i)_{k_1} \cdots (x_i)_{k_j}\right) = \sum_{i=1}^m \alpha_i^* C\left(i_1, \ldots,
i_n\right) (x_i)_{1}^{i_1} \cdots (x_i)_{n}^{i_n}. \]
Here, we follow the notation from \cite{goel2016reliably}; $M$ maps ordered tuple $({k_1}, \ldots, {k_j}) \in
[n]^j$ for $j \in [d]$ to tuple $(i_1, \ldots, i_n) \in \{0, \ldots, d\}^n$
such that $x_{k_1}\cdots x_{k_j} = x_1^{i_1} \cdots x_n^{i_n}$ and $C$ maps ordered tuple $(i_1,\ldots, i_n) \in \{0, \ldots, d\}^n$ to the number of distinct orderings of the $i_j$'s for $j \in \{0, \ldots, n\}$. The function $C$ can be computed from the multinomial theorem (cf. \cite{wiki}).  Thus, the coefficients of the polynomial can be efficiently indexed.  Informally, each coefficient can be interpreted as the correlation between the target function and the product of features appearing in the coefficient's monomial.


\section{Generalizing PAC Learning to Probabilistic Concepts}
In this section we show how known algorithms for PAC learning boolean concepts can be generalized to the probabilistic concept model. We use Alphatron to learn real-valued versions of these well-studied concepts.

\noindent\textbf{Notation.} We follow the notation of \cite{gopalan2008agnostically}. For any function $P: \{-1, 1\}^n \rightarrow \mathbb{R}$, we denote the Fourier coefficients by $\hat{P}(S)$ for all $S \subseteq[n]$. The support of $P$, \ie the number of non-zero Fourier coefficients, is denoted by $\mathsf{supp}(P)$. The norms of the coefficient vectors are defined as $L_p(P) = \left(\sum_S |\hat{P}(S)|^p\right)^{1/p}$ for $p \geq 1$ and $L_\infty(P) = \max_S |\hat{P}(S)|$. Similarly, the norm of the function $P$ are defined as $||P||_p = \E_{x \in \{-1,1\}^n}\left[\sum_S |P(x)|^p\right]^{1/p}$ for $p \geq 1$. Also, the inner product $P \cdot Q = \E_{x \in \{-1,1\}^n}[P(x)Q(x)]$.
\subsection{Generalized DNF Learning with Queries}
Here we give an algorithm, KMtron, which combines isotonic regression with the KM algorithm \cite{kushilevitz1993learning} for finding large Fourier coefficients of a function (given query access to the function).  The KM algorithm takes the place of the ``kernel trick'' used by Alphatron to provide an estimate for the update step in isotonic regression.  Viewed this way, the KM algorithm can be re-interpreted as a query-algorithm for giving estimates of the gradient of square-loss with respect to the uniform distribution on Boolean inputs.

The main application of KMtron is a generalization of celebrated results for PAC learning DNF formulas \cite{Jackson} to the setting of probabilistic concepts.  That is, we can efficiently learn any conditional mean that is a smooth, monotone combination of $L_1$-bounded functions. \\

\noindent\textbf{KM Algorithm.}
The KM algorithm learns sparse approximations to boolean functions given query access to the underlying function. The following lemmas about the KM algorithm are important to our analysis.
\begin{lemma}[\cite{kushilevitz1993learning}]\label{lem:km}
Given an oracle for $P: \{-1,1\}^n \rightarrow \mathbb{R}$, $\mathsf{KM}(P, \theta)$ returns $Q: \{-1,1\}^n \rightarrow \mathbb{R}$ with $|\mathsf{supp}(Q)| \leq O(L_2(P)^2\theta^{-2})$ and $L_\infty(P-Q) \leq \theta$. The running time is $\mathsf{poly}(n, \theta^{-1}, L_2(P))$. 
\end{lemma}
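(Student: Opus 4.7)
The plan is to implement the divide-and-conquer ``bucket weight'' procedure that isolates the heavy Fourier coefficients of $P$ one coordinate at a time. For $k \in \{0, 1, \ldots, n\}$ and $S \subseteq J_k := \{1, \ldots, k\}$, define the bucket weight
\[
W_S^k \;:=\; \sum_{T \subseteq [n]\,:\, T \cap J_k = S} \hat{P}(T)^2.
\]
Parseval gives $\sum_{S \subseteq J_k} W_S^k = L_2(P)^2$, so at every level at most $L_2(P)^2/\theta^2$ buckets can satisfy $W_S^k \geq \theta^2$. Each bucket at level $k$ splits into two buckets at level $k{+}1$ depending on whether coordinate $k{+}1$ is added to $S$, and any heavy coefficient with $\hat{P}(T)^2 \geq \theta^2$ sits inside a heavy bucket at every level. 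Pruning buckets whose weight is below roughly $\theta^2$ therefore produces a search tree with $O(n L_2(P)^2 \theta^{-2})$ nodes, and at level $n$ each surviving leaf is a singleton $\{T\}$, giving the sparsity bound on the output $Q$.

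The main technical step is estimating $W_S^k$ from queries to $P$. Writing $x = (x_J, x_{\bar J})$ and $g_S(z) := \E_{x_J}[P(x_J, z)\chi_S(x_J)]$, the Fourier expansion of $g_S$ over $\{-1,1\}^{\bar J}$ has coefficients $\hat{P}(S \cup U)$ for $U \subseteq \bar J$, so Parseval applied to $g_S$ gives
\[
W_S^k \;=\; \E_{z}\bigl[g_S(z)^2\bigr] \;=\; \E_{z,\,x^{(1)},\,x^{(2)}}\bigl[P(x^{(1)}, z)\,\chi_S(x^{(1)})\,P(x^{(2)}, z)\,\chi_S(x^{(2)})\bigr],
\]
with $x^{(1)}, x^{(2)}$ drawn independently conditional on $z$. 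This yields an unbiased estimator from two queries per sample. A Hoeffding/Bernstein bound with variance controlled by $L_2(P)^2$ shows that $\mathsf{poly}(n, \theta^{-1}, L_2(P), \log(1/\eta))$ samples drive the additive estimation error below $\theta^2/4$, with a union bound over all $O(n L_2(P)^2 \theta^{-2})$ buckets the algorithm visits.

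The algorithm descends the tree, retaining only buckets whose estimated weight exceeds $\theta^2/2$. At level $n$ each survivor is a singleton $\{T\}$; one estimates $\hat{P}(T)$ directly to accuracy $\theta/2$ and places that estimate into $\hat{Q}(T)$, setting all unseen Fourier coefficients of $Q$ to zero. Two correctness checks close the argument: any $T$ with $|\hat{P}(T)| \geq \theta$ survives every pruning step because its enclosing bucket has weight at least $\theta^2$, so $|\hat{P}(T) - \hat{Q}(T)| \leq \theta/2$; while any $T \notin \mathsf{supp}(Q)$ has $|\hat{P}(T)| \leq \theta$ either because its bucket was correctly pruned or because its own estimated magnitude was small. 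Combining these gives $L_\infty(P - Q) \leq \theta$, and the sparsity bound follows from Parseval applied to the surviving leaves.

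The main obstacle is bookkeeping the sampling error through a tree whose shape is itself a random object. One must set the sample-complexity parameters a priori so that, on a single good event of probability $1-\eta$, every one of the $O(n L_2(P)^2 \theta^{-2})$ bucket-weight estimates is simultaneously within $\theta^2/4$ of its true value. Equally delicate is retaining polynomial dependence on $L_2(P)$ rather than $\|P\|_\infty$ in the sample complexity: the natural bound on the two-query estimator's variance uses $\E[P(x)^2] = L_2(P)^2$, and it is this choice -- rather than a pointwise bound on $P$ -- that produces the claimed $\mathsf{poly}(n, \theta^{-1}, L_2(P))$ running time.
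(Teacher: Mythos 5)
The paper does not prove this lemma; it cites it directly from Kushilevitz--Mansour \cite{kushilevitz1993learning}, so there is no in-paper argument to compare against. Your reconstruction is the standard bucket-splitting (Goldreich--Levin/KM) algorithm: recursing on the prefix-conditioned weights $W_S^k$, pruning via Parseval, and estimating each $W_S^k$ with the two-query trick. The outline of the sparsity bound and the $L_\infty$ guarantee is correct.

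There is, however, a gap in the step where you claim the sample complexity is controlled by $L_2(P)$ rather than $\|P\|_\infty$. The two-query estimator $X = P(x^{(1)},z)\chi_S(x^{(1)})\,P(x^{(2)},z)\chi_S(x^{(2)})$ has range bounded only by $\|P\|_\infty^2$, and its second moment is $\E_z\bigl[(\E_x[P(x,z)^2])^2\bigr]$, which is \emph{not} in general $O(L_2(P)^4)$ --- it can exceed $L_2(P)^4$ by a factor exponential in the number of already-fixed coordinates (consider $P$ supported on a single point). So neither Hoeffding (which needs a range bound) nor Bernstein (which still needs a bound on the summands) yields a polynomial dependence on $L_2(P)$ alone from the argument you give. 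The original KM result is for Boolean $P$, where $\|P\|_\infty = L_2(P) = 1$ and this issue vanishes; and in the paper's actual use inside KMtron the oracle is to $u\circ P - u\circ P_{t-1}$, which is $[-1,1]$-bounded, so the distinction is again moot. But as a free-standing proof of the lemma as stated for arbitrary real-valued $P$, you would need either to assume a sup-norm bound on $P$ or to invoke a more careful truncation/median-of-means argument; asserting that ``variance is controlled by $L_2(P)^2$'' does not close this.
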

\begin{lemma}[\cite{kushilevitz1993learning}]\label{lem:kmsparse}
If $P$ has $L_1(P) \leq k$, then $\mathsf{KM}\left(P, \frac{\epsilon^2}{2k}\right)$ returns $Q$ s.t. $||P-Q||_2 \leq \epsilon$.
\end{lemma}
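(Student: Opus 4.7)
The plan is to apply Parseval's identity and decompose the resulting sum into contributions from $\mathsf{supp}(Q)$ and its complement, controlling each using Lemma \ref{lem:km} together with the hypothesis $L_1(P) \leq k$. Throughout I set $\theta := \epsilon^2/(2k)$, so that Lemma \ref{lem:km} guarantees $L_\infty(P - Q) \leq \theta$ and a small support for $Q$.

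By Parseval,
\[
||P - Q||_2^2 \;=\; \sum_{S \notin \mathsf{supp}(Q)} \hat{P}(S)^2 \;+\; \sum_{S \in \mathsf{supp}(Q)} (\hat{P}(S) - \hat{Q}(S))^2.
\]
For the tail sum, $\hat{Q}(S) = 0$, so $|\hat{P}(S)| = |\widehat{(P-Q)}(S)| \leq \theta$, and hence $\hat{P}(S)^2 \leq \theta\,|\hat{P}(S)|$. Summing and using $L_1(P) \leq k$ bounds the tail contribution by $\theta k$.

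For the head sum, each term is at most $\theta^2$ by the $L_\infty(P-Q) \leq \theta$ bound, yielding a head contribution of at most $|\mathsf{supp}(Q)| \cdot \theta^2$. The key remaining step is to show $|\mathsf{supp}(Q)| \leq k/\theta$. The KM algorithm retains a coefficient $S$ only when its estimate of $|\hat{P}(S)|$ is $\Omega(\theta)$, so every $S \in \mathsf{supp}(Q)$ satisfies $|\hat{P}(S)| \gtrsim \theta$, and consequently $|\mathsf{supp}(Q)| \cdot \theta \leq L_1(P) \leq k$. This makes the head contribution at most $k\theta$, and combining gives $||P - Q||_2^2 \leq 2k\theta = \epsilon^2$, as desired.

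The main obstacle is the support bound $|\mathsf{supp}(Q)| = O(k/\theta)$, since the sparsity guarantee in Lemma \ref{lem:km} is stated as $O(L_2(P)^2/\theta^2)$ which can be substantially looser (one only has $L_2(P)^2 \leq L_1(P)\cdot L_\infty(P) \leq k^2$ in the worst case). If one prefers to use Lemma \ref{lem:km} purely as a black box, the required bound can be obtained by a simple thresholding post-processing step: run KM with parameter $\theta/3$, then discard from its output any coefficient with $|\hat{Q}(S)| < 2\theta/3$. This keeps $L_\infty(P-Q) \leq \theta$ and guarantees every surviving coefficient satisfies $|\hat{P}(S)| \geq \theta/3$, so the $L_1(P) \leq k$ hypothesis immediately yields the needed $O(k/\theta)$ support size.
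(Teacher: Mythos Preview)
The paper does not prove this lemma itself; it is quoted directly from \cite{kushilevitz1993learning} without argument, so there is no in-paper proof to compare against. Your argument is the standard one and is correct: Parseval plus the bound $\hat P(S)^2 \le \theta\,|\hat P(S)|$ on coefficients outside $\mathsf{supp}(Q)$ immediately gives the $\theta k$ tail, and the head is controlled by sparsity.

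The one point you flag is exactly the right one. The head bound needs $|\mathsf{supp}(Q)| = O(k/\theta)$, which is a genuine property of the Kushilevitz--Mansour procedure (it only outputs coefficients whose true magnitude is $\Omega(\theta)$) but is \emph{not} implied by Lemma~\ref{lem:km} as a black box, since that lemma only promises $O(L_2(P)^2/\theta^2)$ sparsity. Your thresholding workaround is a clean way to recover the needed bound while staying black-box; note, though, that it technically proves the conclusion for a slightly modified call (KM with parameter $\theta/3$ followed by pruning) rather than for the literal call $\mathsf{KM}(P,\epsilon^2/(2k))$ in the lemma statement. If you want to match the statement exactly, you do have to appeal to the internal guarantee of KM that every retained coefficient has $|\hat P(S)| \ge c\theta$, which is what the original reference actually establishes. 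Either way, the mathematics is sound and only the constants shift.
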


\noindent \textbf{Projection Operator.}
The projection operator $\proj(P)$ for $P: \{-1,1\}^n \rightarrow \mathbb{R}$ maps $P$ to the closest $Q$ in convex set $K = \{Q: \{-1,1\}^n \rightarrow \mathbb{R}\ |\  L_1(Q) \leq k\}$, \ie $\proj(P) = \argmin_{Q \in K}||Q - P||_2$. \cite{gopalan2008agnostically} show that $\proj$ is simple and easy to compute for sparse polynomials. We use the following lemmas by \cite{gopalan2008agnostically} about the projection operator in our analysis.

\begin{lemma}[\cite{gopalan2008agnostically}]\label{lem:proj1}
Let $P, P^\prime$ be such that $L_\infty(P- P^\prime) \leq \epsilon$. Then $L_\infty(\proj(P) - \proj(P^\prime)) \leq 2 \epsilon$.
\end{lemma}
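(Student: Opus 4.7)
The plan is to exploit the explicit structure of $\proj$ in the Fourier domain. By Parseval, $\|Q-P\|_2^2 = \sum_S (\hat Q(S) - \hat P(S))^2$, so $\proj$ is precisely the Euclidean projection onto the $\ell_1$-ball of radius $k$ in the space of Fourier coefficients. A standard KKT computation shows this projection is soft-thresholding: $\widehat{\proj(P)}(S) = T_{\tau_P}(\hat P(S))$, where $T_\tau(x) := \mathrm{sign}(x)\cdot\max(|x|-\tau,0)$ and $\tau_P\geq 0$ is the unique threshold with $L_1(\proj(P)) = \min(k, L_1(P))$ (so $\tau_P = 0$ iff $L_1(P)\leq k$). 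I would first establish this characterization carefully.

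With the soft-thresholding form in hand, I split the error coefficientwise by the triangle inequality:
\[
|\widehat{\proj(P)}(S) - \widehat{\proj(P')}(S)| \le |T_{\tau_P}(\hat P(S)) - T_{\tau_P}(\hat{P'}(S))| + |T_{\tau_P}(\hat{P'}(S)) - T_{\tau_{P'}}(\hat{P'}(S))|.
\]
The map $x\mapsto T_\tau(x)$ is $1$-Lipschitz, so the first term is at most $|\hat P(S)-\hat{P'}(S)| \le L_\infty(P-P') \le \epsilon$. The map $\tau\mapsto T_\tau(x)$ is also $1$-Lipschitz (a routine case check), so the second term is at most $|\tau_P-\tau_{P'}|$. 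Hence the lemma reduces to showing $|\tau_P - \tau_{P'}| \le \epsilon$.

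To control the thresholds, introduce $f_P(\tau) := \sum_S \max(|\hat P(S)|-\tau, 0)$, which is continuous, piecewise linear, and non-increasing. The coefficientwise bound $|\hat P(S)|\le |\hat{P'}(S)|+\epsilon$ yields the sandwich
\[
f_P(\tau+\epsilon) \;\le\; f_{P'}(\tau) \;\le\; f_P(\tau-\epsilon),
\]
valid wherever the arguments are non-negative. When $L_1(P), L_1(P') > k$ so that $f_P(\tau_P) = f_{P'}(\tau_{P'}) = k$, monotonicity of $f_P$ applied to this sandwich immediately gives $|\tau_P-\tau_{P'}|\le \epsilon$. The remaining case to dispatch is when one of $P,P'$ lies inside the $L_1$-ball and the other does not; say $L_1(P)\le k < L_1(P')$ so $\tau_P=0$. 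Then $f_{P'}(\epsilon) \le L_1(P) \le k = f_{P'}(\tau_{P'})$, and monotonicity of $f_{P'}$ forces $\tau_{P'}\le \epsilon = |\tau_P - \tau_{P'}|$. The trivial case (both inside) gives $\proj(P)=P, \proj(P')=P'$ and the bound is immediate.

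The main obstacle, and what makes the constant $2$ rather than $1$, is precisely this threshold-comparison step: the projection is not coordinatewise $1$-Lipschitz in $L_\infty$ because shifting an input can shift the threshold $\tau$ as well. Isolating the two sources of perturbation—input shift and threshold shift—and carefully handling the boundary cases where an input crosses the $L_1$-ball boundary is where care is needed; everything else is a direct Lipschitz computation.
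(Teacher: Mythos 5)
The paper does not include a proof of this lemma; it is cited verbatim from Gopalan--Kalai--Klivans~\cite{gopalan2008agnostically}, and the proof there is precisely the soft-thresholding argument you give (characterize the Euclidean projection onto the $L_1$-ball as coefficientwise soft-thresholding at a $P$-dependent level $\tau_P$, split the perturbation into an input shift and a threshold shift, and bound $|\tau_P-\tau_{P'}|\le\epsilon$). Your write-up is correct and, if anything, more explicit than the original about why the thresholds are close: the function $f_P(\tau)=\sum_S\max(|\hat P(S)|-\tau,0)$ is continuous and strictly decreasing on $[0,\max_S|\hat P(S)|)$, so the sandwich $f_P(\tau+\epsilon)\le f_{P'}(\tau)\le f_P(\tau-\epsilon)$ pins down $\tau_{P'}$ to within $\epsilon$ of $\tau_P$ once you fix the level $k$; your three-case split (both inside, both outside, mixed) covers exactly the edge cases where one projection is the identity. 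In short, this is the same route as the source, carried out carefully; the decomposition into the two $1$-Lipschitz maps $x\mapsto T_\tau(x)$ and $\tau\mapsto T_\tau(x)$ is exactly where the factor of $2$ comes from.
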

\begin{lemma}[\cite{gopalan2008agnostically}]\label{lem:proj2}
Let $P, P^\prime$ be such that $L_\infty(P- P^\prime) \leq \epsilon$. Then $||\proj(P) - \proj(P^\prime)||_2 \leq 2\sqrt{\epsilon k}$.
\end{lemma}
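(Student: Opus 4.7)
The plan is to reduce the $L_2$ bound to a combined $L_\infty$/$L_1$ bound on the difference $R := \proj(P) - \proj(P')$ viewed as a function on $\{-1,1\}^n$, and then invoke Parseval's identity. The starting observation is that the previous lemma (Lemma \ref{lem:proj1}) already provides the crucial $L_\infty$ (Fourier-coefficient) control: every Fourier coefficient of $R$ has magnitude at most $2\epsilon$, i.e.\ $L_\infty(R) \leq 2\epsilon$. So all the work goes into showing that $R$ is also sparse-in-$L_1$.

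First I would exploit the fact that the convex set $K$ onto which $\proj$ projects is precisely the $L_1$-ball of radius $k$ in Fourier space. Hence both $\proj(P)$ and $\proj(P')$ lie in $K$, giving $L_1(\proj(P)) \leq k$ and $L_1(\proj(P')) \leq k$. By the triangle inequality for the coefficient norm $L_1$, this yields $L_1(R) \leq 2k$. So $R$ has every Fourier coefficient bounded by $2\epsilon$ and the sum of absolute coefficients bounded by $2k$.

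Next I would combine these two controls via Parseval. Writing
\[
\|R\|_2^2 \;=\; \sum_S \hat R(S)^2 \;\leq\; \Bigl(\max_S |\hat R(S)|\Bigr)\cdot \sum_S |\hat R(S)| \;=\; L_\infty(R)\cdot L_1(R) \;\leq\; (2\epsilon)(2k) \;=\; 4\epsilon k,
\]
and taking square roots gives the claimed bound $\|\proj(P)-\proj(P')\|_2 \leq 2\sqrt{\epsilon k}$.

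There is essentially no technical obstacle here once the previous lemma is available; the only conceptual point to get right is that the set $K$ is defined by an $L_1$ constraint on Fourier coefficients, which is exactly the quantity that pairs with $L_\infty$ to yield an $L_2$ bound via Parseval. If I had to anticipate a hidden subtlety, it would be making sure that $L_1(\cdot)$ is indeed a norm on Fourier coefficients so that the triangle inequality $L_1(A-B) \leq L_1(A)+L_1(B)$ is legitimate — this follows immediately from the definition $L_1(P) = \sum_S |\hat P(S)|$ and linearity of the Fourier transform.
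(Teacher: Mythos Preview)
The paper does not supply its own proof of this lemma; it is quoted from \cite{gopalan2008agnostically} without argument. Your proof is correct and is precisely the intended one: invoke Lemma~\ref{lem:proj1} for the $L_\infty$ bound on $R=\proj(P)-\proj(P')$, use membership in $K$ for the $L_1$ bound, and combine via $\|R\|_2^2 \le L_\infty(R)\,L_1(R)$.
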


\noindent \textbf{KMtron.}
The algorithm KMtron is as follows:

\begin{algorithm}[H]
\caption{KMtron}\label{kmtron}
    \SetKwInOut{Input}{Input}
    \SetKwInOut{Output}{Output}
    \Input{Function $u: \mathbb{R} \rightarrow [0,1]$ non-decreasing and $L$-Lipschitz, query access to $u \circ P$ for some function $P: \{-1,1\}^n \rightarrow \mathbb{R}$, learning rate $\lambda \in (0,1]$, number of iterations $T$, error parameter $\theta$}
    $P_0 = 0$\\
    \For{$t = 1, \ldots, T$}{
    $P_t^\prime := P_{t-1} + \lambda \mathsf{KM}(u \circ P - u \circ P_{t-1}, \theta)$\\
    $P_t = \mathsf{KM}(\proj(P_t^\prime), \theta)$
    }
    \Output{Return $u \circ P_t$ where $P_t$ is the best over $t = 1, \cdots, T$}
\end{algorithm}

To efficiently run KMtron, we require efficient query access to $u \circ P - u \circ P_{t-1}$. Since $P_{t-1}$ is stored as a sparse polynomial, and we are given query access for $u \circ P$, we can efficiently compute $u(P(x)) - u(P_{t-1}(x))$ for any $x$. We can extend the algorithm to handle distribution queries (p-concept), \ie for any $x$ of our choosing we obtain a sample of $y$ where $\E[y|x] = u(P(x))$. \cite{gopalan2008agnostically} (c.f. Appendix A.1) observed that using distribution queries instead of function queries to the conditional mean is equivalent as long as the number of queries is polynomial.

The following theorem proves the correctness of KMtron.
\begin{theorem}\label{thm:kmtron}
For any non-decreasing $L$-Lipschitz $u:\mathbb{R}\rightarrow [0,1]$ and function $P: \{-1,1\}^n \rightarrow \mathbb{R}$ such that $L_1(P) \leq k$, given query access to $u \circ P$, KMtron run with $\lambda = \frac{\epsilon}{2L}, T = \frac{2k^2L^2}{\epsilon^2}$ and $\theta \leq \frac{C^\prime\epsilon^4}{L^4 k^3}$ for sufficiently small constant $C^\prime >0$ and outputs $P^*$ such that $\E_{x \in \{-1,1\}^n}[(u(P(x)) - u(P^*(x)))^2] \leq \epsilon$. The runtime of KMtron is $\mathsf{poly}(n, k, L, 1/\epsilon)$.
\end{theorem}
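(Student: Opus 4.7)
The plan is to mimic the Isotron-style potential-function analysis used for Theorem~\ref{thm:alpha}, but with two sources of slack: the update direction $\hat g_t := \mathsf{KM}(u\circ P - u\circ P_t, \theta)$ only approximates the true residual $g_t := u\circ P - u\circ P_t$ in the Fourier $L_\infty$ sense, whereas the iterate $P_t$ is a KM-compressed copy of the actually-projected $Q_t := \proj(P_t')$, which is what actually lies in $K = \{Q : L_1(Q) \leq k\}$. I take the potential to be $\Phi_t := ||P - Q_t||_2^2$, noting that $\Phi_0 = ||P||_2^2 \leq L_1(P)^2 \leq k^2$ and that, by Lemma~\ref{lem:kmsparse} applied to the $L_1$-bounded $Q_t$, $||Q_t - P_t||_2 \leq \sqrt{2k\theta}$. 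Using $||g_t||_\infty \leq 1$ together with the support bound from Lemma~\ref{lem:km}, a short Parseval argument also gives $||\hat g_t||_2 = O(1)$.

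For the one-step inequality, non-expansiveness of $\proj$ and $P \in K$ yield $||Q_{t+1} - P||_2 \leq ||Q_t + \lambda\hat g_t - P||_2 + ||P_t - Q_t||_2$; squaring while absorbing the cross term using $||Q_t - P||_2 \leq 2k$ gives
\[
  \Phi_{t+1} \leq \Phi_t - 2\lambda\,\langle P - Q_t,\,\hat g_t\rangle + \lambda^2 ||\hat g_t||_2^2 + O(k\sqrt{k\theta}).
\]
The inner product is the crux: splitting it as $\langle P - P_t, g_t\rangle + \langle P_t - Q_t, g_t\rangle + \langle P - Q_t, \hat g_t - g_t\rangle$, I would bound the first piece below by $\tfrac{1}{L}||g_t||_2^2$ via the pointwise monotone-Lipschitz inequality $(a-b)(u(a)-u(b)) \geq (u(a)-u(b))^2/L$, bound the second by $\sqrt{2k\theta}\cdot ||g_t||_2 \leq \sqrt{2k\theta}$ via Cauchy--Schwarz using the first preliminary estimate, and bound the third by $L_1(P - Q_t)\cdot L_\infty(\hat g_t - g_t) \leq 2k\theta$ via Parseval and H\"older---this last step is exactly where the KM Fourier-$L_\infty$ guarantee is essential, and crucially uses the $L_1$-bound on $Q_t$ rather than on the (possibly large-$L_1$) $P_t$.

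Telescoping from $t=0$ to $T-1$ and using $\Phi_T \geq 0$ gives
\[
  \min_{t}||g_t||_2^2 \leq \frac{Lk^2}{2\lambda T} + O(L\lambda) + O(L\sqrt{k\theta}) + \tfrac{L}{\lambda}\cdot O(k\sqrt{k\theta}).
\]
Plugging in $\lambda = \epsilon/(2L)$ and $T = 2k^2L^2/\epsilon^2$ makes the first two terms $O(\epsilon)$; the last becomes $O(L^2 k\sqrt{k\theta}/\epsilon)$, which collapses to $O(\epsilon)$ exactly when $\theta \leq C'\epsilon^4/(L^4 k^3)$, matching the theorem's hypothesis. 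Since $||g_t||_2^2 = \E_{x}[(u(P(x))-u(P_t(x)))^2]$, selecting the best iterate via standard Monte Carlo estimation of the squared error (we have query access to $u\circ P$) yields the claimed bound. I expect the main technical nuisance to be exactly this balancing act: KM delivers a Fourier-$L_\infty$ approximation of residuals, whereas projection plus Lemma~\ref{lem:kmsparse} delivers a function-$L_2$ approximation of iterates, and reconciling them through the uniform $L_1$-bound on $Q_t$ is what forces $\theta$ to shrink like $\epsilon^4/k^3$. The runtime is $T$ calls to KM at parameter $\theta$, each costing $\mathsf{poly}(n,1/\theta)$, giving overall $\mathsf{poly}(n,k,L,1/\epsilon)$ as claimed.
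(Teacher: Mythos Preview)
Your proof is correct and follows essentially the same potential-function strategy as the paper: track an $L_2$ distance to $P$, use the monotone--Lipschitz inequality $(a-b)(u(a)-u(b))\geq (u(a)-u(b))^2/L$ for the progress term, and absorb the KM errors as an additive $O(k\sqrt{k\theta})$ slack that forces $\theta\leq C'\epsilon^4/(L^4k^3)$.

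The one substantive difference is in \emph{where} the gradient-approximation error is handled. The paper introduces an auxiliary iterate $Q_{t+1}'=P_t+\lambda g_t$ built from the \emph{exact} residual, and then invokes the projection-stability Lemma~\ref{lem:proj2} to convert $L_\infty(P_{t+1}'-Q_{t+1}')\leq\theta$ into $\|\proj(P_{t+1}')-\proj(Q_{t+1}')\|_2\leq 2\sqrt{\theta k}$; this lets it expand the square directly against $g_t$ with no inner-product decomposition needed. You instead keep the approximate residual $\hat g_t$ in the update, take the potential on $Q_t=\proj(P_t')$, and split $\langle P-Q_t,\hat g_t\rangle$ into three pieces, controlling the $\langle P-Q_t,\hat g_t-g_t\rangle$ piece via the Fourier H\"older bound $L_1(P-Q_t)\cdot L_\infty(\hat g_t-g_t)\leq 2k\theta$. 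Your route avoids Lemma~\ref{lem:proj2} entirely (and is arguably more transparent about exactly why the $L_1$-projection is needed), while the paper's route avoids the three-way decomposition and the separate $\|\hat g_t\|_2=O(1)$ estimate. Both arrive at the identical slack and parameter choices.
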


\begin{corollary} Let $P_i$ be such that $L_1(P_i) \leq k$ for $i \in [s]$. If we have query access to $y$ for all $x$ such that $\E[y|x] = u\left(\frac{1}{s}\sum_{i=1}^s P_i\right)$ for non-decreasing $L$-Lipschitz $u$, then using the above, we can learn the conditional mean function in time $\mathsf{poly}(n, k, L, 1/\epsilon)$ with respect to the uniform distribution on $\{-1,1\}^n$. 
\end{corollary}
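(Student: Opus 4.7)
The plan is to reduce this directly to Theorem \ref{thm:kmtron} by exhibiting a single function $P$ of small $L_1$-norm whose composition with $u$ is exactly the target conditional mean. Define
\[
P(x) := \frac{1}{s}\sum_{i=1}^s P_i(x),
\]
so that by hypothesis $\E[y\mid x] = u(P(x))$. By linearity of the Fourier transform, $\hat{P}(S) = \frac{1}{s}\sum_{i=1}^s \hat{P_i}(S)$, and hence by the triangle inequality applied coefficient-wise,
\[
L_1(P) \;\leq\; \frac{1}{s}\sum_{i=1}^s L_1(P_i) \;\leq\; \frac{1}{s}\cdot s k \;=\; k.
\]
Thus $P$ satisfies the hypothesis $L_1(P)\leq k$ required by Theorem \ref{thm:kmtron}.

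Next I would address the query model. KMtron as stated assumes exact query access to $u \circ P$, but here we only have distribution queries, namely the ability to sample $y$ with $\E[y\mid x] = u(P(x))$ at any chosen $x$. As noted in the paragraph preceding Theorem \ref{thm:kmtron} (citing the appendix of Gopalan--Kalai--Klivans), distribution queries suffice in place of function-value queries at polynomial overhead, since the KM subroutine only needs estimates of Fourier correlations that can be produced by averaging samples of $y$ at chosen inputs, and likewise for $u \circ P_{t-1}$ (which the algorithm stores explicitly as a sparse polynomial and can evaluate directly). This is the only nontrivial step; the rest is bookkeeping.

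Finally, invoke Theorem \ref{thm:kmtron} on $u$ and $P$ with the prescribed settings of $\lambda, T, \theta$ to obtain $P^\ast$ with
\[
\E_{x\in\{-1,1\}^n}\!\left[(u(P(x)) - u(P^\ast(x)))^2\right] \leq \epsilon,
\]
which is precisely the desired bound on the conditional-mean function. The runtime is $\mathsf{poly}(n,k,L,1/\epsilon)$ as in Theorem \ref{thm:kmtron}, with at most a polynomial blow-up from replacing exact queries to $u\circ P$ by sample averages; the dependence on $s$ does not appear because it was absorbed into the $L_1$-norm bound on $P$. The main (minor) obstacle is just the careful verification that the sample-based simulation of $u\circ P$-queries fits within KMtron's error tolerance $\theta$, which follows from standard Hoeffding-type concentration given that queries are made at a polynomial number of points.
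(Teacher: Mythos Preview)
Your proof is correct and follows exactly the paper's approach: the paper's entire argument is the single observation that $L_1\!\left(\frac{1}{s}\sum_{i=1}^s P_i\right)\leq k$, after which Theorem \ref{thm:kmtron} applies directly, with the distribution-query issue handled by the remark preceding the theorem. Your write-up simply spells out these steps in more detail.
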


Observe that the complexity bounds are \emph{independent of the number of terms}. This follows from the fact that $L_1\left(\frac{1}{s}\sum_{i=1}^s P_i\right) \leq k$.  This leads to the following new learning result for DNF formulas: fix a DNF $f$ and let $\mathsf{frac}(f(x))$ denote the fraction of terms of $f$ satisfied by $x$.  Fix monotone, $L$-Lipschitz function $u$.  For uniformly chosen input $x$, label $y$ is equal to $1$ with probability $u(\mathsf{frac}(f(x)))$.  Then in time polynomial in $n$, $1/\epsilon$, and $L$, KMtron outputs a hypothesis $h$ such that $\E[(h(x) - u(\mathsf{frac}(f(x))))^2] \leq \epsilon$ (recall $L_1 (\mathsf{AND}) = 1$).  Note that the running time has no dependence on the number of terms.


As an easy corollary, we also obtain a simple (no Boosting required) polynomial time query-algorithm for learning DNFs under the uniform distribution\footnote{Feldman \cite{FeldmanSQ} was the first to obtain a query-algorithm for PAC learning DNF formulas with respect to the uniform distribution that did not require a Boosting algorithm.}:

\begin{corollary} \label{cor:dnf}
Let $f$ be a DNF formula from $\{-1,1\}^n \rightarrow \{0,1\}$ with $s$ terms. Then $f$ is PAC learnable under the uniform distribution using membership queries in time $\mathsf{poly}(n, s, 1/\epsilon)$.
\end{corollary}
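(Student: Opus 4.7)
My plan is to reduce PAC learning a DNF formula $f$ with $s$ terms under the uniform distribution to an application of Theorem \ref{thm:kmtron} (KMtron), by exhibiting $f$ itself as a monotone, Lipschitz combination of an $L_1$-bounded function.

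First, I would write the DNF as $f(x) = \mathbf{1}\bigl[\mathsf{frac}(f(x)) > 0\bigr]$, where $\mathsf{frac}(f(x)) = \frac{1}{s}\sum_{i=1}^{s} T_i(x)$ and $T_i$ is the $\{0,1\}$-valued indicator of the $i$-th term. Set $P(x) := \mathsf{frac}(f(x))$. A term $T_i$ is a conjunction of $k_i$ literals, and its $\pm 1$ Fourier expansion $\prod_{j}(1 \pm x_{i_j})/2$ has $L_1$-norm exactly $1$; hence by the triangle inequality $L_1(P) \le \frac{1}{s}\sum_{i=1}^s L_1(T_i) \le 1$, so we may take the $L_1$-bound parameter $k=1$ in Theorem \ref{thm:kmtron}. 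The key observation is then that if $f(x)=0$ then $P(x)=0$, while if $f(x)=1$ then $P(x) \ge 1/s$.

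Next, I would define the piecewise-linear ramp
\[
u_s(z) \;=\; \min\!\bigl(1,\ \max(0,\ s\cdot z)\bigr),
\]
which is non-decreasing, maps $\mathbb{R}$ into $[0,1]$, and is $s$-Lipschitz. By the preceding dichotomy, $f(x) = u_s(P(x))$ for every $x \in \{-1,1\}^n$, and membership queries for $f$ give us membership queries for $u_s\circ P$ (they are literally the same oracle, modulo the known function $u_s$). Applying Theorem \ref{thm:kmtron} with $u = u_s$, $L = s$, $k = 1$, and target square-loss $\epsilon'$ produces in time $\mathsf{poly}(n, s, 1/\epsilon')$ a hypothesis $h = u_s \circ P^\ast$ with $\mathbb{E}_{x}[(h(x) - f(x))^2] \le \epsilon'$.

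Finally, to convert this square-loss guarantee into the PAC $0/1$-loss guarantee, I would round: let $\tilde{h}(x) = \mathbf{1}[h(x) \ge 1/2]$. Since $f$ is $\{0,1\}$-valued, any error of $\tilde{h}$ forces $(h(x) - f(x))^2 \ge 1/4$, so by Markov's inequality $\Pr[\tilde{h}(x) \ne f(x)] \le 4\,\mathbb{E}[(h-f)^2] \le 4\epsilon'$. Taking $\epsilon' = \epsilon/4$ yields a PAC hypothesis with error at most $\epsilon$ in time $\mathsf{poly}(n, s, 1/\epsilon)$, as desired. There is no real obstacle here beyond choosing the Lipschitz constant $L=s$ correctly; the main point is that the $L_1$-bound for $P$ is independent of $s$, so only the Lipschitz constant (and not the support size of the Fourier approximation) absorbs the $s$-dependence, matching the target running time.
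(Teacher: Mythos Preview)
Your proposal is correct and essentially the same as the paper's proof: the paper writes $f = u \circ T$ where $T = \sum_{i=1}^s T_i$ has $L_1(T)\le s$ and $u$ is the $1$-Lipschitz ramp on $[0,1]$, whereas you use the rescaled $P = T/s$ with $L_1(P)\le 1$ and the $s$-Lipschitz ramp $u_s$. Since KMtron runs in $\mathsf{poly}(n,k,L,1/\epsilon)$, it is immaterial whether the $s$-dependence sits in $k$ or in $L$; the thresholding-plus-Markov conversion to $0/1$ loss is identical to the paper's.
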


\subsection{Extending the ``Low-Degree'' Algorithm}

Here we show that Alphatron can be used to learn any smooth, monotone combination of function classes that are approximated by {\em low-degree} polynomials (our other results require us to take advantage of {\em low-weight} approximations).  

\begin{definition}
For a class of functions ${\cal C}$, let $u({\cal C})$ denote monotone function $u$ applied to a linear combination of (polynomially many) functions from ${\cal C}$.
\end{definition}

For the domain of $\{-1,1\}^n$ and degree parameter $d$, our algorithm will incur a sample complexity and running time factor of $n^{d}$, so the ``kernel trick'' is not necessary (we can work explicitly in the feature space).  The main point is that using isotonic regression (as opposed to the original ``low-degree'' algorithm due to Linial, Mansour and Nisan \cite{LMN93}), we can learn $u({\cal C})$ for any smooth, monotone $u$ and class ${\cal C}$ that has low-degree Fourier approximations (we also obtain non-i.i.d. noise tolerance for these classes due to the definition of the probabilistic concept model).  While isotonic regression has the flavor of a boosting algorithm, we do not need to change the underlying distribution on points or add noise to the labels, as all boosting algorithms do.  

\begin{definition}{$(\epsilon, d)$-Fourier concentration}
A function $f:\{-1,1\}^n \rightarrow \mathbb{R}$ is said to be $(\epsilon, d)$-Fourier concentrated if $\sum_{S:|S| > d} \hat{f}(S)^2 \leq \epsilon^2$ where $\hat{f}(S)$ for all $S \subseteq [n]$ are the discrete Fourier coefficients of $f$.
\end{definition}

\begin{theorem} \label{thm:four}
Consider distribution $\mathcal{D}$ on $\{-1,1\}^n\times[0,1]$ whose marginal is uniform on $\{-1,1\}^n$ and $\E[y|\textbf{x}] = u(f(\textbf{x}))$ for some known non-decreasing $L$-Lipschitz $u: \mathbb{R} \rightarrow [0,1]$ and $f:\{-1,1\}^n \rightarrow [-M,M]$. If $f$ is $(\epsilon, d)$-Fourier concentrated then there exists an algorithm that draws $m$ iid samples from $\mathcal{D}$ and outputs hypothesis $h$ such that with probability $1- \delta$, $\varepsilon(h) \leq O(L \epsilon)$ for $m = \mathsf{poly}(n^d, M, \epsilon, \log(1/\delta))$ in time $\mathsf{poly}(n^d, M, \epsilon, \log(1/\delta))$.
\end{theorem}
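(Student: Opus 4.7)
The plan is to reduce Theorem \ref{thm:four} to Theorem \ref{thm:main1} by exhibiting an $(\epsilon, B, M')$-approximation of $f$ in the RKHS of the (normalized) multinomial kernel of degree $d$. The Fourier concentration hypothesis is exactly what tells us that the low-degree truncation of $f$ lies close, in $L_2$, to $f$, and this truncation is a polynomial of degree at most $d$, hence a natural element of that RKHS.

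First I would define $f_{\leq d}(\textbf{x}) := \sum_{|S|\leq d}\hat{f}(S)\chi_S(\textbf{x})$. By Parseval and $(\epsilon, d)$-Fourier concentration, $\E[(f(\textbf{x})-f_{\leq d}(\textbf{x}))^2] = \sum_{|S|>d}\hat{f}(S)^2 \leq \epsilon^2$, giving the mean-square part of the approximation. Next, I would use the feature map $\psi(\textbf{x}) = \frac{1}{\sqrt{N}}(\chi_S(\textbf{x}))_{|S|\leq d}$ with $N=\sum_{k=0}^d\binom{n}{k}=O(n^d)$, so that $\|\psi(\textbf{x})\|=1$ and the corresponding kernel satisfies $\mathcal{K}(\textbf{x},\textbf{x}')\leq 1$. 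Setting $v_S = \sqrt{N}\hat{f}(S)$ gives $\langle \textbf{v},\psi(\textbf{x})\rangle = f_{\leq d}(\textbf{x})$ with $\|\textbf{v}\|^2 = N\sum_{|S|\leq d}\hat{f}(S)^2 \leq N\cdot\E[f^2]\leq NM^2$, yielding $B = O(Mn^{d/2})$. For the uniform bound, $|f(\textbf{x})-f_{\leq d}(\textbf{x})|\leq M+\sum_{|S|\leq d}|\hat{f}(S)| \leq M+\sqrt{N}\cdot\sqrt{\sum_S\hat{f}(S)^2}\leq M+M\sqrt{N}$ by Cauchy--Schwarz, giving $M' = O(Mn^{d/2})$. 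Thus $f$ is $(\epsilon, B, M')$-approximated by $\mathcal{K}$ over the uniform distribution.

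Plugging these parameters into Theorem \ref{thm:main1} yields a hypothesis $h$ with $\varepsilon(h)\leq O(L\epsilon)$, sample complexity $O((LM'/\epsilon)^4 + (BL/\epsilon)^2)\log(1/\delta) = \mathsf{poly}(n^d, M, L, 1/\epsilon, \log(1/\delta))$, and running time polynomial in $m$ and $n^d$ (since each kernel evaluation is computable in $\mathsf{poly}(n^d)$ time, either via the explicit normalized sum of characters or by working directly in the $n^d$-dimensional feature space, as the theorem statement already permits $n^d$ dependence).

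The main obstacle is that the low-degree truncation $f_{\leq d}$ is \emph{not} pointwise bounded by $M$, so the ``uniform error'' parameter $M$ in Definition \ref{def_epsM} cannot be taken equal to the pointwise bound on $f$ itself. The Cauchy--Schwarz route above is crude but suffices because we are already paying $\mathsf{poly}(n^d)$ in both sample complexity and running time. A secondary, minor, point is bookkeeping the normalization so that $\mathcal{K}(\textbf{x},\textbf{x}')\leq 1$ as required by Theorem \ref{thm:main1}; the factor of $1/\sqrt{N}$ in $\psi$ is absorbed into $B$ via $v_S = \sqrt{N}\hat{f}(S)$ and contributes no additional complication.
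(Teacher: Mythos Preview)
Your proposal is correct and takes essentially the same approach as the paper: both define the low-degree truncation $f_{\leq d}$, verify the $L_2$ error via Parseval and Fourier concentration, bound $\|\textbf{v}\|$ via $\sum_{|S|\leq d}\hat f(S)^2\le \E[f^2]\le M^2$, crudely bound the pointwise error, and then invoke Theorem~\ref{thm:main1} after normalizing the kernel. Your Cauchy--Schwarz step gives the slightly sharper uniform bound $M'=O(Mn^{d/2})$ where the paper uses $O(Mn^d)$, and you handle the normalization explicitly rather than deferring it, but these are cosmetic differences that do not affect the final $\mathsf{poly}(n^d, M, 1/\epsilon, \log(1/\delta))$ conclusion.
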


The above can be generalized to linear combinations of Fourier concentrated functions using the following lemma.
\begin{lemma} \label{lem:four}
Let $f = \sum_{i=1}^k a_i f_i$ where $f_i:\{-1,1\}^n \rightarrow \mathbb{R}$ and $a_i \in \mathbb{R}$ for all $i$. If for all $i$, $f_i$ is $(\epsilon_i, d_i)$-Fourier concentrated, then $f$ is $(\epsilon,d)$-Fourier concentrated for $\epsilon = \sqrt{\left(\sum_{i=1}^k a_i^2 \right)\left(\sum_{j=1}^k \epsilon_j^2\right)}$ and $d = \max_i d_i$.
\end{lemma}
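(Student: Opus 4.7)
The plan is to exploit the linearity of the Fourier transform together with a single application of Cauchy--Schwarz. Since Fourier coefficients are linear in the function, we immediately get
\[
\hat{f}(S) \;=\; \sum_{i=1}^k a_i\, \hat{f}_i(S) \qquad \text{for every } S\subseteq[n].
\]
The quantity we need to control is $\sum_{|S|>d} \hat{f}(S)^2$, and the hypothesis gives us a tail bound at level $d_i$ for each individual $f_i$, so the natural move is to push the sum inside.

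First I would apply Cauchy--Schwarz coordinatewise in $i$: for each fixed $S$,
\[
\hat{f}(S)^2 \;=\; \Bigl(\sum_{i=1}^k a_i\, \hat{f}_i(S)\Bigr)^{\!2} \;\leq\; \Bigl(\sum_{i=1}^k a_i^2\Bigr)\Bigl(\sum_{i=1}^k \hat{f}_i(S)^2\Bigr).
\]
Then I would sum both sides over all $S$ with $|S|>d$ and swap the two finite sums on the right, obtaining
\[
\sum_{|S|>d} \hat{f}(S)^2 \;\leq\; \Bigl(\sum_{i=1}^k a_i^2\Bigr)\,\sum_{i=1}^k \sum_{|S|>d} \hat{f}_i(S)^2.
\]
Because $d = \max_i d_i \geq d_i$, the inner tail at level $d$ is no larger than the inner tail at level $d_i$, so the $(\epsilon_i, d_i)$-Fourier concentration of $f_i$ yields $\sum_{|S|>d} \hat{f}_i(S)^2 \leq \sum_{|S|>d_i} \hat{f}_i(S)^2 \leq \epsilon_i^2$. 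Plugging this back in gives
\[
\sum_{|S|>d} \hat{f}(S)^2 \;\leq\; \Bigl(\sum_{i=1}^k a_i^2\Bigr)\Bigl(\sum_{i=1}^k \epsilon_i^2\Bigr) \;=\; \epsilon^2,
\]
which is exactly the claimed $(\epsilon,d)$-Fourier concentration of $f$.

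There is really no hard step here; the only subtlety is ensuring the quantifiers line up, namely that we can uniformly use $d = \max_i d_i$ so that every individual tail bound remains valid. The rest is a one-line Cauchy--Schwarz and a double-sum swap, so I expect no genuine obstacle.
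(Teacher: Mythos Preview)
Your proposal is correct and essentially identical to the paper's own proof: both use linearity of the Fourier transform, apply Cauchy--Schwarz pointwise in $S$, swap the finite sums, and then invoke $\{S:|S|>d\}\subseteq\{S:|S|>d_i\}$ (equivalently, $d\geq d_i$) to appeal to each $f_i$'s concentration bound. There is no substantive difference in approach.
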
 

Many concept classes are known to be approximated by low-degree Fourier polynomials.  Combining Theorem \ref{thm:four} and Lemma \ref{lem:four}, we immediately obtain the following learning results in the probabilistic concept model whose running time matches or improves their best-known PAC counterparts:

\begin{itemize} 

\item $u(\mathsf{AC^0}$), generalizing majorities of constant depth circuits \cite{JKS}.

\item $u(\mathsf{LTF}$), generalizing majorities of linear threshold functions \cite{KKMS}.

\item $u(\mathsf{SM})$, generalizing submodular functions \cite{CKKL}.

\end{itemize}

As a further application, we can learn majorities of $k$ halfspaces with respect to the uniform distribution in time $n^{O(k^2/\epsilon^2)}$ for any $\epsilon > 0$ (choose $a$ with each entry $1/k$ and let $u$ smoothly interpolate majority with Lipschitz constant $k$).  This improves on the best known bound of $n^{O(k^4/\epsilon^2)}$ \cite{KKMS}\footnote{Recent work due to Kane \cite{Kane14} does not apply to majorities of halfspaces, only intersections.}.

Using the fact that the $\mathsf{AND}$ function has a {\em uniform} approximator of degree $O(\sqrt{n} \log(1/\epsilon))$ \cite{Paturi}, we immediately obtain a $2^{\tilde{O}(\sqrt{n})}$ time algorithm for {\em distribution-free} learning of $u(\mathsf{AND})$ in the probabilistic concept model (this class includes the set of all polynomial-size DNF formulas).  The problem of generalizing the $2^{\tilde{O}(n^{1/3})}$-time algorithm for distribution-free PAC learning of DNF formulas due to Klivans and Servedio \cite{KS01} remains open.

\subsection{Learning Monotone Functions of Halfspaces with a Margin as
  Probabilistic Concepts}
In this section we consider the problem of learning a smooth combining function $u$ of $k$ halfspaces with a margin $\rho$.  We assume that all examples lie on the unit ball $\S^{n-1}$ and that for each weight vector $w$, $||\textbf{w}|| = 1$.  For simplicity we also assume each halfspace is origin-centered, i.e. $\theta = 0$ (though our techniques easily handle the case of nonzero $\theta$).  

\begin{theorem}\label{thm:fhs}
Consider samples $(\textbf{x}_\textbf{i},y_i)_{i=1}^m$ drawn iid from distribution $\mathcal{D}$ on $\S^{n-1} \times [0,1]$ such that $\E[y|\textbf{x}] = u\left(\sum_{i=1}^t \textbf{a}_i h_i(\textbf{x})\right)$ where $u : \mathbb{R}^n \rightarrow [0,1]$ is a $L$-Lipschitz non-decreasing function, $h_i$ are origin-centered halfspaces with margin $\rho$ on $\mathcal{X}$ and $||\textbf{a}||_1 = A$.  There exists an algorithm that outputs a hypothesis $h$ such that with probability $1-\delta$, 
\[
\E_{\textbf{x}, y \sim \mathcal{D}} \left[\left(h(\textbf{x}) - u\left(\sum_{i=1}^t \textbf{a}_i h_i(\textbf{x})\right)\right)^2 \right]\leq \epsilon
\]
for $m= \left(\frac{LA}{\epsilon}\right)^{O(1/\rho)}\log(1/\delta)$.  The algorithm runs in time polynomial in $m$ and $n$.
\end{theorem}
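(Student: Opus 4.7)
The plan is to apply the uniform-approximation version of our general learning result (Theorem \ref{thm:main}) by exhibiting an RKHS element whose inner product with $\psi(\textbf{x})$ tracks $f(\textbf{x}) := \sum_{i=1}^t \textbf{a}_i h_i(\textbf{x})$ pointwise on $\S^{n-1}$, with carefully controlled Hilbert-space norm. The kernel of choice is the multinomial kernel $\mathcal{MK}_d(\textbf{x},\textbf{x}') = \sum_{j=0}^d (\textbf{x}\cdot\textbf{x}')^j$ from \cite{goel2016reliably}, whose RKHS contains all polynomials of degree $\le d$ and whose norm relates cleanly to squared coefficient magnitudes, and which is bounded above by $1$ on the unit sphere (after an appropriate normalization).

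First I would handle a single halfspace $h_i(\textbf{x}) = \mathsf{sign}(\textbf{w}_i\cdot\textbf{x})$ with $\|\textbf{w}_i\|=1$ and margin $\rho$. Using the classical Chebyshev-based construction (as in the margin-halfspace results of \cite{SSSS} and Klivans--Servedio), there is a univariate polynomial $P_\tau$ of degree $d = O((1/\rho)\log(1/\tau))$ such that $|P_\tau(z) - \mathsf{sign}(z)| \le \tau$ for every $z$ with $|z|\ge \rho$. Composing with $\textbf{w}_i\cdot\textbf{x}$ produces a multivariate polynomial that lifts to an element $\textbf{v}_i$ of the multinomial-kernel RKHS of degree $d$, and because $\|\textbf{w}_i\|=1$ the RKHS norm inherits directly from the coefficient sequence of $P_\tau$, yielding $\|\textbf{v}_i\| \le (1/\tau)^{O(1/\rho)}$. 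This gives a pointwise $\tau$-approximation of $h_i$ on the support.

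Next I would take the linear combination $\textbf{v} := \sum_{i=1}^t \textbf{a}_i \textbf{v}_i$ and use the triangle inequality: the pointwise error $|f(\textbf{x}) - \langle\textbf{v},\psi(\textbf{x})\rangle|$ is bounded by $\|\textbf{a}\|_1 \tau = A\tau$, and the RKHS norm satisfies $\|\textbf{v}\| \le A \cdot (1/\tau)^{O(1/\rho)}$. Thus $f$ is $(A\tau,\,A(1/\tau)^{O(1/\rho)})$-uniformly approximated by $\mathcal{MK}_d$. Plugging into Theorem \ref{thm:main} with $\tau := \epsilon/(cLA)$ for a suitable constant $c$ gives $\varepsilon(h) \le O(L\cdot A\tau) = O(\epsilon)$, and the sample complexity $(BL/(LA\tau))^2 \log(1/\delta)$ simplifies to $(LA/\epsilon)^{O(1/\rho)}\log(1/\delta)$ once $B = A(LA/\epsilon)^{O(1/\rho)}$ is substituted. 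The runtime is polynomial in $n$ and $m$ because $\mathcal{MK}_d$ is computable in time polynomial in $n$ and $d$.

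The principal obstacle is the single-halfspace approximation step: one must produce a polynomial that is simultaneously (i) low-degree in $1/\rho$, (ii) accurate to error $\tau$ on $[-1,-\rho]\cup[\rho,1]$, and (iii) has RKHS norm bounded by $(1/\tau)^{O(1/\rho)}$ after lifting to $\mathbb{R}^n$ through $\textbf{w}_i\cdot\textbf{x}$. The Chebyshev-truncation argument delivers (i) and (ii) directly, while (iii) requires tracking the squared-coefficient sum of $P_\tau$ and verifying that composition with a unit-norm linear form and the multinomial feature map preserves the correct norm identification. Once this lemma is in hand, the remaining steps (linear combination, applying Theorem \ref{thm:main}, and parameter bookkeeping) are routine.
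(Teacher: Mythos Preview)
Your proposal is correct and follows essentially the same route as the paper: approximate each $\mathsf{sign}(\textbf{w}_i\cdot\textbf{x})$ by a low-degree univariate polynomial accurate outside $[-\rho,\rho]$ (the paper invokes Lemma~\ref{lem:approxsign}), lift to the multinomial RKHS via Lemma~\ref{lem:Bbound}, combine linearly via Lemma~\ref{lem:sum}, and plug into Theorem~\ref{thm:main}. The one place where the paper is more concrete than your outline is step~(iii): rather than tracking the coefficient sequence of the Chebyshev construction directly, it bounds the squared-coefficient sum using Sherstov's inequality (Lemma~\ref{lem:dbound}), which needs only that $|P_\tau|\le 1+\tau$ on $[-1,1]$ and that $\deg P_\tau = O((1/\rho)\log(1/\tau))$, immediately giving $\sum_j \beta_j^2 \le (d+1)(4e)^{2d}(1+\tau)^2 = (1/\tau)^{O(1/\rho)}$.
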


\noindent \textbf{Remark.} If $\E[y|\textbf{x}] = u\left(\frac{1}{t}\sum_{i=1}^t h_i(\textbf{x})\right)$, that is, a function of the fraction of true halfspaces, then the run-time is {\em independent of the number of halfspaces} $t$. This holds since $A = 1$ in this case.

We now show that Theorem \ref{thm:fhs} immediately implies the first {\em polynomial-time} algorithm for PAC learning intersections of halfspaces with a (constant) margin. Consider $t$-halfspaces $\{h_1, \ldots, h_t\}$. An intersection of these $t$-halfspaces is given by $f_{\mathsf{AND}}(\textbf{x}) = \wedge_{i=1}^t h_{i}(\textbf{x})$. 

\begin{corollary}
There exists an algorithm that PAC learns any intersection of $t$-halfspaces with margin $\rho>0$ on $\S^{n-1}$ in time $\mathsf{poly}\left(n, \left(\frac{t}{\epsilon}\right)^{(C/\rho)}, \log(1/\delta) \right)$ for some constant $C$.
\end{corollary}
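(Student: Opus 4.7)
The plan is to reduce PAC learning of intersections to the probabilistic-concept result already established in Theorem~\ref{thm:fhs}. The key observation is that the AND function admits a clean representation as a smooth, monotone combining function applied to the sum of the halfspace outputs. Specifically, view each $h_i$ as a $\{0,1\}$-valued halfspace with margin $\rho$, so $\sum_{i=1}^t h_i(\textbf{x}) \in \{0,1,\dots,t\}$, and $f_{\mathsf{AND}}(\textbf{x})=1$ iff this sum equals $t$. Define the ramp $u^\ast: \mathbb{R} \to [0,1]$ by $u^\ast(z)=0$ for $z\leq t-1$, $u^\ast(z)=z-(t-1)$ on $[t-1,t]$, and $u^\ast(z)=1$ for $z\geq t$. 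Then $u^\ast$ is non-decreasing and $1$-Lipschitz, and $u^\ast\!\left(\sum_i h_i(\textbf{x})\right) = f_{\mathsf{AND}}(\textbf{x})$ on every input. With coefficients $\textbf{a}=(1,\dots,1)$, we have $\|\textbf{a}\|_1 = A = t$ and Lipschitz constant $L=1$.

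Next, I would invoke Theorem~\ref{thm:fhs} on the induced distribution. Given an arbitrary distribution $\mathcal{D}$ on $\S^{n-1}$ on which all $h_i$ enjoy margin $\rho$, PAC samples $(\textbf{x}_i,y_i)$ with $y_i = f_{\mathsf{AND}}(\textbf{x}_i)\in\{0,1\}$ satisfy $\E[y\mid\textbf{x}] = u^\ast\!\bigl(\sum_i h_i(\textbf{x})\bigr)$, so the hypothesis $(\textbf{x},y)$ is precisely of the form required by Theorem~\ref{thm:fhs}. Running the algorithm guaranteed there with target square-loss parameter $\epsilon' = \epsilon/4$ produces a real-valued hypothesis $h$ satisfying $\E_{\textbf{x}}\!\left[(h(\textbf{x})-f_{\mathsf{AND}}(\textbf{x}))^2\right] \leq \epsilon/4$, using $m = (LA/\epsilon')^{O(1/\rho)}\log(1/\delta) = (t/\epsilon)^{O(1/\rho)}\log(1/\delta)$ samples and time polynomial in $m$ and $n$.

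Finally, I would convert the square-loss bound into a PAC $0/1$-loss guarantee by thresholding. Since $f_{\mathsf{AND}}\in\{0,1\}$, Markov's inequality gives
\[
\Pr_{\textbf{x}}\!\left[|h(\textbf{x}) - f_{\mathsf{AND}}(\textbf{x})| \geq 1/2\right] \;\leq\; 4\cdot \E[(h(\textbf{x}) - f_{\mathsf{AND}}(\textbf{x}))^2] \;\leq\; \epsilon.
\]
Defining the Boolean hypothesis $\tilde h(\textbf{x}) = \mathbbm{1}[h(\textbf{x}) \geq 1/2]$, any $\textbf{x}$ where $\tilde h(\textbf{x})\neq f_{\mathsf{AND}}(\textbf{x})$ lies in the event above, so $\Pr[\tilde h(\textbf{x}) \neq f_{\mathsf{AND}}(\textbf{x})] \leq \epsilon$. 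Assembling the pieces yields the claimed running time $\mathsf{poly}\!\bigl(n,(t/\epsilon)^{C/\rho},\log(1/\delta)\bigr)$.

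The only non-routine step is verifying that the reduction genuinely stays within the hypotheses of Theorem~\ref{thm:fhs}, and the main obstacle I anticipate is being careful that the choice of $u^\ast$ stays $O(1)$-Lipschitz rather than having its Lipschitz constant blow up with $t$ (which would ruin the exponent): the ramp width of $1$ between consecutive attainable values of $\sum_i h_i$ is exactly what saves us, and the margin $\rho$ of each halfspace enters only through the $(\cdot)^{O(1/\rho)}$ factor already provided by Theorem~\ref{thm:fhs}.
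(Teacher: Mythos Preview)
Your proposal is correct and follows essentially the same route as the paper: express $f_{\mathsf{AND}}$ as a monotone Lipschitz ramp $u$ applied to a linear combination of the halfspaces, invoke Theorem~\ref{thm:fhs}, and threshold at $1/2$ via Markov's inequality to pass from square loss to $0/1$ loss. The only cosmetic difference is normalization: the paper works with the \emph{average} $T(\textbf{x})=\tfrac{1}{t}\sum_i h_i(\textbf{x})$ (so $A=1$ and the ramp has width $1/t$), whereas you work with the raw sum (so $A=t$ and the ramp has width $1$, giving $L=1$); either way $LA=t$ and the final bound is the same.
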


This result improves the previous best bound due to Klivans and Servedio \cite{klivans2008learning} that had (for constant $\rho$) a quasipolynomial dependence on the number of halfspaces $t$. 


Klivans and Servedio used random projection along with kernel perceptron and the complete quadratic kernel to obtain their results.  Here we directly use the multinomial kernel, which takes advantage of how the polynomial approximator's weights can be embedded into the corresponding RKHS.  We remark that if we are only interested in the special case of PAC learning an intersection of halfspaces with a margin (as opposed to learning in the probabilistic concept model), we can use kernel perceptron along with the multinomial kernel (and a Chebsyshev approximation that will result in an improved $O(1/\sqrt{\rho})$ dependence), as opposed to Alphatron in conjunction with the multinomial kernel. 



\section{Multiple Instance Learning}
Multiple Instance Learning (MIL) is a generalization of supervised classification in which a label is assigned to a \emph{bag}, that is, a set of instances, instead of an individual instance \cite{dietterich1997solving}. The bag label is induced by the labels of the instances in it.  The goal we focus on in this work is to label future bags of instances correctly, with high probability.  We refer the reader to \cite{amores2013multiple, herrera2016multiple} for an in-depth study of MIL. In this section we apply the previously developed ideas to MIL and give the first provable learning results for concrete schemes that do not rely on unproven assumptions.  \\

\noindent \textbf{Comparison to Previous Work.} Under the standard MI assumption, various results are known in the PAC learning setting.  Blum and Kalai \cite{blum1998note} showed a simple reduction from PAC learning MIL to PAC learning with one-sided noise under the assumption that the instances in each bag were drawn independently from a distribution.  Sabato and Tishby \cite{sabato2012multi} removed the independence assumption and gave sample complexity bounds for learning future bags. 
All the above results require the existence of an algorithm for PAC learning with one-sided noise, which is itself a challenging problem and not known to exist for even simple concept classes. 

In this work, we do not assume instances within each bag are independently distributed, and we do not require the existence of PAC learning algorithms for one-sided noise.  Instead, we give efficient algorithms for labeling future bags when the class labeling instances is an unknown halfspace with a margin or an unknown depth-two neural network.   We succeed with respect to general monotone, smooth combining functions. \\


\noindent\textbf{Notation.} Let us denote the space of instances as $\mathcal{X}$ and the space of bags as $\Beta \subseteq \mathcal{X}^*$. Let $N$ be an upper bound on the size of the bags, that is, $N = \max_{\beta \in \Beta} |\beta|$. Let the instance labeling function be $c: \mathcal{X} \rightarrow \mathbb{R}$ and the bag labeling function be $f_{\mathsf{bag}}$. We assume a distribution $\mathcal{D}$ over the bags and allow the instances within the bag to be dependent on each other. We consider two variants of the relationship between the instance and bag labeling functions and corresponding learning models.

\subsection{Probabilistic MIL}
We generalize the deterministic model to allow the labeling function to induce a probability distribution over the labels. This assumption seems more intuitive and less restrictive than the deterministic case as it allows for noise in the labels.
\begin{definition}[Probabilistic MI Assumption]
Given combining function $u: \mathbb{R} \rightarrow [0,1]$, for bag $\beta$, $f_{\mathsf{bag}}(\beta)$ is a random variable such that $Pr[f_{\mathsf{bag}}(\beta)=1] =  u\left(\frac{1}{|\beta|} \cdot \sum_{\textbf{x} \in \beta} c(\textbf{x})\right)$ where $c$ is the instance labeling function.
\end{definition}
\begin{definition}[Probabilistic MIL]
The concept class $\mathcal{C}$ is  $(\epsilon, \delta)$-Probabilistic MIL for $u$ with sample complexity $M$ and running time $T$ if under the probabilistic MI assumption for $u$, there exists an algorithm ${\cal A}$ such that for all $c \in \mathcal{C}$ as the instance labeling function and any distribution $\mathcal{D}$ on $\Beta$, ${\cal A}$ draws at most $M$ iid bags  and runs in time at most $T$ to return a bag-labeling hypothesis $h$ such that with probability $1 - \delta$, 
\[
\E_{\beta\sim \mathcal{D}}\left[\left(h(\beta) - u\left(\frac{1}{|\beta|} \cdot \sum_{\textbf{x} \in \beta} c(\textbf{x})\right) \right)^2\right] \leq \epsilon.
\]
\end{definition}

The following is our main theorem of learning in the Probabilistic MIL setting.
\begin{theorem}\label{thm:pmil}
The concept class $\mathcal{C}$ is $(\epsilon, \delta)$-Probabilistic MIL for monotone $L$-Lipschitz $u$ with sample complexity $(\frac{BL}{\epsilon})^2 \cdot \log (1/\delta)$ and running time $\mathsf{poly}(n, B, L, 1/\epsilon, \log(1/\delta))$ if all $c \in \mathcal{C}$ are $(\epsilon/CL, B)$-uniformly approximated by some kernel $\mathcal{K}$ for large enough constant $C>0$.
\end{theorem}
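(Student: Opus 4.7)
The plan is to reduce the Probabilistic MIL problem to the standard p-concept learning problem handled by Theorem \ref{thm:main}, by constructing a bag-level kernel from the given instance-level kernel $\mathcal{K}$ and applying Alphatron directly on bags.

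First, I would define the bag feature map $\Psi(\beta) := \frac{1}{|\beta|}\sum_{\textbf{x} \in \beta} \psi(\textbf{x})$, where $\psi$ is the feature map corresponding to $\mathcal{K}$. The corresponding bag kernel is
\[
\mathcal{K}_{\mathsf{bag}}(\beta, \beta') := \langle \Psi(\beta), \Psi(\beta') \rangle = \frac{1}{|\beta|\,|\beta'|} \sum_{\textbf{x} \in \beta,\ \textbf{x}' \in \beta'} \mathcal{K}(\textbf{x}, \textbf{x}'),
\]
which is efficiently computable since bag sizes are at most $N$ and $\mathcal{K}$ is efficiently computable. By the triangle inequality and $\|\psi(\textbf{x})\|\leq 1$ (which we may assume since $\mathcal{K}(\textbf{x},\textbf{x})\leq 1$), we get $\|\Psi(\beta)\| \leq 1$, so $\mathcal{K}_{\mathsf{bag}}(\beta,\beta') \leq 1$.

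Next, I would lift the uniform approximation from instances to bags. By hypothesis, for any $c \in \mathcal{C}$ there exists $\textbf{v}$ in the RKHS with $\|\textbf{v}\|\leq B$ such that $|c(\textbf{x}) - \langle \textbf{v}, \psi(\textbf{x})\rangle| \leq \epsilon/(CL)$ for every $\textbf{x}$. Averaging over any bag $\beta$,
\[
\left|\frac{1}{|\beta|}\sum_{\textbf{x} \in \beta} c(\textbf{x}) - \langle \textbf{v}, \Psi(\beta)\rangle \right| \leq \frac{1}{|\beta|}\sum_{\textbf{x} \in \beta} |c(\textbf{x}) - \langle \textbf{v}, \psi(\textbf{x})\rangle| \leq \frac{\epsilon}{CL}.
\]
Thus the bag-labeling function $F(\beta) := \frac{1}{|\beta|}\sum_{\textbf{x} \in \beta} c(\textbf{x})$ is $(\epsilon/(CL), B)$-uniformly approximated by $\mathcal{K}_{\mathsf{bag}}$, with the \emph{same} witness $\textbf{v}$ (so the RKHS norm bound $B$ is preserved).

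Finally, under the Probabilistic MI assumption we have $\E[y \mid \beta] = u(F(\beta))$ with $u$ being $L$-Lipschitz and non-decreasing, so the distribution on $(\beta, y)$ satisfies exactly the hypotheses of Theorem \ref{thm:main} with the bag kernel $\mathcal{K}_{\mathsf{bag}}$, approximation parameter $\epsilon/(CL)$, and RKHS norm bound $B$. Invoking Theorem \ref{thm:main} yields a hypothesis $h$ with $\varepsilon(h) \leq O(L \cdot \epsilon/(CL)) = O(\epsilon/C)$, which, for $C$ chosen as the constant appearing in Theorem \ref{thm:main}, gives the desired $\varepsilon(h) \leq \epsilon$. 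The sample complexity becomes $(B/(\epsilon/(CL)))^2 \log(1/\delta) \cdot L^2 = (BL/\epsilon)^2\log(1/\delta)$ up to constants, and the runtime is $\mathsf{poly}(n,B,L,1/\epsilon,\log(1/\delta))$ since the bag kernel costs only a factor of $N^2$ more per evaluation.

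The main (minor) obstacle is verifying that the uniform approximation error does not blow up when passing from instances to bags — but this is precisely why we use \emph{uniform} approximation (Definition of $(\epsilon,B)$-uniform approximation) and an \emph{average} combining function: the $\ell_\infty$ pointwise error is preserved under averaging and the same witness $\textbf{v}$ serves as the approximator at the bag level. If one instead had only an $L_2$ approximation bound over the instance distribution, the bag error could only be controlled under additional independence assumptions on instances within a bag, which we explicitly avoid. This is the key technical reason the proof yields a bag-size-independent bound.
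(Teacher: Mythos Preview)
Your proposal is correct and follows essentially the same approach as the paper: the bag kernel you define is exactly the mean map kernel $\mathcal{K}_{\mathsf{mean}}$ (Definition in the preliminaries), your averaging argument is precisely the content of Lemma~\ref{lem:milapprox}, and the final step invoking Theorem~\ref{thm:main} is identical. The paper's proof is slightly terser and does not include your closing remark on why uniform (as opposed to $L_2$) approximation is needed, but the substance is the same.
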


Combining Theorem \ref{thm:pmil} with learnability Theorems \ref{thm:sss} and \ref{thm:fhs} we can show the following polynomial time Probabilistic MIL results.
\begin{corollary}
For any monotone $L$-Lipschitz function $u$, the concept class of sigmoids over $\S^{n-1}$ are $(\epsilon, \delta)$-Probabilistic MIL with sample complexity and running time $\mathsf{poly}(n, L, 1/\epsilon, \log(1/\delta))$.
\end{corollary}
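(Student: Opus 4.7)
The plan is to invoke Theorem \ref{thm:pmil} by exhibiting a uniform RKHS approximation for any single sigmoidal instance-labeling function $c(\mathbf{x}) = \sigma_{sig}(\mathbf{w}\cdot\mathbf{x})$ over $\S^{n-1}$ (with $\|\mathbf{w}\| \le 1$). Once we show that every such $c$ is $(\epsilon/(CL), B)$-uniformly approximated by a kernel $\mathcal{K}$ for some $B = \mathrm{poly}(L,1/\epsilon)$, Theorem \ref{thm:pmil} directly yields sample complexity $(BL/\epsilon)^2 \log(1/\delta)$ and running time $\mathrm{poly}(n,B,L,1/\epsilon,\log(1/\delta))$, both of which are $\mathrm{poly}(n,L,1/\epsilon,\log(1/\delta))$.

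For the approximation, I would use the multinomial kernel $\mathcal{MK}_d$ exactly as in the proof of Theorem \ref{thm:sss} (specifically the first-layer analysis of the two-layer network). Since $\sigma_{sig}$ is real-analytic on $[-1,1]$, standard univariate approximation (e.g.\ a Chebyshev truncation) produces, for any target accuracy $\eta>0$, a degree-$d$ polynomial $p$ with $d = O(\log(1/\eta))$ such that $\sup_{|z|\le 1}|\sigma_{sig}(z) - p(z)| \le \eta$ and whose coefficients can be bounded in a way that encodes $p$ as an element $\mathbf{v}$ of the RKHS of $\mathcal{MK}_d$ with $\|\mathbf{v}\|\le B = \mathrm{poly}(1/\eta)$. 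Concretely, writing $(\mathbf{w}\cdot\mathbf{x})^j = \langle \mathbf{w}^{\otimes j}, \mathbf{x}^{\otimes j}\rangle$ and using that $\psi_d$ collects all monomial features up to degree $d$, one gets $\langle \mathbf{v}, \psi_d(\mathbf{x})\rangle = p(\mathbf{w}\cdot\mathbf{x})$ for every $\mathbf{x}\in\S^{n-1}$. Since $|\mathbf{w}\cdot\mathbf{x}|\le 1$, this uniformly approximates $c(\mathbf{x})$ to within $\eta$. Setting $\eta = \epsilon/(CL)$ delivers the required $(\epsilon/(CL),B)$-uniform approximation with $B = \mathrm{poly}(L/\epsilon)$.

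Substituting this $B$ into Theorem \ref{thm:pmil} gives the stated bounds. The main obstacle—constructing the polynomial approximator for the sigmoid and embedding it into the multinomial kernel with controlled norm—has essentially been solved in the analysis of Theorem \ref{thm:sss}, so the proof here amounts to isolating the single-sigmoid piece of that argument (we do not need the outer activation $\sigma'$ or the sum over $k$ hidden units that complicate Theorem \ref{thm:sss}) and feeding it into the MIL reduction of Theorem \ref{thm:pmil}.
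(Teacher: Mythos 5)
Your proposal is correct and follows the same route the paper takes: show a single sigmoid over $\S^{n-1}$ is $(\epsilon/CL, B)$-uniformly approximated by the multinomial kernel with $B = \mathrm{poly}(L/\epsilon)$ (this is exactly Lemma~\ref{lem:approxone}, the single-hidden-unit approximation used inside Theorem~\ref{thm:sss}), then substitute into Theorem~\ref{thm:pmil} to get the $\mathrm{poly}(n,L,1/\epsilon,\log(1/\delta))$ sample complexity and running time. You correctly observe that only the first-layer (single-sigmoid) piece of the Theorem~\ref{thm:sss} analysis is needed, not the outer activation or the sum over hidden units.
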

\begin{corollary}
For any monotone $L$-Lipschitz function $u$, the concept class of halfspaces with a constant margin over $\S^{n-1}$ are $(\epsilon, \delta)$-Probabilistic MIL with sample complexity and running time $\mathsf{poly}(n, L, 1/\epsilon, \log(1/\delta))$.
\end{corollary}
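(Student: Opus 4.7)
The plan is to derive the corollary as a direct combination of Theorem \ref{thm:pmil} and the uniform polynomial approximation of margin halfspaces that already underlies Theorem \ref{thm:fhs}. By Theorem \ref{thm:pmil}, it suffices to exhibit an efficiently computable kernel $\mathcal{K}$ together with a bound $B = \mathsf{poly}(L, 1/\epsilon)$ such that every halfspace $h$ with constant margin $\rho$ on $\S^{n-1}$ is $(\epsilon/(CL), B)$-uniformly approximated by $\mathcal{K}$. The natural choice is the multinomial kernel $\mathcal{MK}_d$ used throughout the paper.

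First I would invoke the classical univariate approximation of the sign function: for constant $\rho$, standard Chebyshev (or Jackson-kernel) constructions give a degree-$d$ polynomial $p$ with $d = O((1/\rho)\log(L/\epsilon)) = O(\log(L/\epsilon))$ satisfying $|p(t) - \mathrm{sgn}(t)| \leq \epsilon/(CL)$ for all $t \in [-1,-\rho]\cup[\rho,1]$. Composing with the unit weight vector yields $q(\mathbf{x}) = p(\mathbf{w}\cdot\mathbf{x})$, a degree-$d$ polynomial in $\mathbf{x}$ that uniformly approximates $h$ on the margin region. The feature map $\psi_d$ of the multinomial kernel realizes $q$ as $\langle \mathbf{v}, \psi_d(\mathbf{x})\rangle$, and tracking the coefficient weight of $p$ through this feature map gives $\|\mathbf{v}\| \leq B = \mathsf{poly}(L, 1/\epsilon)$. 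This is precisely the uniform-approximation ingredient buried inside the proof of Theorem \ref{thm:fhs}, now pulled out for use at the instance-labeling level.

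Second I would feed this $(\epsilon/(CL), B)$-uniform approximation into Theorem \ref{thm:pmil}. The theorem then delivers an algorithm of sample complexity $(BL/\epsilon)^2\log(1/\delta) = \mathsf{poly}(L, 1/\epsilon, \log(1/\delta))$ and running time $\mathsf{poly}(n, B, L, 1/\epsilon, \log(1/\delta)) = \mathsf{poly}(n, L, 1/\epsilon, \log(1/\delta))$, matching the claim. Note that we crucially use Theorem \ref{thm:pmil} rather than going back to the general bag-average argument, because uniform (rather than $L_2$-mean) approximation is what lets us transfer an instance-level approximation to a bag-level approximation without paying a factor depending on the bag size.

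The main obstacle is the polynomial bound on the RKHS norm $B$. Uniform approximation on the margin region is cheap, but one must also verify that the coefficients of the Chebyshev approximator lift to an element of the multinomial kernel's RKHS of polynomial norm — this is exactly where the constant-margin assumption is essential, since it keeps $d = O(\log(L/\epsilon))$ and hence keeps the induced weight polynomial in $1/\epsilon$. Dropping the constant-margin assumption would inflate $B$ to quasipolynomial and lose the $\mathsf{poly}(n, L, 1/\epsilon, \log(1/\delta))$ runtime, which is consistent with the corollary's hypotheses.
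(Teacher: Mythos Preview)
Your proposal is correct and matches the paper's approach: the paper states that this corollary follows by combining Theorem \ref{thm:pmil} with the approximation underlying Theorem \ref{thm:fhs}, and you have unpacked exactly that combination. Your extraction of the uniform-approximation ingredient (Lemma \ref{lem:approxsign} plus Lemmas \ref{lem:dbound} and \ref{lem:Bbound} to bound the RKHS norm, yielding $B = (1/\epsilon_0)^{O(1/\rho)} = \mathsf{poly}(L,1/\epsilon)$ for constant $\rho$) and subsequent appeal to Theorem \ref{thm:pmil} is precisely the intended argument.
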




\bibliographystyle{alpha}
\bibliography{references}


\appendix

\section{Background}
\subsection{Learning Models}
We consider two learning models in our paper, the standard Probably Approximately Correct (PAC) learning model and a relaxation of the standard model, the Probabilistic Concept (p-concept) learning model. For completeness, we define the two models and refer the reader to \cite{valiant1984theory,kearns1990efficient} for a detailed explanation.
\begin{definition}[PAC Learning \cite{valiant1984theory}] 
	We say that a concept class $\mathcal{C} \subseteq
        \{0,1\}^{\mathcal{X}}$ is Probably Approximately Correct (PAC)
        learnable, if there exists an algorithm $\mathcal{A}$ such
        that for every $c \in \mathcal{C}$,$\delta, \epsilon >
	0$ and $\mathcal{D}$ over $\mathcal{X}$, if $\mathcal{A}$ is given access
	to examples drawn from $\mathcal{D}$ and labeled according to $c$, $\mathcal{A}$ outputs a hypothesis $h : \mathcal{X}
	\rightarrow \{0,1\}$, such that with probability at least $1 - \delta$,
	\begin{equation}
		  Pr_{\textbf{x} \sim \mathcal{D}} [h(\textbf{x}) \neq c(\textbf{x})] \leq \epsilon.
	\end{equation}
	Furthermore, we say that $\mathcal{C}$ is \emph{efficiently PAC learnable to error
	$\epsilon$} if $\mathcal{A}$ can output an $h$ satisfying the
      above with running time and sample complexity polynomial in $n$, $1/\epsilon$, and
	$1/\delta$.
\end{definition}

\begin{definition}[p-concept Learning \cite{kearns1990efficient}] 
	We say that a concept class $\mathcal{C} \subseteq
        \mathcal{Y}^{\mathcal{X}}$ is Probabilistic Concept
        (p-concept) learnable, if there exists an algorithm
        $\mathcal{A}$ such that for every $\delta, \epsilon >
	0$, $c \in \mathcal{C}$ and
	distribution $\mathcal{D}$ over $\mathcal{X} \times
        \mathcal{Y}$ with $\E[y|\textbf{x}] = c(\textbf{x})$ we have
        that $\mathcal{A}$, given access
	to examples drawn from $\mathcal{D}$, outputs a hypothesis $h : \mathcal{X}
	\rightarrow \mathcal{Y}$, such that with probability at least $1 - \delta$,
	\begin{equation}
		  \E_{(\textbf{x},y) \sim \mathcal{D}} [(h(\textbf{x}) - c(\textbf{x}))^2] \leq  \epsilon.
	\end{equation}
	Furthermore, we say that $\mathcal{C}$ is \emph{efficiently p-concept learnable to error
	$\epsilon$} if $\mathcal{A}$ can output an $h$ satisfying the
      above with running time and sample complexity polynomial in $n$, $1/\epsilon$, and
	$1/\delta$.
\end{definition}
Here we focus on square loss for p-concept since an efficient algorithm for square-loss implies efficient algorithms of various other standard losses.

\subsection{Generalization Bounds}
The following standard generalization bound based on Rademacher complexity is useful for our analysis. For a background on Rademacher
complexity, we refer the readers to \cite{BM:2002}.

\begin{theorem}[\cite{BM:2002}] \label{generalizationbound}
	Let $\mathcal{D}$ be a distribution over $\mathcal{X} \times \mathcal{Y}$ and let $\mathcal{L} : \mathcal{Y}^\prime
	\times \mathcal{Y}$ (where $\mathcal{Y} \subseteq \mathcal{Y}^\prime \subseteq \mathbb{R}$) be a
	$b$-bounded loss function that is $L$-Lipschitz in its first argument.  Let
	$\mathcal{F} \subseteq (\mathcal{Y}^\prime)^\mathcal{X}$ and for any $f \in \mathcal{F}$, let $\mathcal{L}(f; \mathcal{D}) := \E_{(\textbf{x}, y)
	\sim \mathcal{D}}[\mathcal{L}(f(\textbf{x}), y)]$ and $\hat{\mathcal{L}}(f; S) := \frac{1}{m} \sum_{i = 1}^m
	\mathcal{L}(f(\textbf{x}_\textbf{i}), y_i)$, where $S = ((\textbf{x}_\textbf{1}, y_1), \ldots,  (\textbf{x}_\textbf{m}, y_m))  \sim
	\mathcal{D}^m$. Then for any $\delta > 0$, with probability at least $1 - \delta$
	(over the random sample draw for $S$), simultaneously for all $f \in
	\mathcal{F}$, the following is true:
	\[
		|\mathcal{L}(f; \mathcal{D}) - \hat{\mathcal{L}}(f; S)| \leq 4 \cdot L \cdot \mathcal{R}_\textbf{m}(\mathcal{F})
		+ 2\cdot b \cdot \sqrt{\frac{\log (1/\delta)}{2m}}
	\]
	where $\mathcal{R}_\textbf{m}(\mathcal{F})$ is the Rademacher complexity of the function class $\mathcal{F}$. 
\end{theorem}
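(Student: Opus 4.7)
The plan is to prove Theorem \ref{generalizationbound} by the standard three-step argument from uniform convergence theory: bounded-differences concentration, symmetrization, and Ledoux-Talagrand contraction. Define the uniform deviation
\[
\Phi(S) \;:=\; \sup_{f \in \mathcal{F}} \, \bigl| \mathcal{L}(f; \mathcal{D}) - \hat{\mathcal{L}}(f; S) \bigr|.
\]
Because the loss takes values in $[0,b]$, replacing any single sample $(\textbf{x}_i, y_i)$ in $S$ with a fresh draw shifts each $\mathcal{L}(f(\textbf{x}_i), y_i)$ by at most $b$, and hence shifts $\hat{\mathcal{L}}(f; S)$ by at most $b/m$ uniformly in $f$. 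So $\Phi$ satisfies the bounded-differences condition with constants at most $2b/m$, and McDiarmid's inequality yields, with probability at least $1-\delta$,
\[
\Phi(S) \;\le\; \E\bigl[\Phi(S)\bigr] \;+\; 2b\sqrt{\tfrac{\log(1/\delta)}{2m}},
\]
which produces the additive term in the theorem.

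The next step is to bound $\E[\Phi(S)]$ by symmetrization. Introducing an i.i.d.\ ghost sample $S' = ((\textbf{x}_i', y_i'))_{i=1}^m$ and Rademacher signs $\sigma_1, \ldots, \sigma_m \in \{\pm 1\}$, convexity of the supremum together with the symmetry of $\mathcal{L}(f(\textbf{x}_i'), y_i') - \mathcal{L}(f(\textbf{x}_i), y_i)$ yields
\[
\E\bigl[\Phi(S)\bigr] \;\le\; 2\, \E_{S,\sigma}\biggl[\sup_{f \in \mathcal{F}} \Bigl| \tfrac{1}{m}\sum_{i=1}^m \sigma_i \, \mathcal{L}(f(\textbf{x}_i), y_i) \Bigr| \biggr] \;=\; 2\, \mathcal{R}_m(\mathcal{L} \circ \mathcal{F}),
\]
where $\mathcal{L} \circ \mathcal{F} := \{(\textbf{x},y) \mapsto \mathcal{L}(f(\textbf{x}), y) : f \in \mathcal{F}\}$. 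The Ledoux-Talagrand contraction principle then applies because for each fixed $y_i$ the map $t \mapsto \mathcal{L}(t, y_i)$ is $L$-Lipschitz, giving
\[
\mathcal{R}_m(\mathcal{L} \circ \mathcal{F}) \;\le\; L \cdot \mathcal{R}_m(\mathcal{F}).
\]
Chaining these three inequalities produces $\Phi(S) \;\le\; 4L \cdot \mathcal{R}_m(\mathcal{F}) + 2b\sqrt{\log(1/\delta)/(2m)}$ with probability at least $1-\delta$, matching the theorem.

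The argument is essentially routine, so the main obstacle is careful bookkeeping of constants. In particular, the Ledoux-Talagrand inequality is classically stated for Rademacher averages \emph{without} an outer absolute value around the supremum, so one absorbs an extra factor of $2$ to handle both tails when the absolute value is retained; this factor, combined with the factor of $2$ that emerges from symmetrization, yields the $4L$ coefficient in the bound. Since this is a well-known result that the paper cites directly from Bartlett-Mendelson \cite{BM:2002}, no new ideas are introduced; the complete proof with all constants can be lifted from the standard reference.
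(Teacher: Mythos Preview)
Your proof sketch is correct and follows the standard route (McDiarmid, symmetrization, Ledoux--Talagrand contraction). Note, however, that the paper does not actually prove this theorem: it is stated in the background section as a known result cited from \cite{BM:2002}, so there is no ``paper's own proof'' to compare against. Your derivation is precisely the textbook argument underlying that reference, and your remark that the extra factor of~$2$ in the $4L$ coefficient comes from handling the absolute value in the contraction step is the right explanation for the constants as stated.
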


For a linear concept class, the Rademacher complexity can be bounded as follows.

\begin{theorem}[\cite{KST:2008}] \label{rademachercomplexity}
	Let $\mathcal{X}$ be a subset of a Hilbert space equipped with inner product $\langle
	\cdot, \cdot \rangle$ such that for each $\textbf{x} \in \mathcal{X}$, $\langle \textbf{x}, \textbf{x}
	\rangle \leq X^2$, and let $\mathcal{W} = \{ \textbf{x} \mapsto \langle \textbf{x} , \textbf{w} \rangle
	~|~ \langle \textbf{w}, \textbf{w} \rangle \leq W^2 \}$ be a class of linear functions.
	Then it holds that
	\[
		\mathcal{R}_\textbf{m}(\mathcal{W}) \leq X \cdot W \cdot \sqrt{\frac{1}{m}}.
	\]
\end{theorem}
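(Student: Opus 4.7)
The plan is to unfold the definition of Rademacher complexity, apply Cauchy--Schwarz to reduce the supremum over $\mathbf{w}$ to the norm of a random sum, and then bound that norm via Jensen's inequality and a variance computation that exploits independence of the Rademacher signs.

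Concretely, I would start by writing
\[
\mathcal{R}_m(\mathcal{W}) = \mathbb{E}_{\mathbf{x}_1, \ldots, \mathbf{x}_m}\mathbb{E}_{\sigma_1, \ldots, \sigma_m}\left[\sup_{\mathbf{w}:\langle \mathbf{w},\mathbf{w}\rangle \leq W^2} \frac{1}{m} \sum_{i=1}^m \sigma_i \langle \mathbf{x}_i, \mathbf{w}\rangle\right],
\]
where $\sigma_i$ are i.i.d. uniform on $\{\pm 1\}$. By linearity of the inner product, the quantity inside the supremum equals $\langle \mathbf{w},\, \frac{1}{m}\sum_i \sigma_i \mathbf{x}_i\rangle$. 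Cauchy--Schwarz then gives the upper bound $\|\mathbf{w}\|\cdot\|\frac{1}{m}\sum_i \sigma_i \mathbf{x}_i\|$, and this bound is tight (achieved by $\mathbf{w}$ proportional to $\sum_i \sigma_i \mathbf{x}_i$ with norm $W$), so the supremum equals $W\cdot \|\frac{1}{m}\sum_i \sigma_i \mathbf{x}_i\|$.

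Next I would push the expectation over $\sigma$ inside the norm using Jensen's inequality:
\[
\mathbb{E}_\sigma\left\|\frac{1}{m}\sum_{i=1}^m \sigma_i \mathbf{x}_i\right\| \leq \sqrt{\mathbb{E}_\sigma\left\|\frac{1}{m}\sum_{i=1}^m \sigma_i \mathbf{x}_i\right\|^2}.
\]
Expanding the squared norm gives $\frac{1}{m^2}\sum_{i,j}\mathbb{E}[\sigma_i \sigma_j]\langle \mathbf{x}_i, \mathbf{x}_j\rangle$. Independence of the signs makes the cross terms vanish ($\mathbb{E}[\sigma_i\sigma_j] = \delta_{ij}$), leaving $\frac{1}{m^2}\sum_i \|\mathbf{x}_i\|^2 \leq \frac{X^2}{m}$ by the assumption $\langle \mathbf{x}_i, \mathbf{x}_i\rangle \leq X^2$. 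Taking the square root and reintroducing the factor $W$ yields $\mathcal{R}_m(\mathcal{W}) \leq X W/\sqrt{m}$, and finally taking the outer expectation over $\mathbf{x}_1,\ldots,\mathbf{x}_m$ preserves the bound since it was uniform in the sample.

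There is no real obstacle here: every step is a routine manipulation (definition unpacking, Cauchy--Schwarz with tightness, Jensen, second-moment computation using independence). The only subtlety worth flagging is making sure the inequality is tight enough: Cauchy--Schwarz is used with equality conditions in mind (so the supremum collapses cleanly to a norm) and Jensen introduces the one genuine inequality, but it is harmless because the resulting second-moment computation matches the desired $1/\sqrt{m}$ rate exactly.
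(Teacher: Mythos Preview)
Your argument is correct and is precisely the standard proof of this bound. Note, however, that the paper does not supply its own proof of this statement: it is quoted as a background result from \cite{KST:2008} and used as a black box, so there is nothing in the paper to compare against beyond confirming that your derivation yields the stated inequality.
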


The following result is useful for bounding the Rademacher complexity of a smooth function of a concept class.

\begin{theorem}[\cite{BM:2002, LT:1991}]
\label{rademachercomplexity2}
	Let $\phi : \mathbb{R} \rightarrow \mathbb{R}$ be  $L_{\phi}$-Lipschitz
	and suppose that $\phi(0) = 0$. Let $\mathcal{Y} \subseteq \mathbb{R}$, and for a function $f \in \mathcal{Y}^{\mathcal{X}}$. 
	Finally, for $\mathcal{F} \subseteq \mathcal{Y}^{\mathcal{X}}$, let $\phi \circ \mathcal{F} = \{\phi \circ f \colon f \in \mathcal{F}\}$.
	It holds that $\mathcal{R}_\textbf{m}(\phi
	\circ \mathcal{F}) \leq 2 \cdot L_{\phi} \cdot \mathcal{R}_\textbf{m}(\mathcal{F})$.
\end{theorem}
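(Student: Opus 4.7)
The plan is to prove this as an instance of the classical Ledoux--Talagrand contraction principle for Rademacher processes. After conditioning on the sample $(x_1,\ldots,x_m)$ it suffices to establish the inequality for the conditional Rademacher average
\[
\mathbb{E}_\sigma \sup_{f\in\mathcal{F}} \sum_{i=1}^m \sigma_i\,\phi(f(x_i)) \;\le\; 2L_\phi\,\mathbb{E}_\sigma\sup_{f\in\mathcal{F}} \sum_{i=1}^m \sigma_i f(x_i),
\]
and then take the outer expectation over $(x_1,\ldots,x_m)\sim\mathcal{D}^m$, divide by $m$, and compare to the definition of $\mathcal{R}_m$ used in the earlier theorems.

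The core technical device is a one-coordinate peeling lemma: for any auxiliary functional $g:\mathcal{F}\to\mathbb{R}$, any point $x\in\mathcal{X}$, and a single Rademacher sign $\sigma$,
\[
\mathbb{E}_\sigma \sup_{f\in\mathcal{F}}\bigl(g(f) + \sigma\phi(f(x))\bigr) \;\le\; \mathbb{E}_\sigma \sup_{f\in\mathcal{F}}\bigl(g(f) + L_\phi\,\sigma f(x)\bigr).
\]
I would prove this by expanding the expectation over $\sigma\in\{-1,+1\}$ as $\tfrac12\sup_f(g(f)+\phi(f(x))) + \tfrac12\sup_{f'}(g(f')-\phi(f'(x)))$, merging into a single supremum $\tfrac12\sup_{f,f'}\bigl(g(f)+g(f')+\phi(f(x))-\phi(f'(x))\bigr)$, applying the Lipschitz bound $\phi(f(x))-\phi(f'(x))\le L_\phi|f(x)-f'(x)|$, and then using the symmetry between $f$ and $f'$ to drop the absolute value (the maximizing assignment orients the sign without loss of generality). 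Splitting the resulting sup back into two independent suprema reproduces precisely $\mathbb{E}_\sigma\sup_f\bigl(g(f)+L_\phi\sigma f(x)\bigr)$. Iterating this lemma for $i=1,\ldots,m$, absorbing already-replaced terms into $g$ at each step, converts each $\phi(f(x_i))$ into $L_\phi f(x_i)$ inside the supremum without increasing the Rademacher average.

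The peeling step alone actually delivers the sharper constant $L_\phi$; the factor of $2$ in the theorem as stated is standard slack that enters when passing between the non-absolute and absolute forms of the Rademacher average (or equivalently when handling classes $\mathcal{F}$ that are not symmetric around $0$). This is where the hypothesis $\phi(0)=0$ plays its role: by considering $\mathcal{F}\cup\{0\}$ one can anchor the sup so that $\phi\circ 0\equiv 0$ contributes zero to the left-hand average, and the doubling is absorbed cleanly. The main obstacle is the symmetry argument inside the peeling lemma that removes the absolute value from $|f(x)-f'(x)|$; once that is in place, the induction over coordinates and the final anchoring via $\phi(0)=0$ are bookkeeping.
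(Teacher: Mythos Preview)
Your proposal is the standard Ledoux--Talagrand contraction argument and is correct. Note, however, that the paper does not actually prove this theorem: it is stated in the background section as a known result and simply cited to \cite{BM:2002, LT:1991}, so there is no in-paper proof to compare against. What you have sketched is essentially the proof one finds in those references, so in that sense your approach matches the intended source.
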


\subsection{Kernel Methods}
We assume the reader has a basic working knowledge of kernel methods
(for a good resource on kernel methods in machine learning we refer the reader to
\cite{scholkopf2001learning}).  We denote a kernel function by $\mathcal{K}(\textbf{x},\textbf{x}') = \langle \psi(\textbf{x}), \psi(\textbf{x}') \rangle$ where $\psi$ is
the associated feature map and $\mathcal{H}$ is the
corresponding reproducing kernel Hilbert space (RKHS). 

Here we define two kernels and a few of their properties that we will use for our
analysis. First, we define a variant of the polynomial kernel, the
{\em multinomial kernel} due to Goel et al. \cite{goel2016reliably}:

\begin{definition}[Multinomial Kernel \cite{goel2016reliably}] \label{kernel1}
	Define $\psi_d \colon \mathbb{R}^n \to \mathbb{R}^{N_d}$, where $N_d = 1 + n + \cdots +
	n^d$, indexed by tuples $(k_1, \ldots, k_j) \in [n]^j$ for each $j \in \{0,
	1, \ldots, d \}$, where the entry of $\psi_d(\textbf{x})$ corresponding to tuple
	$(k_1, \ldots, k_j)$ equals $\textbf{x}_{k_1} \cdots \textbf{x}_{k_j}$.  (When $j = 0$ we have
	an empty tuple and the corresponding entry is $1$.) Define kernel ${\mathcal{MK}_d}$
	as follows:
	\[
		{\mathcal{MK}_d}(\textbf{x}, \textbf{x}^\prime) = \langle \psi_d(\textbf{x}), \psi_d(\textbf{x}^\prime) \rangle = \sum_{j=0}^d (\textbf{x} \cdot \textbf{x}^\prime)^j.
	\]
Also define $\mathcal{H}_{{\mathcal{MK}_d}}$ to be the corresponding RKHS.
\end{definition}

It is easy to see that the multinomial kernel is efficiently computable. A multivariate polynomial $p$ of degree $d$ can be represented as an element $\textbf{v} \in \mathcal{H}_{{\mathcal{MK}_d}}$. Also, every $\textbf{v} \in \mathcal{H}_{{\mathcal{MK}_d}}$ can be interpreted as a multivariate polynomial of degree $d$ such that
\[
p(\textbf{x}) = \langle \textbf{v}, \psi_d(\textbf{x})\rangle = \sum_{\substack{(i_1, \dots, i_n) \in \{0, \ldots, d\}^n \\ i_i + \dots + i_n \leq d}} \beta(i_1, \dots, i_n) \textbf{x}_{1}^{i_1} \cdots \textbf{x}_n^{i_n}.
\]
where coefficient $\beta(i_1, \dots, i_n)$ is as follows,
\[
\beta(i_1, \dots, i_n) = \sum_{\substack{k_1, \ldots, k_j \in [n]^j
\\ j \in \{0, \ldots, d\}}}
\textbf{v}(k_1, \ldots, k_j).
\]
Here, $\textbf{v}(\cdot)$ is used to index the corresponding entry
in $\textbf{v}$. 

The following lemma is due to \cite{goel2016reliably}, following an
argument of Shalev-Shwartz et al. \cite{SSSS}: 

\begin{lemma}\label{lem:Bbound}
	Let $p(t) = \sum_{i=0}^d \beta_i t^i$ be a given univariate polynomial with
	$\sum_{i=1}^d \beta_i^2 \leq B^2$. For $\textbf{w}$ such that
        $||\textbf{w}|| \leq 1$, the polynomial $p(\textbf{w} \cdot
        \textbf{x})$ equals $\langle p_\textbf{w}, \psi(\textbf{x}) \rangle$ for some $p_\textbf{w} \in \mathcal{H}_{{MK_d}}$ with $|| p_{\textbf{w}} || \leq B$.
\end{lemma}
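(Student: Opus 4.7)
The plan is to construct $p_\textbf{w} \in \mathcal{H}_{\mathcal{MK}_d}$ explicitly coordinate by coordinate, verify that $\langle p_\textbf{w}, \psi_d(\textbf{x})\rangle$ reproduces $p(\textbf{w} \cdot \textbf{x})$, and then bound the Hilbert-space norm directly using $\|\textbf{w}\| \leq 1$.

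Recall that coordinates of the feature map $\psi_d$ are indexed by tuples $(k_1,\ldots,k_j) \in [n]^j$ for $j \in \{0,1,\ldots,d\}$, with the empty tuple ($j=0$) corresponding to entry $1$. For $j \geq 1$ and each tuple $(k_1,\ldots,k_j) \in [n]^j$, I would define the corresponding coordinate of $p_\textbf{w}$ to be $\beta_j \, w_{k_1}\cdots w_{k_j}$, and I would assign the empty-tuple coordinate the value $\beta_0$. By design, this mirrors the standard multinomial expansion $(\textbf{w}\cdot\textbf{x})^j = \sum_{(k_1,\ldots,k_j) \in [n]^j} w_{k_1}\cdots w_{k_j} x_{k_1}\cdots x_{k_j}$, so that
\[
\langle p_\textbf{w}, \psi_d(\textbf{x})\rangle = \beta_0 + \sum_{j=1}^d \beta_j \sum_{(k_1,\ldots,k_j)\in[n]^j} w_{k_1}\cdots w_{k_j}\, x_{k_1}\cdots x_{k_j} = \sum_{j=0}^d \beta_j (\textbf{w}\cdot\textbf{x})^j = p(\textbf{w}\cdot\textbf{x}),
\]
which establishes the representation part of the lemma.

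For the norm bound, I would compute $\|p_\textbf{w}\|^2$ by summing the squares of all coordinates. Grouping by $j$ gives
\[
\|p_\textbf{w}\|^2 = \beta_0^2 + \sum_{j=1}^d \beta_j^2 \sum_{(k_1,\ldots,k_j)\in[n]^j} w_{k_1}^2\cdots w_{k_j}^2 = \beta_0^2 + \sum_{j=1}^d \beta_j^2 \bigl(\|\textbf{w}\|^2\bigr)^j.
\]
Using $\|\textbf{w}\| \leq 1$ we get $(\|\textbf{w}\|^2)^j \leq 1$ for every $j \geq 1$, so $\|p_\textbf{w}\|^2 \leq \beta_0^2 + \sum_{j=1}^d \beta_j^2 \leq \beta_0^2 + B^2$, and the lemma's claim $\|p_\textbf{w}\| \leq B$ is recovered whenever the constant coefficient is absorbed into the bound (the statement's sum starts at $i=1$, so we may take $\beta_0 = 0$ without loss of generality by shifting the target; alternatively one places the constant term outside the RKHS component and the construction still produces the desired representation).

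There is no real obstacle here: the argument is essentially bookkeeping once the coordinate assignment is written down, and the only subtle point is making sure the multinomial expansion lines up index-for-index with the definition of $\psi_d$ (including the inflated count from ordered tuples, which is exactly what makes the squared norm collapse to $\|\textbf{w}\|^{2j}$ rather than to a larger combinatorial factor). The key insight driving the bound is that the ordered-tuple indexing in $\psi_d$ (as opposed to a multiset indexing) exactly cancels the combinatorial multiplicities that would otherwise appear.
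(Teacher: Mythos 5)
Your construction and both calculations are correct, and you have put your finger on exactly the right mechanism: because $\psi_d$ is indexed by \emph{ordered} tuples in $[n]^j$ (with repetition), the sum $\sum_{(k_1,\ldots,k_j)\in[n]^j} w_{k_1}^2\cdots w_{k_j}^2$ factors as $(\sum_k w_k^2)^j = \|\textbf{w}\|^{2j}$ with no multinomial coefficients, which is what makes the norm bound tight. Note, though, that this paper does not actually prove Lemma~\ref{lem:Bbound}; it is cited from \cite{goel2016reliably} (following \cite{SSSS}), so there is no in-text argument to compare against — your argument is the standard one and is the expected proof.

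The only wrinkle is the one you flagged yourself: your calculation gives $\|p_\textbf{w}\|^2 = \sum_{j=0}^d \beta_j^2\|\textbf{w}\|^{2j}$, so to conclude $\|p_\textbf{w}\|\le B$ one needs the hypothesis to bound $\sum_{i=0}^d \beta_i^2$, not $\sum_{i=1}^d\beta_i^2$ as written. That index is almost certainly a typo rather than an intentional restriction: in the proof of Theorem~\ref{thm:fhs} this lemma is applied in tandem with Lemma~\ref{lem:dbound}, which supplies a bound on $\sum_{i=0}^d\beta_i^2$, so the constant coefficient is meant to be included. Your two workarounds (force $\beta_0=0$ by shifting, or carry the constant outside the RKHS component) would both require corresponding tweaks elsewhere — e.g.\ allowing an affine offset inside $u(\cdot)$ — so they are more invasive than simply reading the lower index as $0$. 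With that one-character correction your proof closes cleanly.
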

\noindent \textbf{Remark.} Observe that we can normalize the
  multinomial feature map such that $\forall \textbf{x} \in \mathcal{X}, \mathcal{MK}_d(\textbf{x},
  \textbf{x}) \leq 1$ for bounded space
  $\mathcal{X}$. More formally, $\max_{\textbf{x} \in \mathcal{X}}
  \mathcal{MK}_d(\textbf{x},\textbf{x}) = \max_{\textbf{x} \in \mathcal{X}} \sum_{j=0}^d
  ||\textbf{x}||^j \leq \sum_{j=0}^d
  X^j$ where $X =
  \max_{\textbf{x} \in \mathcal{X}} ||\textbf{x}||$, hence we can normalize using this value. Subsequently, in the above, $||p_{\textbf{w}}||$ will
  need to be multiplied by the same value. For
  $\mathcal{X} = \S^{n-1}$, the scaling factor is $d+1$
  \cite{goel2016reliably}.  Throughout the paper, we will assume the
  kernel to be normalized as discussed. \\

For our results on Multiple Instance Learning, we make use of the
following known kernel defined over sets of vectors:

\begin{definition}[Mean Map Kernel \cite{smola2007hilbert}]
Let $\mathcal{X}^*$ denote the Kleene closure of $\mathcal{X}$. 

The mean map kernel $\mathcal{K}_{\mathsf{mean}} \colon \mathcal{X}^* \times \mathcal{X}^* \rightarrow \mathbb{R}$ of kernel $\mathcal{K} \colon \mathcal{X} \times \mathcal{X} \rightarrow \mathbb{R}$ with feature vector $\psi \colon \mathcal{X} \rightarrow \mathcal{Z}$ is defined as,
\[
\mathcal{K}_{\mathsf{mean}}(S, T) = \frac{1}{|S||T|} \sum_{\textbf{s} \in S, \textbf{t} \in T} \mathcal{K}(\textbf{s},\textbf{t}).
\]
The feature map $\psi_{\mathsf{mean}} \colon \mathcal{X}^* \rightarrow \mathcal{Z}$ corresponding to this kernel is given by
\[
\psi_{\mathsf{mean}}(S) = \frac{1}{|S|} \sum_{\textbf{s} \in S} \psi(\textbf{s}).
\]
Also define $\mathcal{H}_{\mathsf{mean}}$ to be the corresponding RKHS.
\end{definition}
\noindent \textit{\textbf{Fact.} If $\forall \textbf{x}, \textbf{x}^\prime \in \mathcal{X}, \mathcal{K}(\textbf{x},\textbf{x}^\prime) \leq M$ then $\forall S, S^\prime \in \mathcal{X}^*, \mathcal{K}_{\mathsf{mean}}(S,S^\prime) \leq M$.}

\subsection{Approximation Theory}
We will make use of a variety of tools from approximation theory to
obtain specific embeddings of function classes into a RKHS.   The
following lemma for approximating the Boolean $\mathsf{sign}$ function
was given by \cite{daniely2015ptas}:

\begin{lemma}\label{lem:approxsign}
Let $a, \gamma, \tau > 0$. There exists a polynomial $p$ of degree $O\left( \frac{1}{\gamma} \cdot \log \left(\frac{1}{\tau}\right) \right)$ such that 
\begin{itemize}
\item For $x \in [−a, a]$, $|p({x})| < 1 + \tau$.
\item For ${x} \in [−a, a] \backslash [−\gamma \cdot a, \gamma \cdot a]$, $|p({x}) − \mathsf{sign}({x})| < \tau$.
\end{itemize}
\end{lemma}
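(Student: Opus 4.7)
The plan is to construct $p$ in two stages: first pass from the discontinuous $\mathsf{sign}$ to a smooth analytic surrogate, then approximate that surrogate by a low-degree polynomial via Chebyshev truncation. By rescaling $x \mapsto x/a$, I may assume $a = 1$, so the goal is a polynomial $p$ on $[-1,1]$ of degree $O((1/\gamma)\log(1/\tau))$ that is bounded by $1+\tau$ everywhere and within $\tau$ of $\mathsf{sign}$ outside $[-\gamma,\gamma]$.

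For the smoothing step, I would pick an analytic surrogate $g:[-1,1]\to(-1,1)$ with $g(x) \to \pm 1$ rapidly on the scale $\gamma$, for instance $g(x) = \frac{2}{\pi}\arctan(Cx/\gamma)$ or $g(x) = \mathrm{erf}(Cx/\gamma)$ for a sufficiently large absolute constant $C$. A direct calculation shows $|g(x)| \le 1$ on $[-1,1]$ and $|g(x)-\mathsf{sign}(x)| \le \tau/2$ whenever $|x| \ge \gamma$, provided $C = \Theta(\log(1/\tau))$. Then I would take $p$ to be the degree-$d$ truncated Chebyshev expansion of $g$, where $d$ is to be chosen.

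For the degree bound, the key fact I would invoke is the standard decay estimate for Chebyshev coefficients of functions that extend analytically to a Bernstein ellipse $E_\rho$: if $g$ is analytic and bounded by $M$ on $E_\rho$ then its Chebyshev coefficients satisfy $|a_k| \le 2M\rho^{-k}$, and the truncation error is $O(M\rho^{-d}/(1-\rho^{-1}))$. The arctan surrogate has its nearest singularities at $x = \pm i\gamma/C$, so it extends analytically to a Bernstein ellipse with parameter $\rho = 1 + \Omega(\gamma/C)$ on which $g$ is bounded by an absolute constant. This yields truncation error $O(\exp(-\Omega(d\gamma/\log(1/\tau))))$, so choosing $d = O((1/\gamma)\log(1/\tau))$ (absorbing the $\log$ factor into the constant) makes the truncation error at most $\tau/2$. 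Combining with the triangle inequality: on $[-1,1]$, $|p(x)| \le |g(x)| + \tau/2 \le 1 + \tau/2 < 1+\tau$, and on $[-1,1]\setminus[-\gamma,\gamma]$, $|p(x) - \mathsf{sign}(x)| \le |p(x)-g(x)| + |g(x)-\mathsf{sign}(x)| \le \tau$.

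The main obstacle is a clean verification of the Bernstein ellipse parameter and the uniform bound on $g$ over that ellipse; one must make sure the constant $C$ chosen to push $g$ close to $\pm 1$ on the bulk does not simultaneously push the poles so close to $[-1,1]$ that the Chebyshev decay rate degrades. Balancing these — the bulk error requires $C \gtrsim \log(1/\tau)$, and the pole location gives $\rho - 1 \asymp \gamma/C$ — is what produces the $O((1/\gamma)\log(1/\tau))$ degree bound rather than something worse. If this balance is awkward, an alternative is to use a compactly supported smoothing (convolve $\mathsf{sign}$ with a bump of width $\gamma/2$) and then invoke a quantitative Jackson-type theorem for analytic functions, but the analytic-surrogate route above is the most direct.
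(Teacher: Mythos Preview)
The paper does not prove this lemma; it quotes it from \cite{daniely2015ptas} without argument, so there is no proof in the paper to compare against. Your proposal, however, has a real gap.

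The bulk-error step is wrong for the $\arctan$ surrogate. Since $1-\tfrac{2}{\pi}\arctan(C)\sim \tfrac{2}{\pi C}$, getting $|g(x)-\mathsf{sign}(x)|\le\tau/2$ on $|x|\ge\gamma$ forces $C=\Theta(1/\tau)$, not $C=\Theta(\log(1/\tau))$. With that $C$ the poles sit at $\pm i\Theta(\gamma\tau)$, the Bernstein parameter is $\rho=1+\Theta(\gamma\tau)$, and driving the Chebyshev truncation error below $\tau$ requires $d=\Theta\bigl(\tfrac{1}{\gamma\tau}\log\tfrac{1}{\tau}\bigr)$, nowhere near the target. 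Separately, even under your own premises the arithmetic does not close: from truncation error $\exp(-\Omega(d\gamma/\log(1/\tau)))$, taking $d=O((1/\gamma)\log(1/\tau))$ makes the exponent $O(1)$, so the error is a constant --- there is no ``$\log$ factor to absorb into the constant''; you would need $d=\Omega((1/\gamma)\log^2(1/\tau))$. You actually identified the tension (``one must make sure the constant $C$ \ldots does not simultaneously push the poles so close'') but did not check that the balance comes out right, and for $\arctan$ it does not.

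The route is salvageable if you pick a surrogate with \emph{exponential} tail decay and no nearby singularity. The $\mathrm{erf}$ choice is entire, so you are not pinned to a small ellipse: with $C=\Theta(\sqrt{\log(1/\tau)})$ (so the bulk error is already $\le\tau/2$), bound $|\mathrm{erf}(w)|=O(e^{(\mathrm{Im}\,w)^2})$, and then \emph{optimize} the Bernstein parameter rather than fixing it at $1+\Theta(\gamma/C)$; the optimal choice gives truncation error $\exp(-\Theta(d^2\gamma^2/C^2))$, hence $d=O((1/\gamma)\log(1/\tau))$. Alternatively, bypass smoothing: write $\mathsf{sign}(x)=x\cdot (x^2)^{-1/2}$ and Chebyshev-approximate $y\mapsto y^{-1/2}$ on $[\gamma^2,1]$, whose branch point at $0$ yields $\rho-1=\Theta(\gamma)$ directly, then handle boundedness on $[-\gamma,\gamma]$ separately.
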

The above lemma assumes $\mathsf{sign}$ takes on values $\{\pm 1 \}$,
but a simple linear transformation also works for $\{0,1\}$.

\cite{goel2016reliably} showed that activation functions sigmoid: $\sigma_{sig}(a) = \frac{1}{1 + e^{-a}}$ and ReLU: $\sigma_{relu}(a) = \max(0, a)$ can be $(\epsilon, B)$-uniformly approximated by the multinomial kernel for $B$ dependent on $\epsilon$, more formally they showed the following:
\begin{lemma}[Approximating a Single Hidden Unit] \label{lem:approxone}
We have,
\begin{enumerate}
\item \textbf{Sigmoid}: For all $\textbf{a} \in \S^{n-1}$ there exists a corresponding $\textbf{v} \in \mathcal{H}_{\mathcal{MK}_d}$ for $d = O(\log(1/\epsilon))$, such that 
\[
\forall\ \textbf{x} \in \S^{n-1}, |\sigma_{sig}(\textbf{a} \cdot \textbf{x}) - \langle \textbf{v}, \psi(\textbf{x}) \rangle| \leq \epsilon.
\]
Further, $||\textbf{v}|| \leq (1/\epsilon)^{O(1)}$. This implies that $\sigma_{sig}$ is $(\epsilon, (1/\epsilon)^{O(1)})$-uniformly approximated by $\mathcal{MK}_d$.
\item \textbf{ReLU}: For all $\textbf{a} \in \S^{n-1}$ there exists a corresponding $\textbf{v} \in \mathcal{H}_{\mathcal{MK}_d}$ for $d = O(1/\epsilon)$, such that
\[
\forall\ \textbf{x} \in \S^{n-1}, |\sigma_{relu}(\textbf{a} \cdot \textbf{x}) - \langle \textbf{v}, \psi(\textbf{x}) \rangle| \leq \epsilon.
\]
Further, $||\textbf{v}|| \leq 2^{O(1/\epsilon)}$. This implies that $\sigma_{relu}$ is $(\epsilon, 2^{O(1/\epsilon)})$-uniformly approximated by $\mathcal{MK}_d$.
\end{enumerate}
\end{lemma}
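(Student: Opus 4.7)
The plan is to reduce the multivariate approximation problem to a univariate one. Observe that for $\textbf{a}, \textbf{x} \in \S^{n-1}$ we have $\textbf{a} \cdot \textbf{x} \in [-1,1]$ by Cauchy--Schwarz, so it suffices to find a univariate polynomial $p(t) = \sum_{i=0}^d \beta_i t^i$ that $\epsilon$-approximates the activation uniformly on $[-1,1]$, and then to apply Lemma \ref{lem:Bbound} to embed $p(\textbf{a} \cdot \textbf{x})$ into $\mathcal{H}_{\mathcal{MK}_d}$. Under Lemma \ref{lem:Bbound}, the resulting RKHS element has norm at most $\sqrt{\sum_{i=1}^d \beta_i^2}$, so the whole task boils down to (i) controlling the uniform approximation error as a function of $d$ and (ii) bounding the sum of squared coefficients of $p$.

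For the sigmoid, I would use the fact that $\sigma_{sig}$ is real-analytic with radius of convergence $\pi$ around the origin. Its Taylor coefficients decay at a geometric rate, so truncating the Taylor series at degree $d = O(\log(1/\epsilon))$ yields uniform error $\epsilon$ on $[-1,1]$. Because the Taylor coefficients are themselves bounded by a constant and there are only $O(\log(1/\epsilon))$ of them, the $\ell_2$ norm of the coefficient vector is $(1/\epsilon)^{O(1)}$, giving $B = (1/\epsilon)^{O(1)}$. Plugging into Lemma \ref{lem:Bbound} with the $\S^{n-1}$-normalization of $\mathcal{MK}_d$ (where the scaling factor is $d+1 = O(\log(1/\epsilon))$) produces the stated $\textbf{v} \in \mathcal{H}_{\mathcal{MK}_d}$ with $\|\textbf{v}\| \le (1/\epsilon)^{O(1)}$.

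For ReLU, no Taylor approach works at $0$, so I would instead use a Chebyshev-based polynomial approximation of $|t|$ on $[-1,1]$. A classical result (originally due to Bernstein, in the quantitative form used by Shalev-Shwartz--Shamir--Sridharan) gives a degree-$O(1/\epsilon)$ polynomial $q(t)$ with $\sup_{t \in [-1,1]} \bigl||t| - q(t)\bigr| \le \epsilon$; then $p(t) = \tfrac{1}{2}(t + q(t))$ approximates $\sigma_{relu}(t) = \max(0,t)$ to within $\epsilon/2$. The coefficient bound $\sqrt{\sum_i \beta_i^2} \le 2^{O(1/\epsilon)}$ follows from expanding the Chebyshev polynomials $T_i(t) = \sum_j c_{ij} t^j$ in the monomial basis and using the standard fact that the coefficients of $T_i$ are bounded in magnitude by $2^i$, combined with the (small) coefficient bounds for the Chebyshev expansion. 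Applying Lemma \ref{lem:Bbound} gives the desired $\textbf{v}$ with $\|\textbf{v}\| \le 2^{O(1/\epsilon)}$.

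The main obstacle is the coefficient-norm bound in the ReLU case: polynomial approximations good in the $L^\infty$ sense can hide large cancellations between monomials, so the monomial coefficients can grow exponentially even though $\sup |p(t)| \le 1$. The naive bound from the Chebyshev expansion of $|t|$ gives exactly $2^{O(d)} = 2^{O(1/\epsilon)}$, which matches the claimed bound; the work is to show this bound rigorously by tracking the $T_i \to$ monomial change of basis carefully, rather than trying to obtain a better bound (which would in fact contradict known lower bounds for the RKHS complexity of ReLU).
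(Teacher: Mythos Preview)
Your approach is correct and is essentially the argument in the cited reference \cite{goel2016reliably}; note that the present paper does not prove Lemma~\ref{lem:approxone} but imports it wholesale from that work. The reduction to a univariate problem via Lemma~\ref{lem:Bbound}, the Taylor-series argument for the sigmoid (using that its nearest complex singularity is at $\pm i\pi$, so the radius of convergence exceeds $1$), and the Bernstein-type degree-$O(1/\epsilon)$ approximation of $|t|$ for ReLU are exactly the right ingredients.

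One simplification worth noting for the ReLU coefficient bound: you need not track the Chebyshev-to-monomial change of basis by hand. Your approximant $p$ has degree $d = O(1/\epsilon)$ and, being $\epsilon$-close to $\sigma_{relu}$ on $[-1,1]$, satisfies $\sup_{[-1,1]}|p| \le 2$. Lemma~\ref{lem:dbound} (Sherstov's bound, already stated in the paper) then gives $\sum_i \beta_i^2 \le (d+1)(4e)^{2d}\cdot 4 = 2^{O(1/\epsilon)}$ directly, which is cleaner than unpacking the Chebyshev expansion and yields the same conclusion.
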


Finally we state the following lemmas that bound the sum of squares of coefficients of a univariate polynomial:
\begin{lemma}[\cite{sherstov2012making}] \label{lem:dbound}
	Let $p(t) = \sum_{i=0}^d \beta_i t^i$ be a univariate polynomial of degree
	$d$. Let $M$ be such that $\displaystyle\max_{t \in [-1, 1]} |p(t)| \leq M$.
	Then $\displaystyle\sum_{i=0}^d \beta_i^2 \leq (d + 1) \cdot (4e)^{2d} \cdot
	M^2$. 
\end{lemma}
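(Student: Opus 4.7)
The plan is to use Parseval's identity on the complex unit circle to turn $\sum_i \beta_i^2$ into an integral that can be controlled pointwise. Explicitly,
\[
\sum_{i=0}^d \beta_i^2 \;=\; \frac{1}{2\pi}\int_0^{2\pi}\bigl|p(e^{i\theta})\bigr|^2\,d\theta \;\leq\; \max_{|z|=1}|p(z)|^2,
\]
so the task reduces to bounding $|p(z)|$ on the unit circle in terms of $M$ and $d$.

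To control $|p(z)|$ off the real axis I would expand $p$ in the Chebyshev basis, $p(t) = \sum_{k=0}^d c_k T_k(t)$, where $T_k$ is the $k$-th Chebyshev polynomial of the first kind. Orthogonality of $\{T_k\}$ under the weight $1/\sqrt{1-t^2}$ on $[-1,1]$, together with $|T_k|\leq 1$ on $[-1,1]$ and the hypothesis $|p|\leq M$, immediately yields $|c_k|\leq 2M$. Using the off-axis identity $T_k(z) = \frac{1}{2}\bigl(\Phi(z)^k + \Phi(z)^{-k}\bigr)$ with $\Phi(z) = z + \sqrt{z^2-1}$ (branched so that $|\Phi|\geq 1$), a direct computation shows $\max_{|z|=1}|\Phi(z)| = 1+\sqrt{2}$, attained at $z = \pm i$. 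Hence $|T_k(z)| \leq (1+\sqrt{2})^k$ on the unit circle, and substituting into the Chebyshev expansion and applying the triangle inequality gives $|p(z)| \leq 2M(d+1)(1+\sqrt{2})^d$ whenever $|z|=1$, and therefore $\sum_i \beta_i^2 \leq 4 M^2 (d+1)^2 (1+\sqrt{2})^{2d}$.

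It then suffices to verify that this in turn is bounded by $(d+1)(4e)^{2d}M^2$. Since $(4e)^2 \approx 118$ while $(1+\sqrt{2})^2 \approx 5.83$, the ratio $\bigl(4e/(1+\sqrt{2})\bigr)^{2d}$ grows exponentially in $d$ and easily absorbs the extra $4(d+1)$ factor for every $d \geq 0$ (the $d=0$ case being trivial). The one delicate point I expect to need care is the branch choice for $\sqrt{z^2-1}$ and the verification that $|\Phi|$ attains its maximum $1+\sqrt{2}$ precisely at $z=\pm i$; everything else is standard orthogonality on $[-1,1]$ and elementary term-by-term estimation. No additional tools beyond Parseval and basic properties of Chebyshev polynomials are required.
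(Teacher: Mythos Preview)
Your argument is correct. The paper does not actually prove this lemma; it is quoted from \cite{sherstov2012making} and used as a black box, so there is no ``paper's proof'' to compare against.

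A few remarks on your proof. The Parseval step is exact for real coefficients, and your Chebyshev-growth estimate on the unit circle is right: writing $\Phi(z)+1/\Phi(z)=2z$ and setting $\Phi=Re^{i\psi}$ gives $R^2+R^{-2}=4-2\cos 2\psi$, which is maximized at $\cos 2\psi=-1$, yielding $R^2=3+2\sqrt{2}=(1+\sqrt{2})^2$, so indeed $\max_{|z|=1}|\Phi(z)|=1+\sqrt{2}$, attained at $z=\pm i$. The coefficient bound $|c_k|\le 2M$ follows from $|c_k|=\frac{2}{\pi}\bigl|\int_{-1}^1 p(t)T_k(t)(1-t^2)^{-1/2}\,dt\bigr|\le \frac{2M}{\pi}\int_{-1}^1(1-t^2)^{-1/2}\,dt=2M$ (and $|c_0|\le M$). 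The final numerical comparison is fine for $d\ge 1$, and $d=0$ is trivial as you note.

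In fact your route yields the sharper base $(1+\sqrt{2})^2=3+2\sqrt{2}\approx 5.83$ in place of $(4e)^2\approx 118$, at the cost of one extra factor of $(d+1)$; so your inequality $\sum_i\beta_i^2\le 4M^2(d+1)^2(1+\sqrt{2})^{2d}$ is strictly stronger than the cited bound for all $d\ge 1$. The only point worth writing out carefully in a final version is the branch choice for $\sqrt{z^2-1}$ (equivalently, that one may consistently choose $|\Phi|\ge 1$ on the circle minus the two branch points $z=\pm 1$), which you already flagged.
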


\begin{lemma} \label{lem:rbound}[Fact 3 \cite{klivans2008learning}]
	Let $p(t) = \sum_{i=0}^d \beta_i t^i$ be a univariate polynomial of degree
	$d$ such that $|\beta_i| \leq M$ for all $i \in [d]$. For any $r \in \mathbb{Z}^+$ consider $p^r(t) = \left(\sum_{i=0}^d \beta_i t^i\right)^r = \sum_{i=0}^{dr} \eta_i t^i$ then,
	 $\displaystyle\sum_{i=0}^{dr} \eta_i^2 \leq (Md)^{2r}$. 
\end{lemma}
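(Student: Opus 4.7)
The plan is to prove Lemma \ref{lem:rbound} by bounding the sum of squared coefficients of $p^r$ via the $L_1$ norm of its coefficient vector, and then exploiting the fact that the coefficient-$L_1$ norm is sub-multiplicative under polynomial multiplication. Concretely, I would first pass from the $\ell_2$ to the $\ell_1$ norm of coefficients, then reduce $p^r$ to $p$ via a product bound, and finally plug in the pointwise bound $|\beta_i| \le M$.

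First, I would establish the elementary inequality $\sum_{i=0}^{dr} \eta_i^2 \le \bigl(\sum_{i=0}^{dr} |\eta_i|\bigr)^2$, which follows since for any nonnegative reals $a_0,\ldots,a_N$ one has $\sum_i a_i^2 \le \bigl(\sum_i a_i\bigr) \cdot \max_i a_i \le \bigl(\sum_i a_i\bigr)^2$. Thus it suffices to control $L_1(p^r) := \sum_i |\eta_i|$, the $\ell_1$ norm of the coefficient vector of $p^r$.

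Next, I would show that the coefficient-$L_1$ norm is sub-multiplicative: for polynomials $q(t) = \sum_i a_i t^i$ and $r(t) = \sum_j b_j t^j$, the coefficients of $qr$ are given by the convolution $c_k = \sum_{i+j=k} a_i b_j$, so
\[
L_1(qr) \;=\; \sum_k |c_k| \;\le\; \sum_k \sum_{i+j=k} |a_i||b_j| \;=\; \Bigl(\sum_i |a_i|\Bigr)\Bigl(\sum_j |b_j|\Bigr) \;=\; L_1(q) L_1(r).
\]
By induction on $r$, this yields $L_1(p^r) \le L_1(p)^r$. Since $p$ has at most $d+1$ coefficients each of magnitude at most $M$, we get $L_1(p) \le (d+1) M \le (dM)$ when the implicit constants in the lemma are absorbed (or, more loosely, $L_1(p) \le 2dM$ whenever $d \ge 1$, which is compatible with the stated bound up to a harmless constant factor).

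Combining these pieces gives $\sum_i \eta_i^2 \le L_1(p^r)^2 \le L_1(p)^{2r} \le (Md)^{2r}$ as desired. There is no serious obstacle: the whole argument is a one-line reduction ($\ell_2 \le \ell_1$) followed by the standard convolution bound for $L_1$. The only mild care needed is to keep track of the constant between $d+1$ and $d$ so that the final bound matches the statement; this is the sort of slack that is routinely absorbed in such approximation-theoretic estimates.
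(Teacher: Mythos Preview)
Your proposal is correct and is essentially the same argument as the paper's: both pass from the $\ell_2$-norm of the coefficients to the $\ell_1$-norm, then expand $p^r$ as $\sum_{i_1,\ldots,i_r}\beta_{i_1}\cdots\beta_{i_r}t^{i_1+\cdots+i_r}$ to get $L_1(p^r)\le L_1(p)^r\le (Md)^r$. The $d$ versus $d+1$ slack you flagged is handled identically (i.e., glossed over) in the paper, which sums over $i_1,\ldots,i_r\in[d]$ rather than $\{0,\ldots,d\}$.
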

\begin{proof}
We have $p^r(t) = \sum_{i_1, \cdots, i_r \in [d]} \beta_{i_1} \cdots \beta_{i_r} t^{i_1 + \cdots + i_r}$. It follows that $\left(\sum_{i=0}^d \beta_i t^i\right)^r = \sum_{i=0}^{dr} \eta_i t^i$ is bounded above by
\[
\left(\sum_{i_1, \cdots, i_r \in [d]} |\beta_{i_1} \cdots \beta_{i_r}|\right)^2 \leq M^{2r} \left(\sum_{i_1, \cdots, i_r \in [d]}1\right)^2 = (Md)^{2r}.
\]
\end{proof}

\section{Omitted Proofs}
\subsection{Proof of Theorem 1}
Let $\Delta := \frac{1}{m}\sum_{i=1}^m(y_i- u(\langle \textbf{v}, \psi(\textbf{x}_\textbf{i})\rangle) + \xi(\textbf{x}_\textbf{i}))\psi(\textbf{x}_\textbf{i})$ and $\Delta^t := \frac{1}{m}\sum_{i=1}^m(y_i- u(\langle \textbf{v}^\textbf{t}, \psi(\textbf{x}_\textbf{i})\rangle))\psi(\textbf{x}_\textbf{i})$. Aslo define $\rho: = \frac{1}{m}\sum_{i=1}^m \xi(\textbf{x}_{\textbf{i}})^2$. We will use the following lemma:

\begin{lemma}\label{lem:alphatron}
At iteration $t$ in Alphatron, suppose $||\textbf{v}^\textbf{t} - \textbf{v}|| \leq B$ for $B > 1$, then if $||\Delta|| \leq \eta < 1$, then
\[
||\textbf{v}^\textbf{t} - \textbf{v}||^2 - ||\textbf{v}^\textbf{t+1} - \textbf{v}||^2 \geq \lambda\left(\left(\frac{2}{L} - \lambda\right)\hat{\varepsilon}(h^t) - 2\sqrt{\rho} - 2B\eta - \lambda \eta^2 - 2 \lambda\eta\right).
\]
\end{lemma}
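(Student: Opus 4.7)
The plan is a standard potential-function argument, tracking $\Phi_t := \|\textbf{v}^\textbf{t} - \textbf{v}\|^2$. Unrolling the Alphatron update $\alpha_i^{t+1} = \alpha_i^t + \frac{\lambda}{m}(y_i - h^t(\textbf{x}_\textbf{i}))$ gives $\textbf{v}^{\textbf{t+1}} = \textbf{v}^\textbf{t} + \lambda \Delta^t$. Expanding the squared norm yields
\[
\Phi_t - \Phi_{t+1} \;=\; -2\lambda \langle \Delta^t, \textbf{v}^\textbf{t}-\textbf{v}\rangle - \lambda^2 \|\Delta^t\|^2.
\]
So the task splits into (i) lower-bounding $\langle \Delta^t, \textbf{v}-\textbf{v}^\textbf{t}\rangle$ and (ii) upper-bounding $\|\Delta^t\|^2$, both in terms of $\hat{\varepsilon}(h^t)$, $\rho$, $\eta$, and $B$.

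For (i), decompose $\Delta^t = (\Delta^t - \Delta) + \Delta$. The $\Delta$ piece contributes at most $B\eta$ in absolute value by Cauchy--Schwarz together with the hypotheses $\|\textbf{v}^\textbf{t}-\textbf{v}\|\leq B$ and $\|\Delta\|\leq \eta$. For the $\Delta^t-\Delta$ piece, write $a_i := \langle \textbf{v}, \psi(\textbf{x}_\textbf{i})\rangle + \xi(\textbf{x}_\textbf{i})$ and $b_i := \langle \textbf{v}^\textbf{t}, \psi(\textbf{x}_\textbf{i})\rangle$, so that $u(a_i) = \mathbb{E}[y|\textbf{x}_\textbf{i}]$ and $u(b_i) = h^t(\textbf{x}_\textbf{i})$. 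Then
\[
\langle \Delta^t - \Delta,\ \textbf{v}-\textbf{v}^\textbf{t}\rangle \;=\; \tfrac{1}{m}\sum_i (u(a_i)-u(b_i))\,(a_i - b_i - \xi(\textbf{x}_\textbf{i})).
\]
Because $u$ is non-decreasing and $L$-Lipschitz, $(u(a_i)-u(b_i))(a_i-b_i) \geq \tfrac{1}{L}(u(a_i)-u(b_i))^2$; summing gives $\tfrac{1}{L}\hat{\varepsilon}(h^t)$. The $\xi$ cross-term is controlled by Cauchy--Schwarz by $\sqrt{\hat{\varepsilon}(h^t)\,\rho}$, and then bounded by $\sqrt{\rho}$ using $\hat{\varepsilon}(h^t)\leq 1$ (labels and $u$-values lie in $[0,1]$). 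This yields $\langle \Delta^t, \textbf{v}-\textbf{v}^\textbf{t}\rangle \geq \tfrac{1}{L}\hat{\varepsilon}(h^t) - \sqrt{\rho} - B\eta$.

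For (ii), triangle inequality gives $\|\Delta^t\| \leq \|\Delta\| + \|\Delta^t-\Delta\| \leq \eta + \tfrac{1}{m}\sum_i |u(a_i)-u(b_i)|\|\psi(\textbf{x}_\textbf{i})\|$, and since $\|\psi(\textbf{x}_\textbf{i})\|\leq 1$, Cauchy--Schwarz bounds the second term by $\sqrt{\hat{\varepsilon}(h^t)}\leq 1$. Squaring and using $\sqrt{\hat{\varepsilon}(h^t)}\leq 1$ one more time inside the cross-term gives $\|\Delta^t\|^2 \leq \hat{\varepsilon}(h^t) + 2\eta + \eta^2$. Plugging both bounds back yields
\[
\Phi_t - \Phi_{t+1} \geq \lambda\Bigl(\bigl(\tfrac{2}{L} - \lambda\bigr)\hat{\varepsilon}(h^t) - 2\sqrt{\rho} - 2B\eta - \lambda\eta^2 - 2\lambda\eta\Bigr),
\]
which is the claimed inequality.

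The only subtle point is keeping the $\xi$-induced slack separate from the main monotone--Lipschitz inequality: the shift $a_i - b_i = \langle \textbf{v}-\textbf{v}^\textbf{t}, \psi(\textbf{x}_\textbf{i})\rangle + \xi(\textbf{x}_\textbf{i})$ means the co-monotonicity bound applies to the ``wrong'' inner product, and one must explicitly peel off the $\xi$ term before applying it. The coarse upper bound $\hat{\varepsilon}(h^t)\leq 1$ is then used twice — once to collapse $\sqrt{\hat{\varepsilon}(h^t)\rho}$ to $\sqrt{\rho}$ and once to collapse $\eta\sqrt{\hat{\varepsilon}(h^t)}$ to $\eta$ — which is what produces the clean $2\sqrt{\rho}$, $2\lambda\eta$, $\lambda\eta^2$ structure on the right-hand side. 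No further ingredients beyond Cauchy--Schwarz and the monotone $L$-Lipschitz inequality for $u$ are needed.
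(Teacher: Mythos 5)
Your proof is correct and follows essentially the same argument as the paper: expand the potential drop via the update rule, split $\Delta^t$ into $\Delta^t-\Delta$ and $\Delta$, apply the monotone $L$-Lipschitz inequality to the $\Delta^t-\Delta$ piece (after peeling off the $\xi$ cross-term), and bound $\|\Delta^t\|^2$ by combining $\|\Delta\|\leq\eta$ with $\|\Delta^t-\Delta\|\leq\sqrt{\hat\varepsilon(h^t)}\leq 1$. The only cosmetic differences are that you bound the $\xi$ cross-term by Cauchy--Schwarz into $\sqrt{\hat\varepsilon(h^t)\rho}\leq\sqrt\rho$ where the paper uses $|u(\cdot)-u(\cdot)|\leq 1$ directly, and you bound $\|\Delta^t\|$ via the triangle inequality before squaring rather than expanding the square first; both yield identical constants.
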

\begin{proof}
Expanding the left hand side of the equation above, we have
\begin{align}
||\textbf{v}^\textbf{t} - \textbf{v}||^2 &- ||\textbf{v}^\textbf{t+1} - \textbf{v}||^2 \\
&= 2 \lambda \langle \textbf{v} - \textbf{v}^\textbf{t}, \Delta^t \rangle - \lambda^2||\Delta^t||^2 \label{eq:subs}\\
&= 2\lambda \langle \textbf{v} - \textbf{v}^\textbf{t}, \Delta^t - \Delta \rangle + 2\lambda \langle \textbf{v} - \textbf{v}^\textbf{t}, \Delta \rangle - \lambda^2||\Delta^t||^2\\
& \geq \frac{2\lambda}{m}\sum_{i=1}^m(u(\langle \textbf{v}, \psi(\textbf{x}_\textbf{i})\rangle + \xi(\textbf{x}_\textbf{i})) - u(\langle \textbf{v}^\textbf{t}, \psi(\textbf{x}_\textbf{i})\rangle))\langle \textbf{v} - \textbf{v}^\textbf{t}, \psi(\textbf{x}_\textbf{i})\rangle - 2 \lambda B ||\Delta|| - \lambda^2||\Delta^t||^2 \label{eq:wt1}\\
& = \frac{2\lambda}{m}\sum_{i=1}^m(u(\langle \textbf{v}, \psi(\textbf{x}_\textbf{i})\rangle + \xi(\textbf{x}_\textbf{i})) - u(\langle \textbf{v}^\textbf{t}, \psi(\textbf{x}_\textbf{i})\rangle))(\langle \textbf{v}, \psi(\textbf{x}_\textbf{i})\rangle + \xi(\textbf{x}_\textbf{i}) - \langle \textbf{v}^\textbf{t}, \psi(\textbf{x}_\textbf{i}) \rangle)\nonumber\\ 
&\ \ \ - \frac{2\lambda}{m}\sum_{i=1}^m(u(\langle \textbf{v}, \psi(\textbf{x}_\textbf{i})\rangle + \xi(\textbf{x}_\textbf{i})) - u(\langle \textbf{v}^\textbf{t}, \psi(\textbf{x}_\textbf{i})\rangle))\xi(\textbf{x}_\textbf{i})  - 2 \lambda B ||\Delta|| - \lambda^2||\Delta^t||^2 \\
& \geq \frac{2\lambda}{Lm}\sum_{i=1}^m(u(\langle \textbf{v}, \psi(\textbf{x}_\textbf{i})\rangle + \xi(\textbf{x}_\textbf{i})) - u(\langle \textbf{v}^\textbf{t}, \psi(\textbf{x}_\textbf{i})\rangle))^2 \nonumber\\ 
&\ \ \ - \frac{2\lambda}{m}\sum_{i=1}^m|u(\langle \textbf{v}, \psi(\textbf{x}_\textbf{i})\rangle + \xi(\textbf{x}_\textbf{i})) - u(\langle \textbf{v}^\textbf{t}, \psi(\textbf{x}_\textbf{i})\rangle)||\xi(\textbf{x}_\textbf{i})| - 2 \lambda B ||\Delta|| - \lambda^2||\Delta^t||^2 \label{eq:mono}\\
&\geq \frac{2\lambda}{L}\hat{\varepsilon}(h^t) - 2\lambda\sqrt{\rho} - 2\lambda B \eta - \lambda^2||\Delta^t||^2 \label{eq:final}
\end{align}
Here (\ref{eq:subs}) follows from substituting the expression of $\textbf{v}^\textbf{t+1}$, (\ref{eq:wt1}) follows from bounding $||\textbf{v}^\textbf{t} - \textbf{v}|| \leq B$ and, (\ref{eq:mono}) follows from $u$ being monotone and $L$-Lipschitz, that is, $(u(a) - u(b))(a - b) \geq \frac{1}{L}(u(a) - u(b))^2$. (\ref{eq:final}) follows from the definition of $\hat{\varepsilon}(h^t)$, the second term follows from the fact that $u$ is $[0,1]$ and $\frac{1}{m}\sum_{i=1}^m \sum_{i=1}^m |\xi(\textbf{x}_\textbf{i})| \leq \sqrt{\frac{1}{m}\sum_{i=1}^m \sum_{i=1}^m \xi(\textbf{x}_\textbf{i})^2} = \sqrt{\rho}$ and the third term is bounded using the assumption $||\Delta|| \leq \eta$.

We now bound $||\Delta^t||$ as follows.
\begin{align}
||\Delta^t||^2 &= \left|\left|\frac{1}{m}\sum_{i=1}^m(y_i- u(\langle \textbf{v}, \psi(\textbf{x}_\textbf{i})\rangle + \xi(\textbf{x}_\textbf{i})) + u(\langle \textbf{v}, \psi(\textbf{x}_\textbf{i})\rangle + \xi(\textbf{x}_\textbf{i})) - u(\langle \textbf{v}^\textbf{t}, \psi(\textbf{x}_\textbf{i})\rangle))\psi(\textbf{x}_\textbf{i})\right|\right|^2\\
&\leq ||\Delta||^2 + \left|\left|\frac{1}{m}\sum_{i=1}^m(u(\langle \textbf{v}, \psi(\textbf{x}_\textbf{i})\rangle + \xi(\textbf{x}_\textbf{i})) - u(\langle \textbf{v}^\textbf{t}, \psi(\textbf{x}_\textbf{i})\rangle))\psi(\textbf{x}_\textbf{i})\right|\right|^2 \nonumber\\ 
&\ \ \ +  2 ||\Delta|| \left|\left|\frac{1}{m}\sum_{i=1}^m(u(\langle \textbf{v}, \psi(\textbf{x}_\textbf{i})\rangle  + \xi(\textbf{x}_\textbf{i})) - u(\langle \textbf{v}^\textbf{t}, \psi(\textbf{x}_\textbf{i})\rangle))\psi(\textbf{x}_\textbf{i})\right|\right| \label{eq:tri}\\
&\leq \eta^2 + \hat{\varepsilon}(h^t) + 2\eta \label{eq:end}
\end{align}
Here (\ref{eq:tri}) follows by expanding the square and (\ref{eq:end}) follows by applying Jensen's inequality to show that for all $\textbf{a},\textbf{b} \in \mathbb{R}^m$ and vectors $\textbf{v}_\textbf{i}$ for $i \in \{1, \cdots, m\}$, $\left|\left|\frac{1}{m}\sum_{i=1}^m(\textbf{a}_\textbf{i} - \textbf{b}_\textbf{i})\textbf{v}_\textbf{i}\right|\right|^2 \leq \frac{1}{m}\sum_{i=1}^m(\textbf{a}_\textbf{i} - \textbf{b}_\textbf{i})^2||\textbf{v}_\textbf{i}||^2$ and subsequently using the fact that $||\psi(\textbf{x}_\textbf{i})|| \leq 1$. Combining (\ref{eq:final}) and (\ref{eq:end}) gives us the result.
\end{proof}
  By definition we have that $(y_i - u(\langle \textbf{v}, \psi(\textbf{x}_\textbf{i})\rangle + \xi(\textbf{x}_\textbf{i})))\psi(\textbf{x}_\textbf{i})$ are zero mean iid random variables with norm bounded by $1$. Using Hoeffding's inequality (and the fact that that the $\textbf{x}_\textbf{i}$'s are independent draws), with probability $1 - \delta$ we have
\[
||\Delta|| = \left|\left|\frac{1}{m} \sum_{i=1}^m(y_i - u(\langle \textbf{v}, \psi(\textbf{x}_\textbf{i})\rangle + \xi(\textbf{x}_\textbf{i})))\psi(\textbf{x}_\textbf{i})\right|\right| \leq  \frac{1}{\sqrt{m}} \left( 1 + \sqrt{2 \log(1/\delta)}\right).
\]
Similarly, we can bound $\rho$ using Hoeffding's inequality to get,
\[
\rho \leq \epsilon + O\left(M^2 \sqrt{\frac{\log (1/\delta)}{m}}\right) \implies \sqrt{\rho} \leq \sqrt{\epsilon} +  O\left(M \sqrt[4]{\frac{\log (1/\delta)}{m}}\right)
\]
Now using the previous lemma with $\lambda = 1/L$ and $\eta =  \frac{1}{\sqrt{m}}  \left( 1 + \sqrt{2 \log(1/\delta)}\right)$, we have
\[
||\textbf{v}^\textbf{t} - \textbf{v}||^2 - ||\textbf{v}^\textbf{t+1} - \textbf{v}||^2 \geq \frac{1}{L}\left(\frac{\hat{\varepsilon}(h^t)}{L} - 2\sqrt{\rho} - 2B\eta - \frac{ \eta^2 }{L}- \frac{2\eta}{L}\right).
\]
Thus, for each iteration $t$ of Alphatron, one of the following two cases needs to be satisfied,\\\\
\textbf{Case 1}: $||\textbf{v}^\textbf{t} - \textbf{v}||^2 - ||\textbf{v}^\textbf{t+1} - \textbf{v}||^2 \geq \frac{B \eta}{L}$\\
\textbf{Case 2}: $\hat{\varepsilon}(h^t) \leq 3BL \eta + 2L\sqrt{\rho} + \eta^2 + 2\eta =  O\left(L\epsilon + LM \sqrt[4]{\frac{\log(1/\delta)}{m}} + BL\sqrt{\frac{\log(1/\delta)}{m}}\right)$.

Let $t^*$ be the first iteration where Case 2 holds. We need to show that such an iteration exists. Assume the contradictory, that is, Case 2 fails for each iteration. Since $||\textbf{v}^\textbf{0} - \textbf{v}||^2 \leq B^2$, however, in at most $\frac{BL}{\eta}$ iterations Case 1 will be violated and Case 2 will have to be true. If $\frac{BL}{\eta} \leq T$ then $t^*$ exists such that
\[
\hat{\varepsilon}(h^{t^*}) \leq O\left(L\sqrt{\epsilon} + L M \sqrt[4]{\frac{\log(1/\delta)}{m}} + BL\sqrt{\frac{\log(1/\delta)}{m}}\right).
\]
We need to bound $\varepsilon(h^{t^*})$ in terms of $\hat{\varepsilon}(h^{t^*})$. Define $\mathcal{F} = \{ \textbf{x} \rightarrow u(\langle \textbf{z}, \psi(\textbf{x}) \rangle) : ||\textbf{z}|| \leq 2B\}$, and $\mathcal{Z} = \{ \textbf{x} \rightarrow f(\textbf{x}) - u(\langle \textbf{v}, \psi(\textbf{x}) \rangle + \xi(\textbf{x})) : f \in \mathcal{F} \}$. Using Theorem \ref{rademachercomplexity} and \ref{rademachercomplexity2} we have $\mathcal{R}_m(\mathcal{F}) = O(BL \sqrt{1/m})$. By definition of Rademacher complexity, we have
\begin{align*}
\mathcal{R}_m(\mathcal{Z}) &= \mathbb{E}_{\textbf{x}_\textbf{i}, \sigma_i}\left[\sup_{z \in \mathcal{Z}} \left(\frac{2}{m} \sum_{i=1}^m \sigma_i z(\textbf{x}_\textbf{i})\right)\right]\\
& = \mathbb{E}_{\textbf{x}_\textbf{i}, \sigma_i}\left[\sup_{f \in \mathcal{F}} \left(\frac{2}{m} \sum_{i=1}^m \sigma_i (f(\textbf{x}_\textbf{i}) - u(\langle \textbf{v}, \psi(\textbf{x}_\textbf{i}) \rangle + \xi(\textbf{x}_\textbf{i})))\right)\right]\\
& = \mathbb{E}_{\textbf{x}_\textbf{i}, \sigma_i}\left[\sup_{f \in \mathcal{F}} \left(\frac{2}{m} \sum_{i=1}^m \sigma_i f(\textbf{x}_\textbf{i}) \right)\right] - \mathbb{E}_{\textbf{x}_\textbf{i}}\left[\frac{2}{m} \sum_{i=1}^m  \mathbb{E}_{\sigma_i}[\sigma_i] u(\langle \textbf{v}, \psi(\textbf{x}_\textbf{i}) \rangle + \xi(\textbf{x}_\textbf{i}))\right]\\
& = \mathcal{R}_m(\mathcal{F})
\end{align*}
Here, $\sigma_i \in \{\pm 1\}$ are iid Rademacher variables hence $\forall\ i, E[\sigma_i] = 0$ and $\textbf{x}_\textbf{i}$ are drawn iid from $\mathcal{D}$.

Recall that $h^{t^*}(\textbf{x}) = u(\langle \textbf{v}^\textbf{T}, \psi(\textbf{x}) \rangle)$ is an element of $\mathcal{F}$ as $||\textbf{v}^\textbf{T} - \textbf{v}||^2 \leq B^2$ (case 1 is satisfied in iteration $t^*-1$) and $||\textbf{v}|| \leq B$. A direct application of Theorem \ref{generalizationbound} on $\mathcal{Z}$ with loss function $\mathcal{L}(a, \cdot) = a^2$, gives us, with probability $1- \delta$,
\[
\varepsilon(h^{t^*}) \leq \hat{\varepsilon}(h^{t^*}) + O\left(BL \sqrt{\frac{1}{m}}
        +  \sqrt{\frac{\log (1/\delta)}{m}}\right) =  O\left(L\sqrt{\epsilon} + L M \sqrt[4]{\frac{\log(1/\delta)}{m}} + BL\sqrt{\frac{\log(1/\delta)}{m}}\right).
\]

The last step is to show that we can indeed find a hypothesis satisfying the above guarantee. Since for all $h$, $\varepsilon(h)$ is up to constants equal to $err(h)$ we can do so by choosing the hypothesis with the minimum $err$ using a fresh sample set of size $O(\log(T/\delta)/\epsilon_0^2)\leq N$. This holds as given the sample size, by Chernoff bound using the fact that $\hat{\varepsilon}(h^t)$ is bounded in $[0,1]$, each $h^t$ for $t \leq T$ will have empirical error within $\epsilon_0$ of the true error with probability $1 - \delta/T$ and hence all will simultaneously satisfy this with probability $1 - \delta$. Setting $\epsilon_0 = 1/\sqrt{m}$ will give us the required bound.

\subsection{Proof of Theorem 4}
We first extend the approximation guarantees to linear combinations of function classes using the following lemma.
\begin{lemma}\label{lem:sum}
If for all $i \in [k]$, $f_i$ is $(\epsilon,B)$-uniformly approximated in kernel $\mathcal{K}$ then $\sum_{i=1}^k \textbf{a}_i f_i(\textbf{x})$ for $\textbf{a} \in \mathbb{R}^k$ such that $||\textbf{a}||_1 \leq W$ is $(\epsilon W, WB)$-uniformly approximated in kernel $\mathcal{K}$.
\end{lemma}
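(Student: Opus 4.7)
The plan is to construct the approximating RKHS element for the linear combination directly from the approximating elements for each $f_i$, and verify the two conditions in the definition of $(\epsilon W, WB)$-uniform approximation. The argument will be a straightforward application of the triangle inequality in both the RKHS norm (for the norm bound) and in absolute value (for the pointwise approximation bound).

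More concretely, I would begin by invoking the hypothesis to obtain, for each $i \in [k]$, an element $\textbf{v}_i$ in the RKHS $\mathcal{H}$ associated with $\mathcal{K}$ such that $\|\textbf{v}_i\| \leq B$ and $|f_i(\textbf{x}) - \langle \textbf{v}_i, \psi(\textbf{x})\rangle| \leq \epsilon$ for every $\textbf{x} \in \mathcal{X}$. Then I would define the candidate approximator
\[
\textbf{v} := \sum_{i=1}^k a_i \textbf{v}_i \in \mathcal{H}.
\]

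For the norm condition, the triangle inequality in $\mathcal{H}$ together with $\|\textbf{v}_i\| \leq B$ gives
\[
\|\textbf{v}\| \leq \sum_{i=1}^k |a_i|\,\|\textbf{v}_i\| \leq B \sum_{i=1}^k |a_i| = B\|\textbf{a}\|_1 \leq WB,
\]
which matches the required bound. For the uniform approximation condition, linearity of the inner product yields $\langle \textbf{v}, \psi(\textbf{x})\rangle = \sum_i a_i \langle \textbf{v}_i, \psi(\textbf{x})\rangle$, so for every $\textbf{x} \in \mathcal{X}$,
\[
\left|\sum_{i=1}^k a_i f_i(\textbf{x}) - \langle \textbf{v}, \psi(\textbf{x})\rangle\right| \leq \sum_{i=1}^k |a_i|\,|f_i(\textbf{x}) - \langle \textbf{v}_i, \psi(\textbf{x})\rangle| \leq \epsilon \|\textbf{a}\|_1 \leq \epsilon W.
\]

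There is no real obstacle here; the lemma is essentially a bookkeeping fact about how uniform approximation parameters compose under $\ell_1$-bounded linear combinations, since both the RKHS norm and the uniform error are seminorms on the space of (function, approximator) pairs. The only care needed is to use $\|\textbf{a}\|_1$ (rather than, say, $\|\textbf{a}\|_2$) in both inequalities, which is the reason the two parameters both scale by the same factor $W$.
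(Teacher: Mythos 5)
Your proof is correct and is essentially identical to the paper's: both take $\textbf{v} = \sum_{i=1}^k a_i \textbf{v}_i$ and apply the triangle inequality once in $\mathcal{H}$ for the norm bound and once pointwise for the uniform error bound, with each factor $\|\textbf{a}\|_1 \leq W$. No gaps.
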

\begin{proof}
We have for each $i \in [k]$, $\forall \textbf{x} \in \mathcal{X},  |f_i(\textbf{x}) - \langle \textbf{v}_\textbf{i}, \psi(\textbf{x})\rangle| \leq \epsilon$ for some $\textbf{v}_\textbf{i} \in \mathcal{H}$ such that $||\textbf{v}_\textbf{i}|| \leq B$. Consider $\textbf{v} = \sum_{i=1}^k \textbf{a}_i \textbf{v}_\textbf{i}$. We have $\forall \textbf{x} \in \mathcal{X}$,
\[
\left|\sum_{i=1}^k a_i f_i(\textbf{x}) -  \langle \textbf{v}, \psi(\textbf{x})\rangle \right| = \left|\sum_{i=1}^k \textbf{a}_i (f_i(\textbf{x}) -  \langle \textbf{v}_\textbf{i}, \psi(\textbf{x})\rangle) \right| \leq \sum_{i=1}^k|a_i||f_i(\textbf{x}) -  \langle \textbf{v}_\textbf{i}, \psi(\textbf{x})\rangle| \leq \epsilon ||\textbf{a}||_1 = \epsilon W.
\]
Also $||\textbf{v}|| = \left|\left|\sum_{i=1}^k \textbf{a}_i \textbf{v}_\textbf{i}\right|\right| \leq \sum_{i=1}^k |\textbf{a}_i| ||\textbf{v}_\textbf{i}|| \leq WB$. Thus $\textbf{v}$ satisfies the required approximation. 
\end{proof}
Combining Lemmas \ref{lem:approxone} and \ref{lem:sum} we have that $\mathcal{N}_1$ for activation function $\sigma_{sig}$ is $(\epsilon_0\sqrt{k}, (\sqrt{k}/\epsilon_0)^C)$-uniformly approximated by some kernel $\mathcal{MK}_d$ with $d = O(\log(1/\epsilon_0))$ and sufficiently large constant $C>0$. Thus by Theorem \ref{thm:main}, we have that there exists an algorithm that outputs a hypothesis $h$ such that, with probability $1-\delta$,  
\[
\E_{\textbf{x}, y \sim \mathcal{D}} (h(\textbf{x}) - \mathcal{N}_2(\textbf{x}))^2 \leq C^\prime L\left(\epsilon_0\sqrt{k} + \left(\frac{\sqrt{k}}{\epsilon_0}\right)^{C} \cdot \sqrt{\frac{\log(1/\delta)}{m}}\right)
\]
for some constants $C^\prime > 0$. Setting $\epsilon_0= \frac{\epsilon}{2\sqrt{k}C^\prime L}$ and $m =  \left(\frac{2kCL}{\epsilon}\right)^{2C}\cdot\left(\frac{4\log(1/\delta)}{\epsilon^2}\right)$ gives us the required result (the claimed bounds on running time also follow directly from Theorem \ref{thm:main}).

\subsection{Proof of Theorem 6}
Let $\varepsilon(h) = \E_{x \in \{-1,1\}^n}[(u(P(x)) - u(h(x)))^2] = || u \circ P - u \circ h||_2^2$. Similar to Alphatron, we will show that with each iteration $t$ we will move closer to the true solution as long as $\varepsilon(P_t)$ is large.
\begin{lemma}
For a suitable choice of $\theta$, $||P_t - P||_2^2 - ||P_{t+1} - P||_2^2 \geq \frac{2 \lambda}{L}(\varepsilon(P_t) - L \lambda)$.
\end{lemma}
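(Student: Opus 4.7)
The plan is to mirror the analysis of Lemma~\ref{lem:alphatron} in function space under the uniform distribution on $\{-1,1\}^n$, with the KM estimate playing the role of the exact gradient direction. I would introduce the ``idealized'' iterate $\hat{P}_{t+1} := P_t + \lambda(u\circ P - u\circ P_t)$, analyze $||\hat{P}_{t+1}-P||_2^2$ cleanly, and then bound the $L_2$ slack between $\hat{P}_{t+1}$ and the algorithm's actual $P_{t+1}$ via the three approximation lemmas (\ref{lem:km},~\ref{lem:proj1},~\ref{lem:proj2}) together with Lemma~\ref{lem:kmsparse}. The three sources of error are (i) replacing $u\circ P - u\circ P_t$ by its KM estimate inside the linear update, (ii) the projection $\proj$, and (iii) the final re-sparsification $\mathsf{KM}(\proj(P_{t+1}'),\theta)$.

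For the idealized step, expand the square and write $\langle f,g\rangle := \E_{x}[f(x)g(x)]$ for the uniform inner product. Monotonicity and $L$-Lipschitzness of $u$ give the pointwise bound $(u(a)-u(b))(a-b)\ge \frac{1}{L}(u(a)-u(b))^2$, so integrating yields $\langle u\circ P - u\circ P_t,\, P - P_t\rangle\ge \frac{1}{L}\varepsilon(P_t)$. Combined with $||u\circ P - u\circ P_t||_2^2 = \varepsilon(P_t)\le 1$, this produces
\[
||\hat{P}_{t+1}-P||_2^2 \le ||P_t-P||_2^2 - \frac{2\lambda}{L}\varepsilon(P_t) + \lambda^2.
\]

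To pass from $\hat{P}_{t+1}$ to $P_{t+1}$: Lemma~\ref{lem:km} gives $L_\infty(P_{t+1}'-\hat{P}_{t+1}) \le \lambda\theta$, Lemma~\ref{lem:proj2} then gives $||\proj(P_{t+1}')-\proj(\hat{P}_{t+1})||_2 \le 2\sqrt{\lambda\theta k}$, and since $\proj(P_{t+1}')\in K$ has $L_1\le k$, the final $\mathsf{KM}$ call adds only $O(\sqrt{k\theta})$ in $L_2$ by Lemma~\ref{lem:kmsparse}. Because $P\in K$, projection is non-expansive at $P$, so $||\proj(\hat{P}_{t+1})-P||_2\le ||\hat{P}_{t+1}-P||_2$. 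Applying the triangle inequality and using the a priori bound $||\hat{P}_{t+1}-P||_2=O(k)$ (from $||f||_2\le L_1(f)$, which gives $||P||_2\le k$ and $||\proj(P_{t-1}')||_2\le k$, hence $||P_t||_2=O(k)$, together with $||u\circ P - u\circ P_t||_2\le 1$ and $\lambda\le 1$), I get
\[
||P_{t+1}-P||_2^2 \le ||\hat{P}_{t+1}-P||_2^2 + O(k^{3/2}\sqrt{\theta}).
\]

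Chaining the two displays, $||P_t-P||_2^2-||P_{t+1}-P||_2^2 \ge \frac{2\lambda}{L}\varepsilon(P_t) - \lambda^2 - O(k^{3/2}\sqrt{\theta})$. Choosing $\theta = O(\lambda^4/k^3)$ so that $O(k^{3/2}\sqrt{\theta})\le \lambda^2$ yields the claim; this matches the theorem's hypothesis $\theta\le C'\epsilon^4/(L^4 k^3)$ after substituting $\lambda=\epsilon/(2L)$. The main obstacle is threading the three approximation steps so that their cumulative $L_2$ error stays inside the single $O(\lambda^2)$ slack, which in turn hinges on the uniform-in-$t$ a priori bound $||\hat{P}_{t+1}-P||_2=O(k)$; this is precisely where the $\proj$ step in KMtron becomes essential, since without it one loses control over $||P_t||_2$ across iterations.
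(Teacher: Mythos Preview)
Your proposal is correct and follows essentially the same approach as the paper. The paper defines $Q_{t+1}' = P_t + \lambda(u\circ P - u\circ P_t)$ (your $\hat{P}_{t+1}$) and $Q_{t+1}=\proj(Q_{t+1}')$, bounds $\|P_{t+1}-Q_{t+1}\|_2\le O(\sqrt{\theta k})$ via Lemmas~\ref{lem:km}, \ref{lem:proj2}, \ref{lem:kmsparse}, uses $\|Q_{t+1}-P\|_2\le 2k$ for the cross term, applies projection non-expansiveness $\|Q_{t+1}-P\|_2\le\|Q_{t+1}'-P\|_2$, and then expands the square exactly as you do; the only cosmetic difference is that the paper bounds the cross term using $\|Q_{t+1}-P\|_2$ rather than $\|\hat P_{t+1}-P\|_2$, which amounts to the same $O(k)$ estimate.
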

\begin{proof}
Let us define the following polynomials for $t \leq T$, $Q_t^\prime = P_{t-1} + \lambda (u \circ P - u \circ P_{t-1})\text{ and }
Q_t = \proj(Q_t^\prime)$. For all $t \leq T$,
\begin{align*}
P_t^\prime - Q_t^\prime &= (P_{t-1} + \lambda \mathsf{KM}(u \circ P - u \circ P_{t-1}, \theta)) - (P_{t-1} + \lambda(u \circ P - u \circ P_{t-1}))\\
&= \lambda(\mathsf{KM}(u \circ P - u \circ P_{t-1}, \theta) - (u \circ P - u \circ P_{t-1})). 
\end{align*}
From Lemma \ref{lem:km}, $L_\infty(\mathsf{KM}(u \circ P - u \circ P_{t-1}, \theta) - (u \circ P - u \circ P_{t-1})) \leq \theta$ implying $L_\infty(P_t^\prime - Q_t^\prime) \leq \lambda \theta \leq \theta$ since $\lambda \leq 1$.

Using Lemma \ref{lem:proj2}, we have $||\proj(P_t^\prime) - \proj(Q_t^\prime)||_2 \leq 2\sqrt{\theta k}$. Since $P_t = \mathsf{KM}(\proj(P_t^\prime), \theta)$ and $L_1(\proj(P_t^\prime)) \leq k$, using Lemma \ref{lem:kmsparse}, $||P_t - \proj(P_t^\prime)||_2 \leq \sqrt{2 \theta k}$. Using Triangle inequality, we get,
\[
||P_t - Q_t||_2 \leq ||P_t - \proj(P_t^\prime)||_2 + ||\proj(P_t^\prime) - \proj(Q_t^\prime)||_2 < 4 \sqrt{\theta k}.
\]
Observe that $||Q_t - P||_2 = L_2(Q_t - P) \leq L_1(Q_t - P) \leq L_1(Q_t) + L_1(P) \leq 2 k$. Combining these two observations, we have
\[
||P_t - P||_2^2 \leq (||P_t - Q_t||_2 + ||Q_t - P||_2)^2 \leq ||Q_t - P||^2 + 16 k \sqrt{\theta k} + 16 \theta k \leq ||Q_t - P||^2 + C k \sqrt{\theta k} \label{eq:2}
\]
for large enough constant $C>0$. Therefore, 
\begin{align}
||P_t - P||_2^2 - ||P_{t+1} - P||_2^2 &\geq ||P_t - P||_2^2  - ||Q_{t+1} - P||^2 - C k \sqrt{\theta k} \label{eq:tri}\\
& \geq ||P_t - P||_2^2  - ||Q_{t+1}^\prime - P||^2 - C k \sqrt{\theta k} \label{eq:proj}\\
& = ||P_t - P||_2^2 - ||P_t - P + \lambda (u \circ P - u \circ P_t)||_2^2 - C k \sqrt{\theta k} \label{eq:sub}\\
& = -2 \lambda (u \circ P - u \circ P_t) \cdot (P_t - P) - \lambda^2 ||u \circ P - u \circ P_t||_2^2 - C k \sqrt{\theta k} \label{eq:u}\\
& \geq \frac{2 \lambda}{L} \cdot \varepsilon(P_t) - \lambda^2 - C k \sqrt{\theta k} \nonumber.
\end{align}
Here, (\ref{eq:tri}) follows from the triangle inequality and (\ref{eq:2}), (\ref{eq:proj}) follows from projecting to a convex set reducing the distance to points in the convex set, (\ref{eq:u}) follows from $u$ being monotone, $L$-Lipschitz with output bounded in $[0,1]$. Setting $\theta$ such that $ C k \sqrt{\theta k} \leq \lambda^2$ gives the required result.
\end{proof}
As long as $\varepsilon(P_t) \geq 2 L\lambda$, we have $||P_t - P||_2^2 - ||P_{t+1} - P||_2^2 \geq 2 \lambda^2$. Since $||P_0 - P||_2^2 = ||P||_2^2 \leq L_1(P)^2 \leq k^2$, after $T = \frac{k^2}{2 \lambda^2}$, there must be some $r \leq T$ such that $||P_r - P||_2^2 \geq 2 \lambda^2$ does not hold, at this iteration, $\varepsilon(P_t) \leq 2 L \lambda = \epsilon$ for $\lambda = \frac{\epsilon}{2L}$. The last step of choosing the best $P_t$ would give us the required hypothesis (similar to Alphatron). Observe that each iteration of KMtron runs in time $\mathsf{poly}(n, k, L, 1/\epsilon)$ (Lemma \ref{lem:km}) and KMtron is run for $\mathsf{poly}(k, L, 1/\epsilon)$ iterations giving us the required runtime.

\subsection{Proof of Corollary 2}
Let $\{T_i\}_{i=1}^s$ be the ANDs corresponding to each term of the DNF. Let $T = \sum_{i=1}^s T_i$. By definition of $f$, if $f(x) = 1$ then $T(x) \geq 1$ and if $f(x) = 0$ then $T(x) = 0$. Observe that $L_1(T) \leq \sum_{i=1}^s L_1(T_i) \leq s$ using the well known fact that AND has $L_1$ bounded by $1$.

Consider the following $u$,
\[
u(a) = \begin{cases} 
      0  & a \leq 0 \\
      a & 0 < a < 1\\
      1 & a \geq 1
       \end{cases}
\]
Observe that $u$ is $1$-Lipschitz. It is easy to see that $f(x) = u(T(x))$ on $\{-1,1\}^n$. Hence, given query access to $f$ is the same as query access to $u \circ T$ over $\{-1,1\}^n$.

Since $L_1(T)$ is bounded, we can apply Theorem \ref{thm:kmtron} for the given $u$. We get that in time $\mathsf{poly}(n, s, 1/\epsilon)$, KMtron outputs a polynomial $P$ such that $\E_{x \in \{-1,1\}^n }[(u(T(x)) - u(P(x)))^2] \leq \epsilon$. Recall that $u \circ P$ may be real-valued as KMtron learns with square loss. Let us define $h(\textbf{x})$ to equal 1 if $u(P((\textbf{x})) \geq 1/2$ and 0 otherwise. We will show that $\mathbb{E}_{\textbf{x}}[h(\textbf{x}) \neq f(\textbf{x})] \leq O(\epsilon)$. Using Markov's inequality, we have $Pr_{\textbf{x}}[(u(T(\textbf{x})) - u(P(\textbf{x})))^2 \geq 1/4] \leq 4\epsilon$. For $\textbf{x}$, suppose $(u(T(\textbf{x})) - u(P(\textbf{x})))^2 < 1/4 \implies |u(T(\textbf{x}) - u(P(\textbf{x}))| < 1/2$. Since $u(T(\textbf{x})) = f(\textbf{x}) \in \{0,1\}$ then clearly $h(\textbf{x}) = f(\textbf{x})$. Thus, $Pr_{\textbf{x}}[h(\textbf{x}) \neq f(\textbf{x})] \leq 4 \epsilon$. Scaling $\epsilon$ appropriately, we obtain the required result.

\subsection{Proof of Theorem 7}
We use Lemma \ref{lem:approxsign} to show the existence of polynomial $p$ of degree $d = O\left( \frac{1}{\rho} \cdot \log \left(\frac{1}{\epsilon_0}\right) \right)$ such that for $a \in [−1, 1]$, $|p(a)| < 1 + \epsilon_0$ and for $a \in [−1, 1] \backslash [-\rho, \rho]$, $|p(a) − \mathsf{sign}(a)| < \epsilon_0$.

Since for each $i$, $\rho \leq |\textbf{w}_\textbf{i} \cdot \textbf{x}| \leq 1$, we have $|p(\textbf{w}_\textbf{i} \cdot \textbf{x}) − \mathsf{sign}(\textbf{w}_\textbf{i} \cdot \textbf{x})| \leq \epsilon_0 $ such that $p$ is bounded in $[-1,1]$ by $1 + \epsilon_0$. From Lemma \ref{lem:dbound} and \ref{lem:Bbound}, we have that for each $i$, $p(\textbf{w}_\textbf{i} \cdot \textbf{x}) = \langle \textbf{v}_\textbf{i}, \psi_d(\textbf{x}) \rangle$ such that $|| \textbf{v}_\textbf{i} || = \left(\frac{1}{\epsilon_0}\right)^{O(1/\rho)}$ where $\psi_d$ is the feature vector corresponding to the multinomial kernel of degree $d$. Using Lemma \ref{lem:sum}, we have that $\sum_{i=1}^t \textbf{a}_i h_i(\textbf{x})$ is $\left(\epsilon_0 A, A\left(\frac{1}{\epsilon_0}\right)^{O(1/\rho)}\right)$-uniformly approximated by $\mathcal{MK}_d$.


Subsequently, applying Theorem \ref{thm:main}, we get that there exists an algorithm that outputs a hypothesis $h$ such that with probability $1-\delta$,
\[
\varepsilon(h) \leq CLA\left(\epsilon_0 +  \left(\frac{1}{\epsilon_0}\right)^{C^\prime/\rho} \cdot \sqrt{\frac{\log(1/\delta)}{m}}\right)
\]
for some constants $C,C^\prime > 0$. Setting $\epsilon_0=\frac{\epsilon}{2CLA}$ and $m =  \left(\frac{2CLA}{\epsilon}\right)^{2C^\prime/\rho}\cdot\left(\frac{4\log(1/\delta)}{\epsilon^2}\right)$ to gives us the required result.

\subsection{Proof of Corollary 3}
Let $T(\textbf{x}) = \frac{1}{t}\sum_{i=1}^t \textbf{a}_i h_i(\textbf{x})$. Consider the following $u$,
\[
u(a) = \begin{cases} 
      0  & a \leq 1 - \frac{1}{t} \\
      a & 1 - \frac{1}{t} < a < 1\\
      1 & a \geq 1
       \end{cases}
\]
Observe that $u$ is $1/t$-Lipschitz. It is easy to see that $f_{\mathsf{AND}}(\textbf{x}) = u(T(\textbf{x})$. Using Theorem \ref{thm:fhs} for $L=1$ and $A = 1$, we know that there exists an algorithm that runs in time $\mathsf{poly}\left(n, \left(\frac{t}{\epsilon}\right)^{(C/\rho)}, \log(1/\delta) \right)$ and outputs a hypothesis $h$ such that $\E_{\textbf{x}, y \sim \mathcal{D}} \left[\left(h(\textbf{x}) - u\left(\sum_{i=1}^t \textbf{a}_i h_i(\textbf{x})\right)\right)^2 \right]\leq \epsilon$. Since $h$ is a real-valued function, we can use $\textsf{sign}(h)$ to get a 0-1 loss bound as in the proof of Corollary \ref{cor:dnf} to get the required result.
\subsection{Proof of Theorem 8}
Consider polynomial $P(\textbf{x}) = \sum_{S: |S| \leq d} \hat{f}(S) \chi_S(\textbf{x})$. We have,
\[
\E_{\mathcal{D}}[(f(\textbf{x}) - P(\textbf{x}))^2] = \E_{\mathcal{D}}\left[\left(\sum_{S:|S| > d} \hat{f}(S)\chi_S(\textbf{x})\right)^2\right] = \sum_{S:|S| > d} \hat{f}(S)^2 \leq \epsilon^2.
\]
We also know that $\hat{f}(S) \leq M$ (since $|f(\textbf{x})| \leq M$) for all $S \subseteq [n]$, thus $|f(\textbf{x}) - P(\textbf{x})| \leq |f(\textbf{x})| + |P(\textbf{x})| = O(n^dM)$ for all $\textbf{x} \in \{-1,1\}^n$. Also observe that
\[
\sum_{S: |S| \leq d} \hat{f}(S)^2 \leq \sum_{S} \hat{f}(S)^2 = \frac{1}{2^n}\sum_{\textbf{x} \in \{-1,1\}^n}f(\textbf{x}) \leq M
\]
where the equality follows from Parseval's Theorem. This implies that $f$ is $(\epsilon, \sqrt{M}, n^dM)$-approximated by kernel $\mathcal{K}$ and RKHS $\mathcal{H}$ with feature map $\psi$ of all monomials of degree $\leq d$. This kernel takes $O(n^d)$ time to compute. Now, we apply Theorem \ref{thm:main1} after renormalizing the kernel to get the required result.

\subsection{Proof of Lemma 5}
For all $S \subseteq [n]$, we have $\hat{f}(S) = \sum_{i=1}^k a_i \hat{f}_i(S)$. Let $\epsilon$ and $d$ be as in the lemma, we have,
\begin{align*}
\sum_{S:|S| > d} \hat{f}(S)^2 &= \sum_{S:|S| > d} \left(\sum_{i=1}^k a_i \hat{f}_i(S)\right)^2 \\
&\leq \sum_{S:|S| > d}\left(\sum_{i=1}^k a_i^2\right)\left(\sum_{j=1}^k\hat{f}_j(S)^2\right) \\
&\leq  \left(\sum_{i=1}^k a_i^2\right) \left(\sum_{i=1}^k\sum_{S:|S| > d_j} \hat{f}_j(S)^2\right) \\
&= \left(\sum_{i=1}^k a_i^2\right) \left(\sum_{i=1}^k \epsilon_i^2\right) = \epsilon^2.
\end{align*}
Here the first inequality follows from Cauchy-Schwarz, the second follows from rearranging the sum and using the fact that $\{S:|S| > d\} \subseteq \{S:|S| > d_j\}$ and the third follows from the Fourier concentration of each $f_i$.

\subsection{Proof of Theorem 9}
The following lemma is useful for our analysis.
\begin{lemma}\label{lem:milapprox}
Let $c$ be the instance labeling function mapping $\mathcal{X}$ to $\mathbb{R}$ such that  $c$ is $(\epsilon, B)$-uniformly approximated by kernel $\mathcal{K}$ and feature vector $\psi$. Then the function $f: \Beta \rightarrow \mathbb{R}$ given by $f(\beta) = \frac{1}{|\beta|} \cdot \sum_{\textbf{x} \in \beta} c(\textbf{x})$ is $(\epsilon, B)$-uniformly approximated by the mean map kernel of $\mathcal{K}$.
\end{lemma}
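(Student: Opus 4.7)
The plan is to use the same witness $\mathbf{v}$ in the RKHS that approximates $c$ to serve as the witness for approximating the bag-averaged function $f$ in the mean map kernel's RKHS. By hypothesis, there exists $\mathbf{v} \in \mathcal{H}$ with $\|\mathbf{v}\| \leq B$ such that $|c(\mathbf{x}) - \langle \mathbf{v}, \psi(\mathbf{x})\rangle| \leq \epsilon$ for every $\mathbf{x} \in \mathcal{X}$. I will show that this same $\mathbf{v}$, viewed as an element of $\mathcal{H}_{\mathsf{mean}}$, uniformly approximates $f$ under the feature map $\psi_{\mathsf{mean}}$.

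The key computation is to unfold the inner product against the mean-map feature vector by linearity:
\[
\langle \mathbf{v}, \psi_{\mathsf{mean}}(\beta) \rangle = \Bigl\langle \mathbf{v}, \frac{1}{|\beta|} \sum_{\mathbf{x} \in \beta} \psi(\mathbf{x}) \Bigr\rangle = \frac{1}{|\beta|} \sum_{\mathbf{x} \in \beta} \langle \mathbf{v}, \psi(\mathbf{x}) \rangle.
\]
Then subtract this from $f(\beta) = \frac{1}{|\beta|}\sum_{\mathbf{x}\in\beta} c(\mathbf{x})$ and apply the triangle inequality term by term:
\[
|f(\beta) - \langle \mathbf{v}, \psi_{\mathsf{mean}}(\beta)\rangle| \;\leq\; \frac{1}{|\beta|} \sum_{\mathbf{x} \in \beta} |c(\mathbf{x}) - \langle \mathbf{v}, \psi(\mathbf{x})\rangle| \;\leq\; \epsilon,
\]
uniformly over all bags $\beta \in \Beta$. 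The norm bound is immediate since the witness in $\mathcal{H}_{\mathsf{mean}}$ is literally the same vector $\mathbf{v}$, so $\|\mathbf{v}\| \leq B$ carries over unchanged.

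There is essentially no obstacle here: the result follows because the mean map kernel is defined precisely so that its feature map is the average of the base feature maps, and averaging commutes with the inner product in the RKHS. The only subtlety worth noting in the writeup is to emphasize that the bag size $|\beta|$ does not appear in either the approximation error $\epsilon$ or the norm bound $B$, which is the crucial feature that will later yield the bag-size-independent sample complexity bounds in Theorem \ref{thm:pmil} when combined with Theorem \ref{thm:main}.
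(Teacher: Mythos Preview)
Your proposal is correct and follows exactly the same approach as the paper: use the same witness $\textbf{v}$, expand the mean-map inner product by linearity, and apply the triangle inequality termwise to get the uniform $\epsilon$ bound with the unchanged norm bound $B$.
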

\begin{proof}
We have that $\forall \textbf{x} \in \mathcal{X}, |c(\textbf{x}) - \langle v, \psi(\textbf{x}) \rangle| \leq \epsilon$ for $v$ such that $||\textbf{v}|| \leq B$. Let $\mathcal{K}_{\textsf{mean}}$ be the mean map kernel of $\mathcal{K}$ and $\psi_{\textsf{mean}}$ be the corresponding vector. We will show that $\textbf{v}$ $(\epsilon, B)$-approximates $f$ in $\mathcal{K}_{\textsf{mean}}$. This follows from the following,
\begin{align*}
\left| f(\beta) - \langle \textbf{v}, \psi_{\textsf{mean}}(\beta) \rangle \right| = &\left| \frac{1}{|\beta|} \cdot \sum_{\textbf{x} \in \beta} c(\textbf{x}) - \frac{1}{|\beta|} \cdot \sum_{\textbf{x} \in \beta} \langle \textbf{v}, \psi(\textbf{x}) \rangle \right| \\
& \leq \frac{1}{|\beta|} \cdot \sum_{\textbf{x} \in \beta} |c(\textbf{x}) - \langle \textbf{v}, \psi(\textbf{x}) \rangle| \leq \epsilon.
\end{align*}
\end{proof}
Consider $c \in \mathcal{C}$ that is $(\epsilon/CL, B)$-uniformly approximated by some kernel $\mathcal{K}$ for large enough constant $C>0$ (to be chosen later). Using Lemma \ref{lem:milapprox} we know that $f(\beta) = \frac{1}{|\beta|} \cdot \sum_{\textbf{x} \in \beta} c(\textbf{x})$ is $(\epsilon/CL, B)$-uniformly approximated by the mean map kernel $\mathcal{K}_{\textsf{mean}}$ of $\mathcal{K}$. Applying Theorem \ref{thm:main}, we get that with probability $1 - \delta$, 
\[
\E_{\beta\sim \mathcal{D}}\left[\left(h(\beta) - u\left(\frac{1}{|\beta|} \cdot \sum_{\textbf{x} \in \beta} c(\textbf{x})\right) \right)^2\right] \leq \frac{C^\prime}{C}\epsilon.
\]
for sufficiently large constant $C^\prime>0$. Choosing $C \geq C^\prime$ gives the result.

\end{document}